\def\ifextended#1#2{#1}
\newcommand{\system}[0]{\textsc{Spider}\xspace}
\newcommand{\channel}[0]{IRMC\xspace}
\newcommand{\channela}[0]{\mbox{\channel-RC}\xspace}
\newcommand{\channelb}[0]{\mbox{\channel-SC}\xspace}
\newcommand{\msg}[2]{\ensuremath{\langle \textsc{#1}, #2 \rangle}\xspace}
\newcommand{\smsg}[3]{\ensuremath{\langle \textsc{#1}, #2 \rangle_{#3}}\xspace}
\newcommand{\bft}[0]{BFT\xspace}
\newcommand{\hft}[0]{HFT\xspace}
\newcommand{\bftwv}[0]{BFT-WV\xspace}
\newcommand{\systemze}[0]{\mbox{\system-0E}\xspace}
\newcommand{\systemoe}[0]{\mbox{\system-1E}\xspace}
\newcommand{\headline}[1]{\vspace{1mm plus .5mm minus .5mm}\noindent\textbf{\textit{#1.}}~}
\font\lsttt=rm-lmtl10 scaled 820
\font\lstbtt=rm-lmtk10 scaled 820
\newcommand{\commentsize}{\fontsize{8pt}{0pt}\selectfont}
\newcommand{\pseudocode}{\commentsize\normalfont}
\def\ifextended#1#2{\IfSubStringInString{\detokenize{_extended}}{\jobname}{#1}{#2}}
\renewcommand\footnotetextcopyrightpermission[1]{%
	\begingroup%
	\footnotetextaltcopyright{%
\parindent\z@\parskip0.1\baselineskip%
\vskip 1.5mm%
\setlength{\fboxsep}{1em}%
\newlength{\copylen}\setlength{\copylen}{\columnwidth}\addtolength{\copylen}{-2\fboxsep}%
\colorbox{yellow!70!black!20}{%
	\parbox{\copylen}{%
This is an extended version of the article
"Michael Eischer and Tobias Distler. 2020. Resilient Cloud-based Replication with Low Latency. In
\emph{\@acmBooktitle}\ifx\@acmDOI\@empty.\else, \@formatdoi{\@acmDOI}.\fi%
"
}}
\vspace{3.5mm}
}%
\endgroup%
}
\begin{document}

\title[Resilient Cloud-based Replication with Low Latency]{Resilient\hspace{2.5mm}Cloud-based\hspace{2.5mm}Replication\hspace{2.5mm}with\hspace{2.5mm}Low\hspace{2.5mm}Latency}
\ifextended{\subtitle{\vspace*{3mm}(Extended Version)}}{}


\author{Michael Eischer and Tobias Distler}
\affiliation{%
	\institution{Friedrich-Alexander University Erlangen-Nürnberg (FAU)}
	\city{Erlangen}
	\state{Germany\ifextended{\vspace*{1cm}}{}}
}


\begin{CCSXML}
	<ccs2012>
	<concept>
	<concept_id>10010520.10010521.10010537.10003100</concept_id>
	<concept_desc>Computer systems organization~Cloud computing</concept_desc>
	<concept_significance>500</concept_significance>
	</concept>
	<concept>
	<concept_id>10010520.10010575.10010577</concept_id>
	<concept_desc>Computer systems organization~Reliability</concept_desc>
	<concept_significance>500</concept_significance>
	</concept>
	<concept>
	<concept_id>10010520.10010575</concept_id>
	<concept_desc>Computer systems organization~Dependable and fault-tolerant systems and networks</concept_desc>
	<concept_significance>500</concept_significance>
	</concept>
	</ccs2012>
\end{CCSXML}

\ccsdesc[500]{Computer systems organization~Dependable and fault-tolerant systems and networks}

\keywords{Byzantine fault tolerance, geo-replication} 


\ifextended{
	\fancyhead[LE]{}%
	\fancyhead[RO]{}%
}{}


\begin{abstract}

Existing approaches to tolerate Byzantine faults in geo\hyp{}replicated environments require systems to execute complex agreement protocols over wide-area links and consequently are often associated with high response times. In this paper we address this problem with \system, a resilient replication architecture for geo-distributed systems that leverages the availability characteristics of today's public-cloud infrastructures to minimize complexity and reduce latency. \system models a system as a collection of loosely coupled replica groups whose members are hosted in different cloud-provided fault domains~(i.e.,~availability zones) of the same geographic region. This structural organization makes it possible to achieve low response times by placing replica groups in close proximity to clients while still enabling the replicas of a group to interact over short-distance links. To handle the inter-group communication necessary for strong consistency \system uses a reliable group-to-group message channel with first-in-first-out semantics and built-in flow control that significantly simplifies system design.

\end{abstract}

\maketitle

\section{Introduction}

Byzantine fault-tolerant~(BFT) protocols enable a system to withstand arbitrary faults and consequently have been used to increase the resilience of a wide spectrum of \ifextended{}{\pagebreak} critical applications such as key-value stores~\mbox{\cite{padilha13augustus,padilha16callinicos,li16sarek,eischer19deterministic}}, SCADA systems~\cite{nogueira18challenges,babay18network,babay19deploying}, firewalls~\cite{bessani08crutial,garcia16sieveq}, coordination services~\cite{clement09upright,kapitza12cheapbft,behl15consensus,distler16resource,eischer19scalable}, and permissioned blockchains~\cite{sousa18byzantine,gueta19sbft}. To provide their high degree of fault tolerance, BFT protocols replicate the state of an application across a set of servers and rely on a leader-based consensus algorithm to keep these replicas consistent. This task requires several subprotocols~(e.g.,~for leader election, checkpointing, state transfer) and multiple phases of message exchange between replicas~\cite{castro99practical}.

Unfortunately, this complexity makes it inherently difficult to achieve low latency in use cases in which the clients of an application are scattered across various geo\-graphic locations. For example, placing replicas in close proximity to each other may reduce the latency of strongly consistent requests whose execution must be coordinated by the consensus protocol between replicas. However, with replicas being located farther apart from clients this strategy also increases the response times of requests such as weakly consistent reads that do not need to be agreed on and only involve direct interaction between clients and replicas. In contrast, co-locating replicas with clients has the inverse effect of speeding up client--replica communication but adding a significant performance overhead to the agreement protocol.

Existing approaches for BFT wide-area replication aim at minimizing this overhead by (1)~applying weighted-voting schemes to reduce the quorum sizes needed to complete consensus~\cite{sousa15separating,berger19resilient}, (2)~rotating the leader role among replicas to shorten the path necessary to insert a request into the agreement protocol~\mbox{\cite{veronese09spin,mao09towards,veronese10ebawa}}, or (3)~relying on a two-level system design that deploys an entire BFT replica cluster at each client site in order to be able to use crash-tolerant replication between sites~\cite{amir10steward,amir07customizable}. In all these cases, BFT systems still need to run complex consensus-based replication protocols over wide-area links which not only results in response-time overhead but also makes it difficult to dynamically introduce new \linebreak replica sites, for example, to serve clients at new locations.

In this paper we address these problems with \system, a cloud-based BFT system architecture for geo-replicated services that models a system as a collection of loosely coupled replica groups that are deployed in different regions. Separating agreement from execution~\cite{yin03separating}, one of the groups (``\emph{agreement group}'') establishes an order on all requests with strong consistency demands while all other groups~(``\emph{execution groups}'') are responsible for communicating with clients and processing requests. In contrast to existing approaches, \system does not require complex wide-area protocols but instead handles tasks such as consensus, leader election, and checkpointing within a group and over short-distance links. To make this possible while still offering resilience against replica failures, \system leverages the design of today's cloud infrastructures~\cite{ec2-regions,azure-regions,gce-regions} and places the replicas of a group in different availability zones of the same region; availability zones are hosted by data centers at distinct sites and specifically engineered to represent different fault domains.

In particular we make four contributions in this paper: (1)~We present the \system architecture and discuss how it achieves low latency for weakly consistent reads by placing execution groups close to clients, while at the same time minimizing agreement response times for strongly consistent reads and writes. (2)~We show how to design \system in a modular way so that execution groups do not depend on internals of the agreement group~(e.g.,~a specific consensus protocol). As an additional benefit, the modularity also makes it straightforward to add/remove execution groups at runtime. (3)~We introduce a wide-area BFT flow-control mechanism that exploits the special characteristics of \system to minimize complexity. Our approach is based on a simple message-channel abstraction that handles the inter-regional communication between two replica groups and prevents one group \linebreak from overwhelming the other. (4)~We evaluate \system in comparison to the state of the art in BFT wide-area replication.


\section{Background and Problem Statement}
\label{sec:background}

In this section, we present background on existing approaches and common requirements of BFT wide-area replication.

\subsection{System Model}

Our work focuses on stateful applications with strong reliability requirements whose clients are scattered across different geographic locations. To access the application a client submits a request to the server side. We assume that both clients and servers can be subject to Byzantine faults. As a consequence, nodes~(i.e.,~clients and servers) do not trust each other and do not make irreversible decisions based on the input provided by another node alone. For example, to tolerate up to $f$~faulty servers, a client only accepts a result after it has \linebreak obtained at least $f+1$~matching replies from different servers.

Besides service availability and correctness in the presence of failures, low latency is a primary concern in our target systems. Achieving this goal while keeping the states of servers consistent is inherently difficult in use cases in which clients are geographically dispersed. The problem is further complicated by the fact that we assume that the locations from which clients access the application may change over time, typically as a result of the global day/night cycle. To continuously provide low latency under such conditions, a system must offer some kind of reconfiguration mechanism enabling an adaptation to varying workloads. One possibility to achieve this, for example, is to dynamically include additional \linebreak servers that are located closer to newly started clients.

\subsection{Existing Approaches}
\label{sec:background-approaches}

In the following, we elaborate on the problems associated with Byzantine fault tolerance in geo-distributed systems and discuss existing approaches to solve them.


\begin{figure}
	\vspace{-3mm}
 	\subfloat[PBFT~\cite{castro99practical}\label{fig:pbft}] {
		\includegraphics{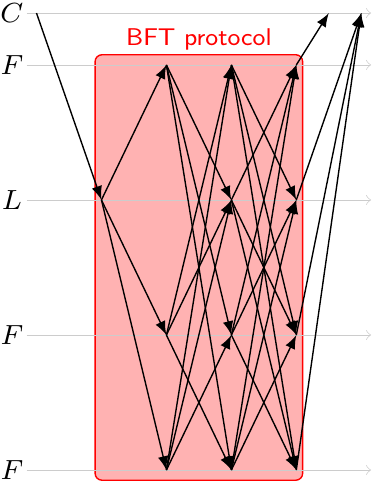}
	}
	\hfill
	\subfloat[Steward~\cite{amir10steward}\label{fig:steward}] {
		\includegraphics{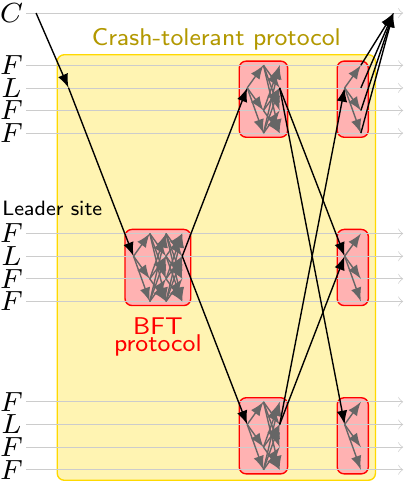}
	}
	\caption{System architectures for BFT geo-replication connecting a client~(C) with leader~(L) and follower~(F) replicas.}
\end{figure}

\headline{BFT in Wide-Area Environments}
%
The straightforward approach to offer resilience against arbitrary failures is to rely on a BFT replication protocol, for example PBFT~\cite{castro99practical}. As illustrated in Figure~\ref{fig:pbft}, PBFT requires at least $3f+1$~replicas to tolerate $f$~failures. To keep the application state consistent across replicas, PBFT ensures that replicas run an agreement protocol to decide in which order to process client requests. For this purpose, PBFT elects one of the replicas as leader~(marked $L$ in Figure~\ref{fig:pbft}) while all other replicas assume the roles of followers~($F$). Having received a new request, the leader is responsible for initiating the agreement process, which then involves multiple message exchanges between replicas. To deal with scenarios where a faulty leader does not behave according to specification, for example by ignoring a request, PBFT provides a mechanism that enables followers to depose the leader and appoint a new one. Once the agreement process is complete, all non-faulty replicas execute the request and send the result to the client, thereby enabling the client to validate the result by comparison.

Using BFT protocols such as PBFT to build resilient systems is effective but has several disadvantages in the context of geo-replication: (1)~With replicas being distributed across different geographic sites, the entire BFT~protocol needs to be executed over wide-area links, which often results in high response times. Note that this is not only true with regard to the task of agreeing on requests during normal operation, but for example also for electing a new leader as part of fault handling. (2)~Due to the fact that all requests must flow through the leader, the geographic location of the leader, and in particular its position relative to the majority of followers, usually has a significant influence on latency~\mbox{\cite{sousa15separating,eischer18latency}}. Consequently, a leader switch may decisively change a system's performance characteristics, requiring clients to deal with the associated latency volatility. (3)~Consisting of only $3f+1$~replicas, for traditional BFT systems it is inherently difficult to select suitable replica locations in cases where a large and varying number of clients are scattered across the globe. Ideally, replicas would be placed both in close distance to each other (to speed up agreement) as well as in close distance to clients (to minimize the transmission time of requests and results). For systems with just a few replicas but many \linebreak clients meeting this requirement is essentially impossible.

\headline{Weighted Voting}
%
By assigning different weights on the votes replica have within the consensus protocol~\cite{sousa15separating,berger19resilient} it is feasible to introduce additional replicas while keeping response times low or even reducing them in a geo-replicated setting. Unfortunately, this comes at the cost of an increased number of messages exchanged between replicas, which can be prohibitively expensive in public-cloud settings as providers typically charge extra for wide-area traffic.

\headline{Leader Rotation}
%
Different authors have proposed to improve performance by rotating the leader role among replicas, following the idea of enabling each client to submit requests to its nearest replica~\mbox{\cite{veronese09spin,mao09towards,veronese10ebawa}}. Results from an extensive experimental evaluation by Sousa~et~al.~\cite{sousa15separating}, however, showed that in practice this approach does not provide significant benefits compared with appointing a fixed leader at a well-connected site. Besides, leader rotation still requires the execution of a complex protocol over wide-area links.

\headline{Hierarchical System Architecture}
%
To increase the scalability of BFT systems in wide-area settings, Amir et al. presented a hierarchical architecture as part of their Steward system~\cite{amir10steward}. As shown in Figure~\ref{fig:steward}, instead of hosting a single replica, each site in Steward comprises a cluster of replicas that run a site-local BFT agreement protocol. A key benefit of this approach is the fact that, although individual replicas still may be subject to Byzantine faults, an entire cluster can be assumed to only fail by crashing. This property at the local level enables Steward to rely on a crash-tolerant agreement protocol at the global level~(i.e.,~between sites), which compared with traditional BFT systems requires fewer phases and fewer message transmissions over wide-area links.

The efficiency enhancements made possible by its architecture enable Steward to improve performance, however, they come at the cost of an increased overall complexity that stems from the need to maintain replication protocols at two levels: within each site as well as between sites. Designing and implementing such protocols in isolation already is a non-trivial task, additionally guaranteeing a correct interplay between them is even more challenging. To ensure liveness Steward, for example, requires timeouts at different levels to be carefully coordinated~\cite{amir10steward}. Amir et al. addressed these problems in a subsequent work~\cite{amir07customizable}, which in this paper we refer to as CFT-WAR. In contrast to Steward, in CFT-WAR each step of the wide-area protocol~(e.g.,~Paxos~\cite{lamport98part}) is handled by a full-fledged multi-phase consensus protocol at each site~(e.g.,~PBFT). As a main advantage, this approach disentangles the protocols used for wide-area and site-internal replication. On the downside, it introduces additional overhead that in general prevents CFT-WAR from achieving response times as low as Steward's when providing the same degree of fault tolerance~\cite{amir07customizable}. Furthermore, due to performing agreement at two levels CFT-WAR still needs to run multiple subprotocols for tasks such as leader election, one at each level.
A set of additional subprotocols would be required to support the dynamic addition/removal of individual replicas or entire sites in a hierarchical system architecture, thereby further increasing complexity. To our knowledge, the ability to adjust to varying workload conditions was not a design goal of Steward and CFT-WAR, which is why the systems do not offer \linebreak mechanisms for changing their composition at runtime.

\subsection{Problem Statement}

Our analysis in Section~\ref{sec:background-approaches} shows that applying existing approaches to provide BFT in a cloud-based geo-replicated environment is possible, for example with regard to safety, but cumbersome due to the associated high complexity and the lack of effective means to react to changing workloads. This observation led us to ask whether these problems can be circumvented by a BFT system architecture that is specifically tailored to the characteristics of today's cloud infrastructures. In particular, we aim for a resilient system architecture that has three properties: efficiency, modularity, and adaptability.

\headline{Efficiency}
%
To minimize response times during both normal-case operation as well as fault handling, a system architecture in the ideal case does not require the execution of complex protocols over wide-area links. Instead, tasks involving multiple phases of message exchange between replicas, such as the agreement on requests, should be handled by replicas that are located in comparably close distance to each other.

\headline{Modularity}
%
Supporting a variety of cloud use cases with different requirements is difficult if the protocols responsible for the agreement and execution of requests are hard-wired into the BFT system architecture. To address this issue, we join other authors~\cite{amir07customizable} in aiming for an architecture that, for example, can be integrated with different consensus protocols depending on the specific demands of an application.

\headline{Adaptability}
%
One major strength of public clouds is to quickly provide resources on demand and at various geographic locations all over the globe. A BFT system architecture should be able to leverage this feature for hosting replicas in the proximity of clients to reduce the latency with which clients access the replicated service. Specifically, if new clients are started at other sites, there should be a lightweight mechanism for dynamically adding new replicas. The same applies to means for removing replicas that are no longer of \linebreak benefit as the clients in their vicinity have been shut down.


\section{\system}

This section presents the cloud-based BFT system architecture \system. In particular, we focus on how the architecture achieves low latency by performing consensus only over short-distance links, how \system achieves modularity by relying on a novel message-channel abstraction, and how it can be dynamically reconfigured to adapt to workload changes.

\subsection{Architecture}

Targeting use cases in wide-area environments, \system's system architecture is distributed across multiple geographic sites. For this purpose, \system leverages the common organizational structure of state-of-the-art cloud infrastructures such as Amazon EC2~\cite{ec2-regions}, Microsoft Azure~\cite{azure-regions}, or Google Compute Engine~\cite{gce-regions} by grouping sites into \emph{regions}, as shown in Figure~\ref{fig:architecture}. The sites within a region typically are several tens of kilometers apart from each other and represent separate fault domains, commonly referred to as \emph{availability zones}. In addition to constructing the data centers at distinct geographic locations, cloud providers also ensure that data centers in different availability zones are equipped with dedicated power supply systems and network links to minimize the probability of dependent failures. For the \system system architecture, availability zones play an important role as they allow us to place replicas in separate fault domains and still enable them to interact over short-distance links with comparably low latency.

\headline{Replica Groups}
%
Relying on this setting, \system is composed of multiple loosely coupled replica groups, each being distributed across different availability zones of a specific region. One of the replica groups in the system, the \emph{agreement group}, is responsible for establishing a global total order on incoming requests. The size of this group depends on the protocol it uses for consensus. Running PBFT~\cite{castro99practical}, for example, the agreement group consists of $3f_a+1$~replicas and is able to tolerate $f_a$~Byzantine faults. All other replica groups in the system, the \emph{execution groups}, host the application logic, process the ordered requests, and handle the communication with clients. Each of these groups comprises $2f_e+1$~replicas and tolerates at most $f_e$~Byzantine faults.
The level of fault tolerance provided by the agreement group and the executions groups may be selected independently. Supporting multiple execution groups enables \system to scale throughput by adding/removing groups and to minimize latency by placing groups in the vicinity of clients.

\headline{Execution-Replica Registry}
%
\system contains an execution-replica registry to provide clients with information on the locations and addresses of active replicas. The registry is a BFT service that is hosted and maintained by the agreement group. Its contents are updated by agreement replicas whenever the composition of the system changes~(see Section~\ref{sec:adaptability}).

\headline{Efficient BFT Replication}
%
In contrast to existing ap\-proach\-es~(see Section~\ref{sec:background-approaches}), \system does not run a full-fledged and complex replication protocol over long-distance links. Instead, all non-trivial tasks~(e.g.,~reaching consensus on requests) are carried out within a replica group using low-latency intra-region connections. Following this design principle, \system handles requests by forwarding them along a chain of stages represented by different replica groups. Specifically, clients submit their requests to their nearest execution group, which in turn forwards the request to the agreement group for ordering. Once this step is complete, the agreement group instructs all execution groups to process the ordered request.
This ensures that execution-group states remain consistent without requiring the execution groups to reach consensus themselves. Having processed the request, the replicas of the execution group the client is connected to return the result. As each execution group comprises $2f_e+1$~replicas, clients are able to verify the correctness of a result solely based on the replies they receive from their local execution group.

With all communication-intensive steps being performed over intra-region links, inter-region links in \system are only responsible for forwarding the outputs of one stage to the replica group(s) constituting the next stage. In particular, this approach has the following benefits: (1)~It greatly simplifies the interaction of replicas over long-distance connections. (2)~It enables a modular design that allows different deployments to rely on different agreement protocols without the need to modify the implementation of execution replicas. (3)~As we show in Section~\ref{sec:channels}, it allows \system to use the same abstraction, a reliable message channel, for all inter-region links, thereby facilitating system implementation.


\begin{figure}
	\includegraphics{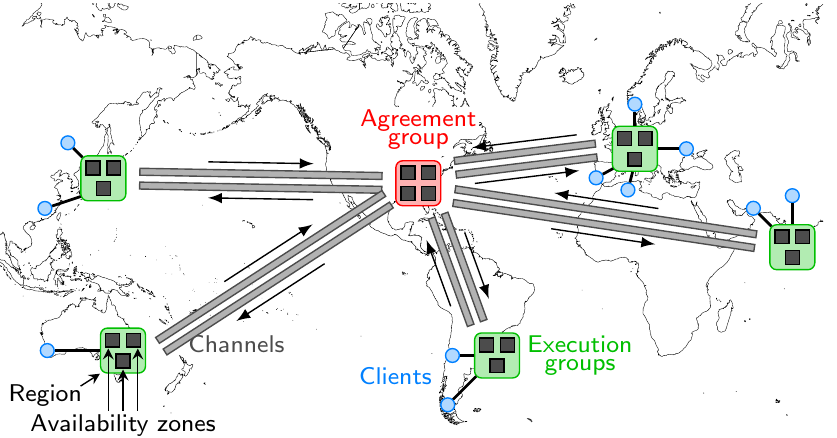}
	\caption{\system system architecture}
	\label{fig:architecture}
\end{figure}

\headline{Practical Considerations}
%
As of this writing, all major public clouds offer several regions with at least three availability zones~(Amazon EC2:~20, Microsoft Azure:~10, Google Compute Engine:~24) and therefore support the world-wide deployment of \system execution groups which tolerate one faulty replica. In addition, Amazon~(Virginia, Oregon, Tokyo) and Google~(Iowa) also already operate regions with four or more availability zones, which consequently are candidates for hosting \system's agreement group. With public cloud infrastructures still being expanded, new regions and availability zones are added every year, increasing the deployment options for \system. Besides, to further improve the resilience of \system, agreement and execution replicas may be distributed across different clouds, thereby reducing the dependence on a single provider~\cite{bessani13depsky,abu-libdeh10racs}. As there are several regions hosting data centers and availability zones of multiple providers~(e.g.,~Europe, North America, South America, India, Asia, and Australia), this approach also makes it possible to deploy larger agreement and execution groups that tolerate $f_a>1$ and $f_e>1$ replica failures, respectively.

Representing distinct fault and upgrade domains, availability zones are designed to enable uninterrupted execution of services that are replicated within the same region. Despite the efforts undertaken by providers, in the past there have been rare incidents where problems in one availability zone caused temporary availability issues in other zones belonging to the same region~\cite{aws11incident}. In \system, if more than $f_a$~agreement replicas are unresponsive, the agreement group temporarily cannot order new requests until the replicas become available again. However, as we detail in Section~\ref{sec:protocol}, in such cases \system is still able to process weakly consistent read requests as these operations are handled within a client's local execution group. On the other hand, if more than $f_e$~replicas of the same execution group become unavailable, affected clients can temporarily switch to a different execution group and continue to use the service.

\subsection{Inter-Regional Message Channels}
\label{sec:channels}

To support a modular design, we use an abstraction to handle all interaction between replica groups in \system: the \emph{inter-regional message channel~(\channel)}. Specifically, \channel{}s are responsible for forwarding messages from a group of sender replicas in one region to a group of receiver replicas in another region. Conceptually, \channel{}s can be viewed as an extension of BLinks~\cite{amir07customizable}, however, unlike BLinks, \channel{}s (1)~do not require messages to be totally ordered at the channel level and (2)~comprise built-in flow control. To forward information, an \channel internally can be divided into multiple subchannels providing first-in-first-out semantics. Each subchannel has a configurable maximum capacity~(i.e.,~an upper bound on the number of messages that can be concurrently in transmission) and relies on a window-based flow-control mechanism to prevent senders from overwhelming receivers. Below, we discuss the specifics of \channel{}s at a conceptual level. For possible implementations please refer to Section~\ref{sec:implementations}.

\headline{Overview}
%
Figure~\ref{fig:channel} presents an example \channel that comprises two subchannels and connects four senders to three receivers. Subchannels of the same \channel are independent of each other and can be regarded as distributed queues with limited capacity that distinguish messages based on unique position indices. Both senders and receivers run dedicated endpoints which together form the \channel and enable the replicas to access it. When a replica sends a message, it provides its local endpoint with the information which subchannel and position to use for the message~(\texttt{send()}). Similarly, to receive a message a replica queries its local endpoint for the message corresponding to a specific subchannel and position~(\texttt{receive()}). In addition, \channel endpoints offer a method to shift the flow-control window of a subchannel~(\texttt{move\_window()}), as further discussed below.

\headline{Send Semantics}
%
\channel{}s are not designed to exchange arbitrary messages between replicas but instead provide specific send semantics enabling \system to safely forward the decision of a replica group to another. In particular, tolerating at most $f_s$~senders with Byzantine faults, the \channel only forwards a message after at least $f_s+1$~different senders transmitted a message with identical content using the same subchannel and position. Consequently, in order for a message to pass the channel at least one correct sender must have vouched for the validity of the message's content and requested its transmission. In contrast, messages solely submitted by the up to $f_s$~faulty senders have no possibility of getting through and being delivered to receivers.


\begin{figure}
	\vspace{-3mm}
	\begin{lstlisting}
%\hrule\vspace{1mm}\begin{center}\lstbtt{Sender Endpoint Interface}\end{center}\hrule\vspace{1.5mm}%
void %\lstbtt{send}%(%\textsc{Subchannel}% sc, %\textsc{Position}% p, %\textsc{Message}% m);
void %\lstbtt{move\_window}%(%\textsc{Subchannel}% sc, %\textsc{Position}% p);
%\vspace{-2mm}\hrule\vspace{1mm}\begin{center}\lstbtt{Receiver Endpoint Interface}\end{center}\hrule\vspace{1.5mm}%
%\textsc{Message}% %\lstbtt{receive}%(%\textsc{Subchannel}% sc, %\textsc{Position}% p);
void %\lstbtt{move\_window}%(%\textsc{Subchannel}% sc, %\textsc{Position}% p);
%\vspace{-2mm}\hrule%
	\end{lstlisting}
	\vspace{1mm}
	\includegraphics{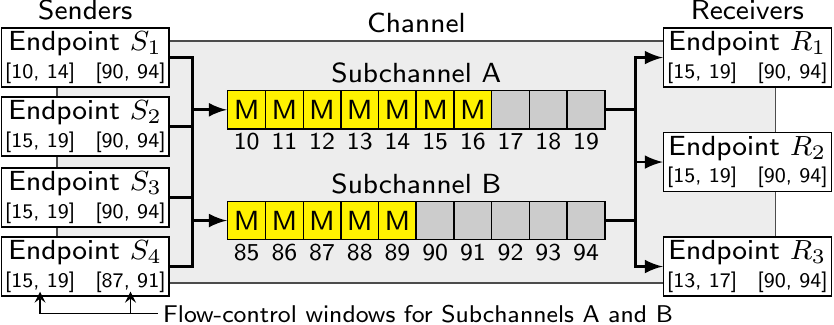}
	\caption{Conceptual view of an example \channel with two independent subchannels that both have a maximum capacity of ten messages~(M). Senders~($S_*$) and receivers~($R_*$) access the subchannels via their local endpoints; each endpoint manages its own subchannel-specific flow-control windows.}
	\label{fig:channel}
	\label{fig:interface}
\end{figure}

\headline{Authentication}
%
\channel{}s protect all channel-internal communication with digital signatures to enable the recipient of a message to verify the integrity and the origin of the message. If an endpoint is unable to validate the authenticity of a received message, it immediately discards the message.

\headline{Flow Control}
%
With the capacities of subchannels being limited, \channel endpoints apply a flow-control mechanism to coordinate senders and receivers.
For this purpose, for each subchannel an endpoint manages a separate window that restricts which messages a sender/receiver is able to transmit/obtain at a given time.
If a subchannel's window at a sender endpoint is full, the sender cannot insert additional messages into this subchannel until the endpoint moves the window forward.
In the normal case, this action is triggered by receivers calling \texttt{move\_window()} and requesting the start of the window to be shifted to a higher position.
Whenever a sender endpoint learns that the window position has changed at one of the receiver endpoints, the sender endpoint sets its own window start to the $f_r+1$~highest position requested by any receiver where $f_r$~denotes the number of receivers with Byzantine faults to tolerate.
This ensures that correct sender endpoints only move their windows, and thus discard messages at lower positions, after receiving the information that at least one correct receiver has permitted such a step.

Besides receiver-driven window shifts, our channels also allow senders to request an increase of the starting position of a subchannel's window. If senders opt to do so, it may become impossible for a receiver endpoint to provide the message at the position the endpoint's local replica requested. The same scenario can occur if a receiver endpoint is slow or falls behind~(e.g.,~due to a network problem) while \mbox{$f_r+1$}~other receivers have already requested the window to be moved forward. In such cases, the affected receiver endpoint aborts the \texttt{receive()} call with an exception and thereby enables its local replica to handle the situation. As discussed in Section~\ref{sec:checkpointing}, replicas react to such an exception by obtaining the missed information from other replicas.

\headline{Use in \system}
%
\channel{}s are an essential building block of \system's modular architecture as they enable us to design a geo-replicated BFT system as a composition of loosely coupled replica groups that interact using the same channel abstraction. In particular, \system relies on two different \channel instances to perform all inter-group communication over long-\linebreak distance links: the \textit{request channel} and the \textit{commit channel}.

The request channel allows an execution group to forward newly received requests to the agreement group; that is, this channel is an \channel that connects $2f_e+1$~senders~(i.e.,~execution replicas) to $3f_a+1$~receivers~(i.e.,~agreement replicas). To transmit the requests, the request channel comprises multiple subchannels, one for each client. In contrast, the commit channel only consists of a single subchannel and is used by the agreement group to inform an execution group about the totally ordered sequence of agreed requests. The commit channel consequently is responsible for forwarding the decisions of $3f_a+1$~senders to $2f_e+1$~receivers. In summary, the agreement group maintains a pair of \channel{}s~(i.e.,~one request channel and one commit channel) to each execution group.

\subsection{Request Handling}
\label{sec:protocol}
%
\system differentiates between requests that potentially modify application state~(``writes'') and those that do not~(``reads''). This distinction enables the system to handle requests of each category as efficiently as possible. While writes need to be applied to all execution groups to keep their states consistent, it is sufficient for reads to only process them at the execution group a client is connected to. Figure~\ref{fig:protocol} gives an overview of how requests flow through \system. Below, we provide details on the system's replication protocols for writes and reads. In this context, it is important to note that all messages exchanged between clients and replicas must be authenticated, for example using HMACs~\cite{tsudik92message}. For messages sent through \channel{}s, the authentication is handled by the channels.

In the following, we describe \system's handling of write and read requests . The proof of correctness and liveness is deferred to the \ifextended{appendix of the paper}{extended version of the paper~\cite{eischer20resilient-extended}}.

\begin{figure}
	\includegraphics{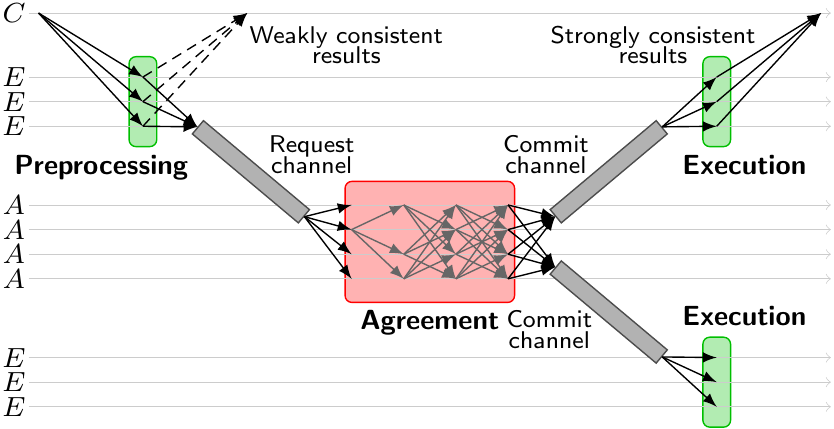}
	\caption{Overview of \system's replication protocol}
	\label{fig:protocol}
\end{figure}

\headline{Writes}
%
\system's protocol for writes is presented in Figure~\ref{def:pseudo-write}.
To perform a write operation~$w$, a client~$c$ creates a corresponding message \msg{Write}{w, c, t_c} using a unique client-local counter value~$t_c$ and sends the message to all replicas of an execution group. In general, a client for this purpose may select any execution group in the system, however, in an effort to minimize latency, \system clients typically choose the group closest to their own site.

When an execution replica receives the client's request, it first checks whether the message is correctly authenticated and whether the client has permission to access the system. If any of these checks fail the replica discards the message. Otherwise, the replica of execution group~$e$ wraps the entire request~$r$ in a message \msg{Request}{r, e} and submits the message to the agreement group via its request channel. More precisely, unless the execution replica  has already forwarded the request~(Lines~\ref{code:exec-cache-start}--\ref{code:exec-cache-end}) it moves the window of the client's subchannel to position~$t_c$ and inserts the write request at that position~(L.~\ref{code:exec-send-start}--\ref{code:exec-send-end}). Once at least $f_e+1$~members of the execution group~(i.e.,~at least one correct execution replica) have validated and forwarded the request, the request channel permits agreement replicas to retrieve the message~(L.~\ref{code:ag-receive}). This allows the agreement group to initiate the consensus process for the message~(L.~\ref{code:ag-order}), which is then performed entirely within the group's region. Having learned that the request is committed and has been assigned the agreement-sequence number~$s$~(L.~\ref{code:ag-deliver}), an agreement replica creates a confirmation \msg{Execute}{r, s}. As write operations need to be processed by all execution groups, the agreement replica sends this message through all commit channels at position~$s$~(L.~\ref{code:ag-execute}).

Once $f_a+1$~agreement replicas~(among them at least one correct replica) have sent an \textsc{Execute} message with the same content and sequence number, a commit channel enables its receivers to obtain the message~(L.~\ref{code:exec-receive}). Having done so, an execution replica processes the included request by applying the corresponding write to its local state~(L.~\ref{code:exec-exec-start}). Each replica of execution group~$e$ also returns a reply \msg{Result}{u_c, t_c} with the operation's result~$u_c$ to the client that submitted the request with counter value~$t_c$~(L.~\ref{code:exec-exec-end}). The client accepts a result after it has received $f_e+1$~replies with matching result and counter value from different execution replicas.

As we detail in Section~\ref{sec:checkpointing}, when processing writes replicas in \system also create periodic checkpoints~(L.~\ref{code:exec-cp-gen-start}--\ref{code:exec-cp-end} and \ref{code:ag-cp-gen-start}--\ref{code:ag-cp-end}) to assist other replicas that might have fallen behind.


{%
\font\lstttx=rm-lmtl10 scaled 820
\font\lstbttx=rm-lmtk10 scaled 820
\lstset{
	basicstyle=\linespread{.925}\footnotesize\lstttx,
	emphstyle=\lstbttx,
	commentstyle=\commentsize\textit,
	tabsize=2,
	numberstyle=\scriptsize,
	numbersep=2.2mm,
	xleftmargin=5mm,
	numbers=left,
	frame=none,
	columns=fullflexible,
	numberblanklines=true,
	emptylines=2,
	breaklines=true,
	breakatwhitespace=false,
	escapechar=\%,
	mathescape,
	morecomment=[l]{//},
	morecomment=[s]{/*}{*/},
	morestring=[b]",
	aboveskip=-3mm,
	emph={%
		until, to, sleep, for, on, receive, from, if, send, main, loop, while, else, parallel, each, move_window%
	}%
}%
\begin{figure}
	\vspace{2mm}\hrule\pseudocode{\begin{center}Execution Replica of Execution Group $e$\end{center}}\hrule\vspace*{2mm}\vspace{8pt}
	\begin{lstlisting}
%$s_n$% := 0%\hfill%// Current sequence number
%$t[c]$% := 0%\hfill%//Vector with counter value of latest forwarded client request
%$u[c]$% := %$\varnothing$\hfill%// Reply cache %$\langle\textsc{Reply}, u_c, t_c\rangle$%%\vspace{2pt}\hrule{}\vspace{3pt}%
on receive(%$r$% = %$\langle\textsc{Write},w,c,t_c\rangle$% from %$c$%):
	if %$t_c \leq t[c]$%: return send result %$u[c]$% to %$c$%%\label{code:exec-cache-start}%
	%$t[c]$% := %$t_c$\hfill%// Remember forwarded request%\label{code:exec-cache-end}%
	request-IRMC.move_window(%$c$%, %$t_c$%) %\label{code:exec-send-start}%
	request-IRMC.send(%$c$%, %$t_c$%, %$\langle\textsc{Request},r,e\rangle$%)%\label{code:exec-send-end}%%\vspace{5pt}%
main loop:
	%$m$% := commit-IRMC.receive(0, %$s_n+1$%)%\label{code:exec-receive}%
	if %$m = \langle \textsc{TooOld}, s'\rangle$%: %\texttt{fetch checkpoint for} $s'$%
	else:
		%$m$% = %$\langle\textsc{Execute}, \langle\textsc{Request},\langle\textsc{Write}, w, c, t_c\rangle,e'\rangle, s_n+1\rangle$%
		%$u_c$% := %\texttt{Application execute}% %$m$%%\label{code:exec-exec-start}%
		%$s_n$% := %$s_n$% + 1
		send %$\langle\textsc{Result}, u_c, t_c\rangle$% to %$c$% if %$e'$\,%=%\,$e$% and store in %$u[c]$%%\label{code:exec-exec-end}%
		if %$s_n \equiv 0$ mod $k_e$%:%\label{code:exec-cp-gen-start}%
			%\texttt{create checkpoint for}% %$s_n$% %\texttt{with}% %$u$% and %\texttt{Application}%%\vspace{5pt}%
on %\texttt{stable checkpoint}%(%\textsc{SeqNr} $s$%, %$u'$%, %\texttt{Application}%'):%\label{code:exec-cp-start}%
	commit-IRMC.move_window(0, %$s+1$%)
	if %$s \geq s_n$%: %\texttt{apply checkpoint to}% %$s_n$%, %$u$% and %\texttt{Application}%%\label{code:exec-cp-end}%%\vspace{3mm}\hrule\pseudocode{\begin{center}Agreement Replica\end{center}}\hrule\vspace*{2mm}%
%$s_n$% := 0%\hfill%// Current sequence number
%$t[c]$% := 0%\hfill%// Counter values of latest agreed request; used by consensus
%$t^+[c]$ := 0\hfill%// Counter values for next expected request
AG-WIN%\,$\geq$\,$k_a$%%$\hfill$%// Size of agreement window
win := [1,AG-WIN]%\hfill%// Range with [lower, upper] bound (inclusive)
hist := last %$|$%commit-IRMC window%$|$ \textsc{Execute} messages%%\vspace{2pt}\hrule{}\vspace{3pt}%
for each client %$c$% and execution group %$e$% in parallel:
	while true:
		%$m$% := request-IRMC.receive(%$c$%, %$t^+[c]$%) from group %$e$%%\label{code:ag-receive}%
		if %$m$% = %$\langle\textsc{TooOld}, t_c\rangle$%: %$t^+[c]$% := %$t_c$%%\label{code:ag-receive-tooold}%
		else: // %$m$% = %$\langle\textsc{Request},\langle\textsc{Write}, w, c, t_c\rangle,e\rangle$%
			%\texttt{Consensus order request} $m$%%\label{code:ag-order}%
			%$t^+[c]$% := %$t^+[c] + 1$%%\vspace{5pt}%
on %\texttt{Consensus ordered}%(%\textsc{SeqNr}~$s$, $r$% = %$\langle\textsc{Request},\langle\textsc{Write},w,c,t_c\rangle,e\rangle$%):%\label{code:ag-deliver}%
	sleep until upper limit of win %$> s$%%\label{code:ag-win-sleep}%
	%$s_n$% := %$s$%
	%$t[c]$% := %$t_c$%
	%$t^+[c]$% := %$\max(t_c + 1, t^+[c])$%
	commit-IRMC.send(0, %$s$%, %$\langle\textsc{Execute}, r, s\rangle$%) for each execution group %$e$% and add %\textsc{Execute}% to hist%\label{code:ag-execute}%
	if %$s_n \equiv 0$ mod $k_a$%:%\label{code:ag-cp-gen-start}%
		%\texttt{create checkpoint for}% %$s_n$% %\texttt{with}% %$t$%, hist%\label{code:ag-cp-gen-end}%%\vspace{5pt}%
on %\texttt{stable checkpoint}%(%\textsc{SeqNr} $s$%, %$t'$%, hist'):
	commit-IRMC.move_window(0, %$s-|$hist'$|+1$%)
	%\texttt{Consensus collect garbage before} $s+1$%%\label{code:ag-cp-gc}%
	if %$s > s_n$%:
		h_missing := %$\{\langle\textsc{Execute}, r, s'\rangle \in$% hist'$\,|\,s' \in [s_n+1, s]\}$
		%\texttt{apply checkpoint to}% %$s_n$%, %$t$% and hist%\label{code:ag-cp-apply-start}%
		for each execution group %$e$%:
			send h_missing via commit-IRMC of group %$e$\label{code:ag-cp-apply-end}%
	win := [%$s$%+1, %$s$%+AG-WIN]%\label{code:ag-win-move}%%\label{code:ag-cp-end}%
	\end{lstlisting}
	\hrule
	\caption{\system protocol for writes (pseudo code)}
	\label{def:pseudo-write}
\end{figure}
}%

\headline{Reads}
%
For reads, \system offers two different operations providing weakly consistent and strongly consistent results, respectively. To perform a weakly consistent read, a client sends a read request to all members of an execution group, which for a valid request immediately responds with a result, as illustrated by the dashed lines in Figure~\ref{fig:protocol}. As for writes, a client verifies the result based on $f_e+1$ matching replies. Weakly consistent reads achieve low latency as they only involve communication between the client and its execution group. Due to these reads being processed without further coordination with writes, in the presence of concurrent writes to the same state parts they may return stale values or fewer than $f_e+1$ matching results, similar to the optimized reads in existing BFT protocols~\cite{castro99practical,sousa15separating}. \system clients react to stalled \linebreak reads by retrying the operation or performing a strongly consistent read, which is guaranteed to produce a stable result.

Strongly consistent reads in \system for the most part have the same control and data flow as writes, with one important exception. With reads not modifying the application state, it is sufficient to process them at the client's execution group. Consequently, after a read request completed the consensus process, agreement replicas only forward it to the execution group that needs to handle the request. The \textsc{Execute}s to all other groups instead contain a placeholder including only the client request counter value for the same sequence number, thereby minimizing network and execution overhead.

\subsection{Checkpointing}
\label{sec:checkpointing}

As discussed in Section~\ref{sec:channels}, an \channel{} may garbage-collect messages before they have been delivered to all correct receivers. In the normal case in which all receivers advance at similar speed, this property usually does not take effect, resulting in each receiver to obtain every message. To address exceptional cases in which a correct receiver misses messages~(e.g.~due to a network problem), \system provides means to bring the affected receiver up to date via a checkpoint. The specific contents of a checkpoint vary depending on the receiver-replica group~(see below).
Checkpoints are periodically created after a group has agreed on\,/\,processed the message for a sequence number~$s$ that satisfies $s \equiv 0~mod~k$. The checkpoint interval~$k$ of a replica group is configurable and for the execution to sustain liveness must be smaller than the maximum capacity of the group's input \channel. The agreement-checkpoint interval~$k_a$ may be selected independently from the interval for execution checkpoints~$k_e$.

\headline{Agreement Checkpoints}
%
Having completed the consensus process for a request for which a checkpoint is due, an agreement replica creates an agreement snapshot and includes (1)~a vector~$t$ that for each client contains the counter value~$t_c$ of the client's latest agreed request and (2)~the last \textsc{Execute} messages corresponding to the commit subchannel capacity~(L.~\ref{code:ag-cp-gen-start}--\ref{code:ag-cp-gen-end} in Figure~\ref{def:pseudo-write}). In a next step, the agreement replica computes a hash~$h$ over the snapshot and sends a message \msg{Checkpoint}{h, s} protected with a digital signature to all members of its group. Having obtained $f_a+1$~correctly signed and matching checkpoint messages for the same sequence number, a replica has proof that its snapshot is correct. At this point, the replica can move forward its separate window used to ensure the periodic creation of a new checkpoint~(L.~\ref{code:ag-win-sleep} and \ref{code:ag-win-move}) and also instruct the consensus protocol to garbage collect preceding consensus instances~(L.~\ref{code:ag-cp-gc}).

Agreement replicas require periodic checkpoints to continue ordering new requests and thus there is at least one correct agreement replica that possesses both a corresponding valid checkpoint as well as proof of the checkpoint's correctness in the form of $f_a+1$~matching checkpoint messages.
As a consequence, if a correct agreement replica falls behind and queries its group members for the latest checkpoint, the replica will eventually be able to acquire this checkpoint, verify it, and apply it in order to catch up by skipping consensus instances. In such case, the checkpoint enables the replica to learn (1)~the request-subchannel positions at which to query the \channel for the next client requests and (2)~the \textsc{Execute}s of the skipped consensus instances~(L.~\ref{code:ag-cp-apply-start}--\ref{code:ag-cp-apply-end}).

\headline{Execution Checkpoints}
%
Execution-group checkpointing follows the same basic work flow as in the agreement group. An execution snapshot comprises a copy of the application state and the latest reply to each client, similar to the checkpoints in Omada~\cite{eischer19scalable}. This information enables a trailing execution replica to consistently update its local state without needing to process all agreed requests. When an execution checkpoint for a sequence number~$s$ becomes stable at an execution replica, the replica moves the flow-control window of its incoming commit channel to $s+1$~(L.~\ref{code:exec-cp-start}--\ref{code:exec-cp-end}).
This ensures that agreed requests are only discarded after at least one correct execution replica has collected a stable checkpoint.
Note that there is no need for checkpoints to contain requests.
A client moves its request subchannel's window forward by issuing a new request, thereby confirming that the old request \linebreak can be garbage-collected from the \channel.
This also allows execution replicas to skip forward to the current request~(L.~\ref{code:ag-receive-tooold}).

\subsection{Global Flow Control}
\label{sec:flow}

With the flow-control mechanism of an \channel only operating at the communication level between two replica groups, \system takes additional measures to coordinate the message flow at the point where the endpoints of multiple \channel{}s meet: the agreement group. Specifically, there are two types of messages~(i.e.,~new requests received through request channels and \textsc{Execute}s sent through commit channels) that have individual characteristics and are handled in different ways: (1)~With regard to incoming requests, agreement replicas represent the receiver side of request channels and therefore directly manage the positions of the channels' flow-control windows. As described in Section~\ref{sec:checkpointing}, to be able to quickly retrieve new requests an agreement replica updates the counter value of each client's latest request each time an agreement checkpoint becomes stable. (2)~With regard to outgoing \textsc{Execute}s, in contrast, agreement replicas represent the sender side of commit channels and therefore depend on the respective execution group at the other end of each channel to move the flow-control window forward. To prevent a single execution group from delaying overall progress, agreement replicas in \system do not wait until they are able to submit a newly produced \textsc{Execute} to every outgoing commit channel. Instead, having completed inserting an \textsc{Execute} for a sequence number~$s$ into $n_e-z$~commit channels an agreement replica is allowed to continue; $n_e$~is the total number of execution groups in the system and $z$ a configurable value~($0 \leq z < n_e$). To inform the execution groups of trailing commit channels, once such a request is garbage-collected a replica updates the channels' window positions to sequence number~\mbox{$s+1$}. If an affected execution replica subsequently tries to receive \textsc{Execute}s for sequence numbers of $s$ or lower, the commit channel responds with an exception~(see~Section~\ref{sec:channels}). In reaction, the execution replica starts to seek a stable execution checkpoint, querying members of both its own group and others, in order to compensate for the missed messages.

\subsection{Adaptability}
\label{sec:adaptability}

\system's modular architecture makes it possible to dynamically change the number of execution groups in the system and thereby adjust to varying workloads. With the consensus protocol being limited to the agreement group, in contrast to traditional BFT systems such a reconfiguration in \system does not require complex mechanisms or subprotocols.

\headline{Adding an Execution Group}
%
To add a new execution group~$e$ to the system, a privileged admin client first starts the replicas of the group and then submits an \msg{AddGroup}{e, \mathcal{E}} message; $\mathcal{E}$ is a set containing the identity and address of each group member. As soon as the agreement process for this message is complete, agreement replicas establish an \channel{} pair~(i.e.,~a request channel and a commit channel) to the new execution group, update the execution-replica registry to reflect the changes, and start the reception of requests and the forwarding of \textsc{Execute}s. Trying to obtain an \textsc{Execute} for sequence number 0, the new replicas will be notified by their commit channels that they have fallen behind and consequently use the mechanism of Section~\ref{sec:flow} to fetch an execution checkpoint from another group. 

\headline{Removing an Execution Group}
%
To remove an existing execution group~$e$ from the system, the administrator client submits a \msg{RemoveGroup}{e} message that, once agreed on, causes the agreement replicas to update the execution-replica registry and close their \channel{}s to the affected group.

\subsection{Handling Faulty Clients and Replicas}

Besides enabling \system's modular architecture, \channel{}s also play a crucial role when it comes to limiting the impact faulty clients and replicas can have on the system. In this context, especially one \channel property is of major importance: the fact that a channel only delivers a message after $f+1$~senders submitted it and the channel therefore has proof that at least one correct sender vouches for the message's validity~(see Section~\ref{sec:channels}). If, for example, a faulty client either sends conflicting requests to an execution group or the same request to fewer than \mbox{$f_e+1$}~execution replicas, the request channel of the affected execution group prevents the message's delivery to the agreement group. Note that in such case the effects of the faulty client are strictly limited to the subchannel of this client, which will not deliver a request if fewer than $f_e+1$~execution replicas insert the same message. As execution replicas use a dedicated request subchannel for each client, the subchannels of correct clients remain unaffected.

If faulty execution replicas collaborate with a faulty client, different agreement replicas may receive different values for this client's requests.
For example, a faulty client might submit a different request~$R_1$, $R_2$, \dots, $R_{f_e+1}$ to each of the $f_e+1$~correct execution replicas of one group and provide all requests to the $f_e$~faulty execution replicas of that group.
Depending on which of the request versions the faulty execution replicas transmit to which agreement replica, in such a situation it is possible that some agreement replicas obtain an $f_e+1$~quorum for request~$R_1$ while others receive $f_e+1$~matching messages for request~$R_2$ and so on.
Again, the effects are limited to the faulty client's subchannel, requests of correct clients can proceed as usual.
This scenario is not specific to \system, but in a similar way can also occur in traditional BFT systems~\cite{castro99practical,yin03separating,veronese10ebawa,sousa15separating,eischer19scalable}, in which clients directly submit their possibly conflicting requests to the replicas performing the agreement.
Consequently, all BFT protocols that tolerate faulty clients already comprise mechanisms to handle this scenario.
This is usually combined with only executing client requests with a counter value which is higher than the highest value processed so far for that client,\linebreak which ensures that old or duplicate requests are skipped.

Besides tolerating faulty clients, agreement protocols in general also provide means that allow correct follower replicas to elect a new leader if the current leader is faulty and, for example, fails to start the consensus process for a new client request within a given timeout~\cite{castro99practical,yin03separating,veronese10ebawa,sousa15separating,eischer19scalable}. To be able to monitor the leader, follower replicas must obtain information about incoming requests. In \system, this is ensured by the fact that request channels only garbage-collect a request from a correct client if the latter has successfully obtained a valid reply. A request for which this is not the case will be uploaded to all correct members of the client's execution group and through this group's request channel eventually reach all correct follower agreement replica, thereby enabling followers to hold the leader accountable.

In addition, faulty agreement replicas cannot forward manipulated messages via the commit channel.
As the consensus process ensures that all correct agreement replicas deliver the same total order of requests, eventually $f_a+1$~correct agreement replicas will send matching messages  enabling the execution groups to receive the correctly ordered requests.
 In contrast, the delivery of faulty requests sent by the faulty agreement replicas is prevented by the \channel.


\section{\channel Implementations}
\label{sec:implementations}

In this section, we present two different variants to implement inter-regional message channels, focusing on simplicity~(\channela) and efficiency~(\channelb), respectively. Additional variants are possible, as discussed in Section~\ref{sec:related}.

\headline{\channel with Receiver-side Collection~(\channela)}
%
The receiver endpoint of an \channel only delivers a message~$m$ for a specific subchannel~$sc$ and position~$p$ if at least $f_s+1$~senders previously instructed the channel to transmit a message with identical content for the same subchannel position~(see Section~\ref{sec:channels}). As illustrated in Figure~\ref{fig:implementations-a}, the \channela solves this problem by each sender endpoint~$S_x$ directly forwarding a \smsg{Send}{m, sc, p}{S_x, \mathcal{X}} message and thereby enabling each receiver endpoint to individually collect $f_s+1$ matching messages. To allow receivers to verify the origin and integrity of a \textsc{Send}, a sender signs messages with its private key~$\mathcal{X}$.
When a receiver requests a subchannel's flow-control window to be shifted, its receiver endpoint~$R_y$ submits a signed \smsg{Move}{sc, p}{R_y, \mathcal{Y}} message to all sender endpoints. For each receiver and subchannel, a sender endpoint stores the \textsc{Move} message with the highest position~$p$ and sets the subchannel's window start to the $f_r+1$~highest position requested by any receiver~(see Section~\ref{sec:channels}).
To request a shift of a subchannel's flow-control window, sender endpoints also send \textsc{Move} messages which the receivers process analogously.


\begin{figure}
	\vspace{-2mm}
	\subfloat[\channela\label{fig:implementations-a}] {
		\includegraphics[page=1]{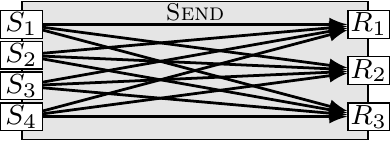}
	}
	\hfill
 	\subfloat[\channelb\label{fig:implementations-b}] {
		\includegraphics[page=2]{figures/implementations.pdf}
	}
	\caption{Overview of two possible \channel implementations.}
	\label{fig:implementations}
\end{figure}

\headline{\channel with Sender-side Collection~(\channelb)}
%
\channelb{}s minimize the number of messages transferred across wide-area links by applying the concept of \emph{collectors}~\cite{gueta19sbft}. That is, sender endpoints in \channelb{}s do not submit their \textsc{Send}s to the receiver side but, as indicated in Figure~\ref{fig:implementations-b}, instead exchange signed hashes of them within the sender group. Each sender endpoint serves as a collector, which means that the endpoint assembles a vector~$\vec{v}$ of $f_s+1$~correct signatures from different senders for the same \textsc{Send} message content~$sm$. Having obtained this vector, a collector~$S_x$ sends it in a signed \smsg{Certificate}{sm, \vec{v}}{S_x, \mathcal{X}} message to one or more receiver endpoints. On reception, a receiver verifies the validity of the \textsc{Certificate} by checking both the signatures of the message and the $f_s+1$~signatures contained in the vector~$\vec{v}$. If all of these signatures are correct and match the \textsc{Send} message content~$sm$, the endpoint has proof that $sm$ is valid as it was sent by at least one correct replica and delivers the associated message to its receiver on request.

\channelb receiver endpoints individually select the sender endpoint serving as their current collector and announce these decisions attached to their \textsc{Move}s. As a protection against faulty collectors, all sender endpoints periodically transmit \smsg{Progress}{\vec{p}}{S_x, \mathcal{X}} messages directly to receiver endpoints in which they include a vector~$\vec{p}$ with the highest position of each subchannel for which they have a \textsc{Certificate}. If at least $f_s+1$~sender endpoints claim to have reached a certain position but a receiver's collector fails to provide a corresponding and valid \textsc{Certificate} within a configurable amount of time, the endpoint switches to a different collector.


\section{Evaluation}

In this section, we experimentally evaluate \system in comparison to existing approaches for BFT wide-area replication.
 
\headline{Environment}
%
To compare different techniques, we implemented a Java-based prototype that can be configured to reflect three different system architectures~(cf.~Section~\ref{sec:background-approaches}): (1)~\textbf{\bft} represents the traditional approach of distributing a single set of replicas across different geographic locations. It relies on PBFT~\cite{castro99practical} as agreement protocol and uses HMAC-SHA-256 as MACs to authenticate the messages exchanged between replicas. (2)~\textbf{\hft} employs a hierarchical system architecture running the two-level Steward protocol~\cite{amir10steward} to coordinate multiple sites that each host a dedicated cluster of replicas. Steward requires threshold cryptography for which \hft uses the scheme proposed by Shoup~\cite{shoup00practical} based on 1024-bit RSA signatures. (3)~\textbf{\system} represents our system architecture proposed in this paper. In this evaluation, \system's agreement group runs PBFT for consensus and its \channel{}s protect their messages with 1024-bit RSA signatures.


\begin{figure}[b!]
	\includegraphics{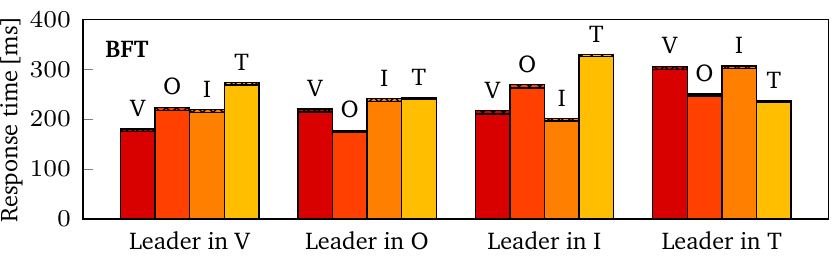}%
	\vspace{-1mm}\\
	\includegraphics{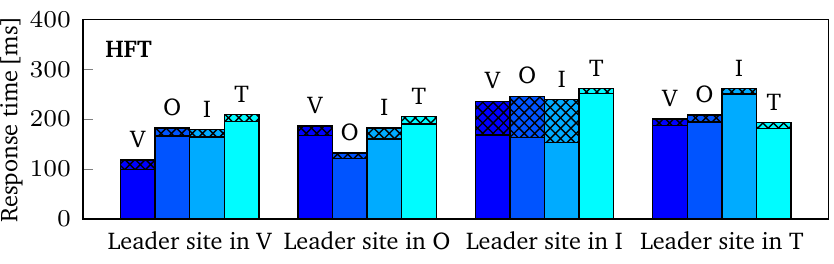}%
	\vspace{-.5mm}\\
	\includegraphics{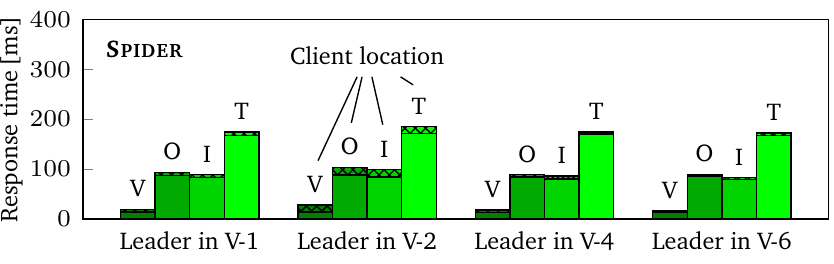}%
	\caption{50th~(\raisebox{-.3mm}{\protect\tikz \protect\node[draw, minimum width=7, minimum height=7] {};}) and 90th~(\raisebox{-.3mm}{\protect\tikz \protect\node[draw, minimum width=7, minimum height=7, postaction={pattern=crosshatch, pattern color=black}] {};}) percentiles of write latencies for different client and leader locations including Virginia~(V), Oregon~(O), Ireland~(I), and Tokyo~(T).}
	\label{fig:eval-leader-all}
\end{figure}


To conduct our experiments in an actual wide-area environment, we start virtual machines~(t3.small, 2\,VCPUs, 2\,GB\,RAM, Ubuntu\,18.04.4\,LTS, OpenJDK\,11) in 4~Amazon EC2 regions across the globe~(Virginia, Oregon, Ireland, and Tokyo). In each of these regions, we deploy 50~clients that issue 100 writes/reads per second (200~bytes) to a key-value store provided by our systems under test; client messages carry 1024-bit RSA signatures. Given this client setting, our architectures demand the following replica placement for~$f=1$: For \bft, 1~replica is hosted in each of the 4~regions. \hft expects a cluster of 4~replicas in each region, which is used as contact cluster for local clients. For \system, we deploy 1~execution group (3~replicas) per region, distributed across different availability zones. In addition, we start \system's \linebreak 4~agreement replicas in separate Virginia availability zones.

\headline{Writes}
%
In our first experiment, we examine the latency of writes issued by clients at different sites. Based on the results presented in Figure~\ref{fig:eval-leader-all}, we make three important observations: (1)~In all evaluated architectures the response times to a major degree depend on a client's geographic location. For \bft and \hft, clients in Virginia for example benefit from the fact that their local replica~(cluster) experiences comparably short round-trip times when communicating with its counterparts in Oregon and Ireland. In particular, this results in low latency when the Virginia replica (cluster) acts as leader of the wide-area consensus protocol and is able to reach a quorum together with these two other sites. In \system, clients in Virginia also observe low write latency, but for a different reason. Here, the fact that the agreement group resides in the same region as the clients' local execution group allows clients in Virginia to achieve response times of as low as 13~milliseconds. (2)~For each client location, \system \mbox{provides} significantly lower latency than \bft~(up to 95\,\%) and \hft~(up~to~94\,\%). This is a direct consequence of the fact that in contrast to the other two system architectures \system does not execute a full-fledged replication protocol over wide-area links. Instead, a write request only has to wait for two wide-area hops: from a client's local execution group to the agreement group and back. The distribution of the ordered write request to other execution groups is handled by the agreement group and thus does not require execution groups to explicitly wait for each other. That is, when an execution replica in \system receives an \textsc{Execute} for a write from the agreement group, the replica can immediately process the operation and return a reply to the client. (3)~The response times of \bft and \hft vary considerably depending on the position of the current leader of the wide-area consensus protocol. \hft clients in Ireland, for example, experience a 53\,\% higher latency when the leader is positioned in Tokyo compared to when the leader role is assigned to Virginia. In contrast, the specific location of the agreement-group leader in \system only has a negligible effect on overall response times due \linebreak to all agreement replicas residing in the same region, resulting in stable response times even across leader changes.


\begin{figure}
	\vspace{-2mm}
	\subfloat[Strongly consistent reads\label{fig:eval-readonly-strong}] {
		\includegraphics[clip, trim=0 1mm 0 0]{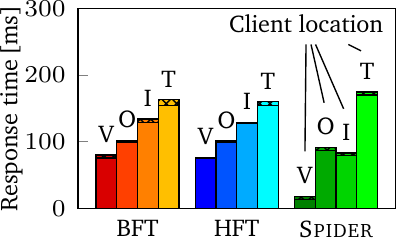}
	}
	\subfloat[Weakly consistent reads\label{fig:eval-readonly-weak}] {
		\includegraphics[clip, trim=0 1mm 0 0]{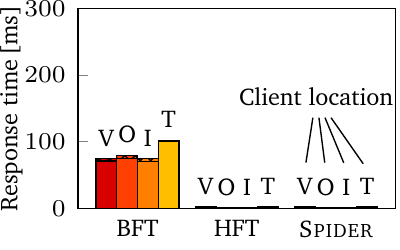}
	}
	\caption{50th~(\raisebox{-.3mm}{\protect\tikz \protect\node[draw, minimum width=7, minimum height=7] {};}) and 90th~(\raisebox{-.3mm}{\protect\tikz \protect\node[draw, minimum width=7, minimum height=7, postaction={pattern=crosshatch, pattern color=black}] {};}) percentiles of read latencies.}
	\label{fig:eval-readonly}
\end{figure}

\headline{Reads}
%
In our second experiment, we compare the evaluated architectures regarding the performance of their individual (fast-)paths for read operations with different consistency guarantees. As the results in Figure~\ref{fig:eval-readonly} show, response times of strongly consistent reads in \system display a similar pattern as writes due to following the same path through the system. For clients in Tokyo, this leads to slightly higher response times compared with \bft and \hft, which in this case benefit from directly querying replicas without intermediaries in between. For all other client locations, \system's approach, which only requires waiting for one wide-area round trip from a client's execution group to the agreement group and back, enables lower latency than provided by \bft and \hft. With regard to weakly consistent reads, both \hft and \system achieve response times of 2~milliseconds or less, as these operations can be entirely handled by replicas in a client's vicinity and therefore do not require wide-area communication as in \bft.

\headline{Modularity Impact}
%
In our third experiment, we quantify the impact of our decision to design \system as a modular architecture that separates agreement from execution and consists of loosely coupled replica groups connected via \channel{}s. We create two variants of \system where (1)~the agreement group also executes requests and is the only group in the system~(\systemze) and (2)~there is only one execution group that is co-located with the agreement group in Virginia~(\systemoe). While, \systemze allows us to study \system without \channel and externalized execution, based on \systemoe we can assess the influence of an \channel without wide-area delays. Our results show that when clients access \systemze and \systemoe from different sites, response times are dominated by the wide-area communication between clients and replicas. Thus, the modularization overhead is small and adds less than 14~milliseconds~(see Figure~\ref{fig:eval-leader-spider}).


\begin{figure}[b!]
	\subfloat[Overall latency (200-byte writes)\label{fig:eval-leader-spider}] {
		\includegraphics{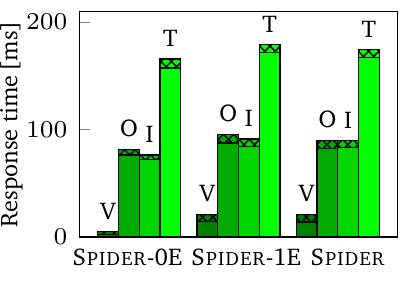}
	}
	\subfloat[Throughput\label{fig:eval-channels-lat}] {
		\includegraphics{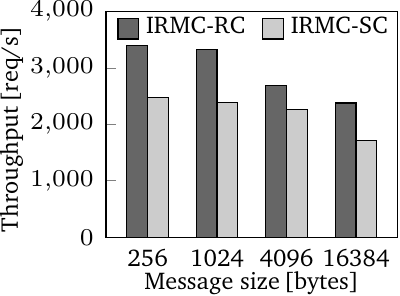}
	}\\
	\subfloat[CPU usage\label{fig:eval-channels-cpu}] {
		\includegraphics{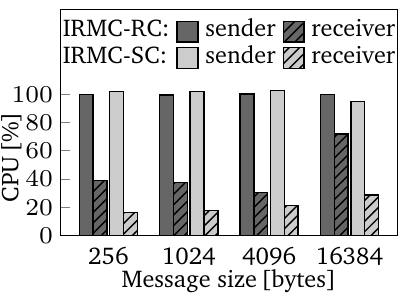}
	}
	\subfloat[Network usage\label{fig:eval-channels-net}] {
		\includegraphics{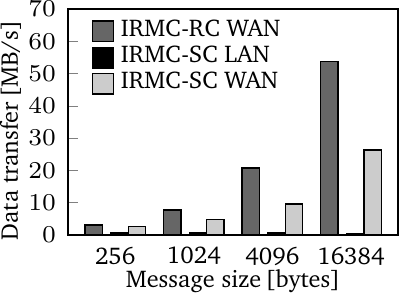}
	}
	\caption{Performance and resource usage of \channel{}s.}
	\label{fig:eval-channels}
\end{figure}

\headline{\channel{} Implementations}
%
In our fourth experiment, we evaluate the two \channel variants presented in Section~\ref{sec:implementations} by establishing a channel of each type between Virginia and Tokyo and submitting messages of different sizes. The comparison of results in Figures~\ref{fig:eval-channels-lat}--\ref{fig:eval-channels-net} confirm the two implementations to have individual characteristics. Without the need to verify signatures for \textsc{Certificate} messages, \channela sender endpoints require less CPU resources per message and therefore enable \channela{}s to achieve a higher maximum throughput. On the other hand, forwarding only one wide-area message per receiver endpoint \channelb{}s significantly reduce the amount of data transferred over long-distance links, thereby saving costs in public-cloud environments.

\headline{Adaptability}
%
In our fifth experiment, we evaluate the write and read performance new clients experience when they join the system at an additional location. For this purpose, we start with our usual setting and after 80~seconds launch 50~clients in the EC2 region Sao Paulo. Once running, the new clients in \bft and \hft issue their requests to existing replicas, while for \system they contact an additional execution group also set up in Sao Paulo. Involving more client sites than replica sites in \bft and \hft, the setting in this experiment represents a typical use-case scenario for weighted-voting approaches~(see Section~\ref{sec:background-approaches}). We therefore repeat the experiment with a fourth system~(\bftwv) that extends \bft with weighted voting and comprises a replica at each of the five client locations. As required by weighted voting, two of the five replicas are assigned higher weights in the consensus protocol. Specifically, these are the replicas in Virginia and Oregon because this weight distribution achieves the best performance in our evaluation scenario. Figure~\ref{fig:eval-newloc-avgs} presents the results of this experiment showing the average response times observed across all active client sites. To save space, we omit the results for strongly consistent reads as they show a similar picture as writes. For each system, we evaluate different leader locations, but for clarity Figure~\ref{fig:eval-newloc-avgs} only reports the results of the configurations achieving the lowest response times for each system.


\begin{figure}
	\vspace{-2mm}
	\subfloat[Writes\label{fig:eval-newloc-avg-write}] {
		\includegraphics{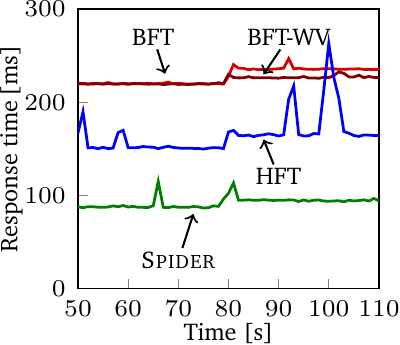}
	}
	\subfloat[Weakly consistent reads\label{fig:eval-newloc-avg-read}] {
		\includegraphics{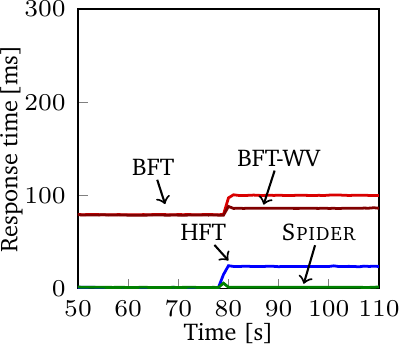}
	}
	\caption{Impact of a new client site on overall latency.}
	\label{fig:eval-newloc-avgs}
\end{figure}


Figure~\ref{fig:eval-newloc-avg-write} shows that the overall write latency increases for all evaluated architectures once the clients in Sao Paulo join the system. This is a consequence of the fact that due to its geographic location EC2's Sao Paulo region has comparably high transmission times to other cloud regions. Clients in Sao Paulo therefore observe response times between about 124~milliseconds (\system) and about 298~milliseconds (\bft), which alone causes the measurable jumps in the overall write latency averages; the response times for clients in other regions remain unaffected.
Interestingly, \bft and \bftwv achieve similar write performance throughout the experiment and thereby confirm that weighted voting does not automatically improve response times. This is only true when the additional replica is located at a site that is better connected than the existing ones and therefore enables the wide-area consensus protocol to reach faster quorums. In the setting evaluated here, \bft's typical consensus quorum is based on the votes of the replicas in Virginia, Oregon, and Ireland and therefore already provides better performance than any combination that includes the replica in Sao Paulo.

As shown in Figure~\ref{fig:eval-newloc-avg-read}, of the evaluated architectures \system is the only one that allows the new clients in Sao Paulo to perform weakly consistent reads with low latency. While all other systems require the clients in Sao Paulo to read from at least one remote replica and consequently experience overall read-latency increases of up to 23~milliseconds, \system makes it possible to introduce an execution group in the new region to efficiently handle the reads of local clients.


\begin{figure}
	\includegraphics{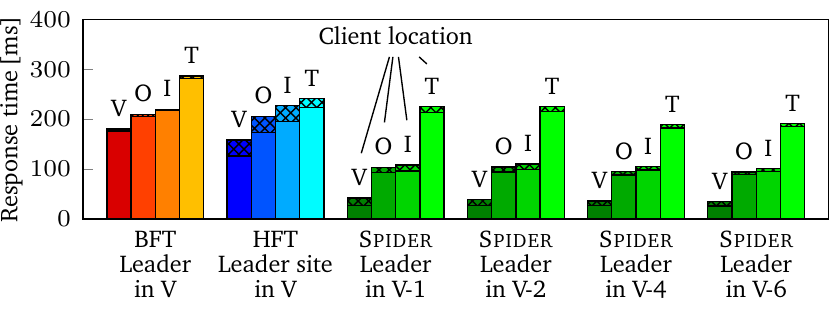}%
	\vspace{-0.5mm}
	\caption{50th~(\raisebox{-.3mm}{\protect\tikz \protect\node[draw, minimum width=7, minimum height=7] {};}) and 90th~(\raisebox{-.3mm}{\protect\tikz \protect\node[draw, minimum width=7, minimum height=7, postaction={pattern=crosshatch, pattern color=black}] {};}) percentiles of write latencies for different client sites when tolerating $f=2$ faults.}
	\label{fig:eval-leader-all-f2}
\end{figure}

\headline{Tolerating Two Faults}
%
In our final experiment, we examine write latencies for settings that are configured to tolerate $f=2$ faults in each agreement and execution group.
We place the additional replicas into nearby EC2 regions (Ohio, California, London, Seoul) to make use of further fault domains.
The results in Figure~\ref{fig:eval-leader-all-f2} show that due to increased communication latency within groups both \hft and \system see a moderate increase of response times by up to 46 milliseconds compared with the $f=1$~setting, with \system still providing significantly lower latency than \bft and \hft.

\section{Related Work}
\label{sec:related}


\headline{Adaptive BFT Replication}
%
\system is not the first work to argue that it is crucial to enable BFT systems to dynamically adapt to changing conditions. Abstract~\cite{aublin15next} makes it possible to substitute the consensus protocol of a BFT system at runtime, for example, switching to a more robust algorithm once a replica failure has been suspected or detected. CheapBFT~\cite{kapitza12cheapbft} and ReBFT~\cite{distler16resource} follow a similar idea by comprising two different agreement protocols~(one for the normal case and one for fault handling) of which only one is active at a time. In contrast, the reconfiguration mechanism developed by Carvalho et al.~\cite{carvalho18dynamic} for BFT-SMaRt~\cite{bessani14state} temporarily runs two consensus algorithms in parallel to achieve a more efficient switch. As a result of \system's modularity, integrating support for the dynamic substitution of the agreement protocol is feasible and the use of customized protocols designed for high performance~\cite{martin06fast,behl15consensus} or strong resilience~\cite{amir10prime,aublin13rbft} \linebreak would not require modifications to execution groups.

Other works allow BFT systems to dynamically change specific protocol properties at runtime. Depending on the current workload, de S\'{a} et al.~\cite{desa13adaptive}, for example, vary the parameters deciding how many requests are batched together and ordered within a single consensus instance. Berger~et~al.~\cite{berger19resilient} rely on a weighted voting scheme~\cite{sousa15separating} and by changing weights adjust the individual impact a replica has on the outcome of the agreement process. While adapting the batch size can be a measure to improve the performance of \system's agreement group, the use of a weighted voting scheme in general is only effective if (1)~a system contains more than the minimum number of agreement replicas and (2)~agreement replicas are located in different geographic regions; both of these points do not apply to \system.

\headline{Communication Between Replica Groups}
%
Amir et al. proposed BLinks~\cite{amir07customizable} as a means to send the totally ordered outputs of one replicated state machine to another replicated state machine that uses them as inputs. Unfortunately, the requirement of a channel-wide total order prevents \system from relying on BLinks as execution replicas do not necessarily have to use the same order when submitting new requests to the agreement group via their request channels. \channel{}s, on the other hand, do not have this restriction and furthermore comprise a built-in flow-control mechanism that represents the basis of \system's global flow control. However, transmitting only a single message between one dedicated sender and one dedicated receiver, BLinks may be used as a template for an \channel{} implementation \linebreak that involves even fewer wide-area messages than \channelb.

\headline{Partitioned Agreement Groups}
%
GeoBFT~\cite{gupta20resilientdb} makes use of replica groups located in different regions, which each run a full agreement protocol.
In each protocol round every group orders a request yielding a request certificate, which is shared with all other groups.
Afterwards the requests are merged into a single total order and are executed.
This requires all groups to distribute a certificate in every round, even if it just contains a placeholder request, and thus all groups must work at the same time to make progress.
In \system this requirement only applies to the agreement group whereas a limited number of slow execution groups can be skipped.
Sharing a request ordering certificate in GeoBFT works by having the leader replica forward it to $f+1$~replicas of each group, which then further forward the certificate within their group.
This request distribution scheme represents a middle ground between BLinks and \channelb{}s.
Unlike \channel{}s it is coupled with the agreement protocol and has to remotely trigger a view-change to replace a leader replica which does not complete the request distribution in a timely manner.

\headline{Efficient Client Communication}
%
In most BFT systems, clients need to receive replies from different replicas in order to prove a result correct~\cite{castro99practical}, which in geo-replicated settings can significantly increase the number of messages exchanged over wide-area links. SBFT~\cite{gueta19sbft} addresses this problem by adding a protocol phase that aggregates request acknowledgements of multiple replicas into a single message to the client. In Troxy~\cite{li18troxy}, a client also has to wait for a single reply only, because the reply voter is hosted inside a trusted domain at the server side and forwards its decisions to the client through a secure channel. In \system, clients are typically located in the same region as an execution group allowing for communication over short-distance links. For scenarios in which this is not the case, it would be possible to extend \system to use one of the approaches discussed above.

\headline{Leader Selection in Geo-replicated Systems}
%
Multiple authors have underlined the impact that the leader-replica location has on response times, independent of the fault model, and presented solutions to select the leader in a way that minimizes overall latency~\mbox{\cite{sousa15separating,liu17leader,eischer18latency}}. Other agreement-based systems do not need to determine a fixed leader as they continuously rotate the leader role among replicas~\cite{veronese09spin,mao09towards,veronese10ebawa,mao08mencius,milosevic13bounded}. As our experiments show, with agreement replicas residing in different availability zones of the same cloud region, the specific location of the consensus leader in \system only has a negligible effect on response times. Consequently, \system achieves low and stable latency without requiring means to dynamically select or rotate the leader.

\headline{Crash-tolerant Wide-Area Replication}
%
Several works addressed the efficiency of geo-replication in systems that unlike \system solely tolerate crashes, not Byzantine failures. In Pileus~\cite{terry13consistency}, for example, writes are only handled by a subset of replicas that first order and execute them, and then bring all other replicas up to date by transferring state changes. P-Store~\cite{schiper10pstore} improves efficiency in wide-area environments by performing partial replication, thereby freeing a site from the need to receive and process all updates. Clock-RSM~\cite{du14clock} establishes a total order on requests by exploiting the timestamps of physical clocks and without requiring a dedicated leader replica. EPaxos~\cite{moraru13there} in contrast does not rely on a total request order, but only orders those requests that interfere with each other due to accessing the same state parts.


\section{Conclusion}

The cloud-based \system system architecture models a BFT system as a collection of loosely coupled replica groups that can be flexibly distributed in geo-replicated environments. In contrast to existing approaches, \system does not require the execution of complex multi-phase protocols over wide-area links, but instead performs essential tasks such as consensus, leader election, and checkpointing across replicas residing in the same region. Our experiments show that this approach enables \system to achieve low and stable response times.


\section*{Acknowledgments}

This work was partially supported by the German Research Council (DFG) under grant no. DI 2097/1-2~(``REFIT'').


\bibliographystyle{ACM-Reference-Format}
\bibliography{bib/paper-long}


\begin{thebibliography}{55}


\ifx \showCODEN    \undefined \def \showCODEN     #1{\unskip}     \fi
\ifx \showDOI      \undefined \def \showDOI       #1{#1}\fi
\ifx \showISBNx    \undefined \def \showISBNx     #1{\unskip}     \fi
\ifx \showISBNxiii \undefined \def \showISBNxiii  #1{\unskip}     \fi
\ifx \showISSN     \undefined \def \showISSN      #1{\unskip}     \fi
\ifx \showLCCN     \undefined \def \showLCCN      #1{\unskip}     \fi
\ifx \shownote     \undefined \def \shownote      #1{#1}          \fi
\ifx \showarticletitle \undefined \def \showarticletitle #1{#1}   \fi
\ifx \showURL      \undefined \def \showURL       {\relax}        \fi
\providecommand\bibfield[2]{#2}
\providecommand\bibinfo[2]{#2}
\providecommand\natexlab[1]{#1}
\providecommand\showeprint[2][]{arXiv:#2}

\bibitem[\protect\citeauthoryear{Abu-Libdeh, Princehouse, and
  Weatherspoon}{Abu-Libdeh et~al\mbox{.}}{2010}]%
        {abu-libdeh10racs}
\bibfield{author}{\bibinfo{person}{Hussam Abu-Libdeh}, \bibinfo{person}{Lonnie
  Princehouse}, {and} \bibinfo{person}{Hakim Weatherspoon}.}
  \bibinfo{year}{2010}\natexlab{}.
\newblock \showarticletitle{{RACS}: A Case for Cloud Storage Diversity}. In
  \bibinfo{booktitle}{\emph{Proceedings of the 1st Symposium on Cloud
  Computing~(SoCC\,'10)}}. \bibinfo{pages}{229--240}.
\newblock


\bibitem[\protect\citeauthoryear{{Amazon EC2}}{{Amazon EC2}}{2020}]%
        {ec2-regions}
\bibfield{author}{\bibinfo{person}{{Amazon EC2}}.}
  \bibinfo{year}{2020}\natexlab{}.
\newblock \bibinfo{title}{Regions and Availability Zones}.
\newblock
  \bibinfo{howpublished}{\url{https://docs.aws.amazon.com/AWSEC2/latest/UserGuide/using-regions-availability-zones.html}}.
\newblock


\bibitem[\protect\citeauthoryear{{Amazon Web Services}}{{Amazon Web
  Services}}{2011}]%
        {aws11incident}
\bibfield{author}{\bibinfo{person}{{Amazon Web Services}}.}
  \bibinfo{year}{2011}\natexlab{}.
\newblock \bibinfo{title}{Summary of the {Amazon EC2} and {Amazon RDS} Service
  Disruption in the {US East} Region}.
\newblock \bibinfo{howpublished}{\url{https://aws.amazon.com/message/65648/}}.
\newblock


\bibitem[\protect\citeauthoryear{Amir, Coan, Kirsch, and Lane}{Amir
  et~al\mbox{.}}{2007}]%
        {amir07customizable}
\bibfield{author}{\bibinfo{person}{Yair Amir}, \bibinfo{person}{Brian Coan},
  \bibinfo{person}{Jonathan Kirsch}, {and} \bibinfo{person}{John Lane}.}
  \bibinfo{year}{2007}\natexlab{}.
\newblock \showarticletitle{Customizable Fault Tolerance for Wide-Area
  Replication}. In \bibinfo{booktitle}{\emph{Proceedings of the 26th
  International Symposium on Reliable Distributed Systems~(SRDS\,'07)}}.
  \bibinfo{pages}{65--82}.
\newblock


\bibitem[\protect\citeauthoryear{Amir, Coan, Kirsch, and Lane}{Amir
  et~al\mbox{.}}{2010a}]%
        {amir10prime}
\bibfield{author}{\bibinfo{person}{Yair Amir}, \bibinfo{person}{Brian Coan},
  \bibinfo{person}{Jonathan Kirsch}, {and} \bibinfo{person}{John Lane}.}
  \bibinfo{year}{2010}\natexlab{a}.
\newblock \showarticletitle{Prime: {Byzantine} Replication Under Attack}.
\newblock \bibinfo{journal}{\emph{IEEE Transactions on Dependable and Secure
  Computing}} \bibinfo{volume}{8}, \bibinfo{number}{4} (\bibinfo{year}{2010}),
  \bibinfo{pages}{564--577}.
\newblock


\bibitem[\protect\citeauthoryear{Amir, Danilov, Dolev, Kirsch, Lane,
  Nita-Rotaru, Olsen, and Zage}{Amir et~al\mbox{.}}{2010b}]%
        {amir10steward}
\bibfield{author}{\bibinfo{person}{Yair Amir}, \bibinfo{person}{Claudiu
  Danilov}, \bibinfo{person}{Danny Dolev}, \bibinfo{person}{Jonathan Kirsch},
  \bibinfo{person}{John Lane}, \bibinfo{person}{Cristina Nita-Rotaru},
  \bibinfo{person}{Josh Olsen}, {and} \bibinfo{person}{David Zage}.}
  \bibinfo{year}{2010}\natexlab{b}.
\newblock \showarticletitle{Steward: Scaling {Byzantine} Fault-Tolerant
  Replication to Wide Area Networks}.
\newblock \bibinfo{journal}{\emph{IEEE Transactions on Dependable and Secure
  Computing}} \bibinfo{volume}{7}, \bibinfo{number}{1} (\bibinfo{year}{2010}),
  \bibinfo{pages}{80--93}.
\newblock


\bibitem[\protect\citeauthoryear{Aublin, Guerraoui, Kne\v{z}evi\'{c},
  Qu{\'e}ma, and Vukoli\'{c}}{Aublin et~al\mbox{.}}{2015}]%
        {aublin15next}
\bibfield{author}{\bibinfo{person}{Pierre-Louis Aublin},
  \bibinfo{person}{Rachid Guerraoui}, \bibinfo{person}{Nikola
  Kne\v{z}evi\'{c}}, \bibinfo{person}{Vivien Qu{\'e}ma}, {and}
  \bibinfo{person}{Marko Vukoli\'{c}}.} \bibinfo{year}{2015}\natexlab{}.
\newblock \showarticletitle{The Next 700 {BFT} Protocols}.
\newblock \bibinfo{journal}{\emph{ACM Transactions on Computer Systems}}
  \bibinfo{volume}{32}, \bibinfo{number}{4} (\bibinfo{year}{2015}),
  \bibinfo{pages}{12:1--12:45}.
\newblock


\bibitem[\protect\citeauthoryear{Aublin, Mokhtar, and Qu{\'e}ma}{Aublin
  et~al\mbox{.}}{2013}]%
        {aublin13rbft}
\bibfield{author}{\bibinfo{person}{Pierre-Louis Aublin},
  \bibinfo{person}{Sonia~Ben Mokhtar}, {and} \bibinfo{person}{Vivien
  Qu{\'e}ma}.} \bibinfo{year}{2013}\natexlab{}.
\newblock \showarticletitle{{RBFT}: Redundant {Byzantine} Fault Tolerance}. In
  \bibinfo{booktitle}{\emph{Proceedings of the 33rd International Conference on
  Distributed Computing Systems~(ICDCS\,'13)}}. \bibinfo{pages}{297--306}.
\newblock


\bibitem[\protect\citeauthoryear{Babay, Schultz, Tantillo, Beckley, Jordan,
  Ruddell, Jordan, and Amir}{Babay et~al\mbox{.}}{2019}]%
        {babay19deploying}
\bibfield{author}{\bibinfo{person}{Amy Babay}, \bibinfo{person}{John Schultz},
  \bibinfo{person}{Thomas Tantillo}, \bibinfo{person}{Samuel Beckley},
  \bibinfo{person}{Eamon Jordan}, \bibinfo{person}{Kevin Ruddell},
  \bibinfo{person}{Kevin Jordan}, {and} \bibinfo{person}{Yair Amir}.}
  \bibinfo{year}{2019}\natexlab{}.
\newblock \showarticletitle{Deploying Intrusion-Tolerant {SCADA} for the Power
  Grid}. In \bibinfo{booktitle}{\emph{Proceedings of the 49th International
  Conference on Dependable Systems and Networks~(DSN\,'19)}}.
  \bibinfo{pages}{328--335}.
\newblock


\bibitem[\protect\citeauthoryear{Babay, Tantillo, Aron, Platania, and
  Amir}{Babay et~al\mbox{.}}{2018}]%
        {babay18network}
\bibfield{author}{\bibinfo{person}{Amy Babay}, \bibinfo{person}{Thomas
  Tantillo}, \bibinfo{person}{Trevor Aron}, \bibinfo{person}{Marco Platania},
  {and} \bibinfo{person}{Yair Amir}.} \bibinfo{year}{2018}\natexlab{}.
\newblock \showarticletitle{Network-attack-resilient Intrusion-tolerant {SCADA}
  for the Power Grid}. In \bibinfo{booktitle}{\emph{Proceedings of the 48th
  International Conference on Dependable Systems and Networks~(DSN\,'18)}}.
  \bibinfo{pages}{255--266}.
\newblock


\bibitem[\protect\citeauthoryear{Behl, Distler, and Kapitza}{Behl
  et~al\mbox{.}}{2015}]%
        {behl15consensus}
\bibfield{author}{\bibinfo{person}{Johannes Behl}, \bibinfo{person}{Tobias
  Distler}, {and} \bibinfo{person}{R\"{u}diger Kapitza}.}
  \bibinfo{year}{2015}\natexlab{}.
\newblock \showarticletitle{Consensus-Oriented Parallelization: How to Earn
  Your First Million}. In \bibinfo{booktitle}{\emph{Proceedings of the 16th
  Middleware Conference~(Middleware\,'15)}}. \bibinfo{pages}{173--184}.
\newblock


\bibitem[\protect\citeauthoryear{Berger, Reiser, Sousa, and Bessani}{Berger
  et~al\mbox{.}}{2019}]%
        {berger19resilient}
\bibfield{author}{\bibinfo{person}{Christian Berger}, \bibinfo{person}{Hans~P.
  Reiser}, \bibinfo{person}{Jo{\~a}o Sousa}, {and} \bibinfo{person}{Alysson
  Bessani}.} \bibinfo{year}{2019}\natexlab{}.
\newblock \showarticletitle{Resilient Wide-Area {Byzantine} Consensus Using
  Adaptive Weighted Replication}. In \bibinfo{booktitle}{\emph{Proceedings of
  the 38th International Symposium on Reliable Distributed
  Systems~(SRDS\,'19)}}.
\newblock


\bibitem[\protect\citeauthoryear{Bernstein, Hadzilacos, and Goodman}{Bernstein
  et~al\mbox{.}}{1987}]%
        {bernstein87concurrency}
\bibfield{author}{\bibinfo{person}{Philip~A. Bernstein},
  \bibinfo{person}{Vassco Hadzilacos}, {and} \bibinfo{person}{Nathan Goodman}.}
  \bibinfo{year}{1987}\natexlab{}.
\newblock \bibinfo{booktitle}{\emph{Concurrency Control and Recovery in
  Database Systems}}.
\newblock \bibinfo{publisher}{Addison-Wesley Longman Publishing Co., Inc.},
  \bibinfo{address}{USA}.
\newblock
\showISBNx{0201107155}


\bibitem[\protect\citeauthoryear{Bessani, Correia, Quaresma, Andr{\'e}, and
  Sousa}{Bessani et~al\mbox{.}}{2013}]%
        {bessani13depsky}
\bibfield{author}{\bibinfo{person}{Alysson Bessani}, \bibinfo{person}{Miguel
  Correia}, \bibinfo{person}{Bruno Quaresma}, \bibinfo{person}{Fernando
  Andr{\'e}}, {and} \bibinfo{person}{Paulo Sousa}.}
  \bibinfo{year}{2013}\natexlab{}.
\newblock \showarticletitle{{DepSky}: Dependable and Secure Storage in a
  Cloud-of-Clouds}.
\newblock \bibinfo{journal}{\emph{ACM Transactions on Storage~(TOS)}}
  \bibinfo{volume}{9}, \bibinfo{number}{4} (\bibinfo{year}{2013}),
  \bibinfo{pages}{12:1--12:33}.
\newblock


\bibitem[\protect\citeauthoryear{Bessani, Sousa, and Alchieri}{Bessani
  et~al\mbox{.}}{2014}]%
        {bessani14state}
\bibfield{author}{\bibinfo{person}{Alysson Bessani}, \bibinfo{person}{Jo{\~a}o
  Sousa}, {and} \bibinfo{person}{Eduardo E.~P. Alchieri}.}
  \bibinfo{year}{2014}\natexlab{}.
\newblock \showarticletitle{State Machine Replication for the Masses with
  {BFT-SMaRt}}. In \bibinfo{booktitle}{\emph{Proceedings of the 44th
  International Conference on Dependable Systems and Networks~(DSN\,'14)}}.
  \bibinfo{pages}{355--362}.
\newblock


\bibitem[\protect\citeauthoryear{Bessani, Sousa, Correia, Neves, and
  Ver\'{i}ssimo}{Bessani et~al\mbox{.}}{2008}]%
        {bessani08crutial}
\bibfield{author}{\bibinfo{person}{Alysson~Neves Bessani},
  \bibinfo{person}{Paulo Sousa}, \bibinfo{person}{Miguel Correia},
  \bibinfo{person}{Nuno~Ferreira Neves}, {and} \bibinfo{person}{Paulo
  Ver\'{i}ssimo}.} \bibinfo{year}{2008}\natexlab{}.
\newblock \showarticletitle{The {CRUTIAL} Way of Critical Infrastructure
  Protection}.
\newblock \bibinfo{journal}{\emph{IEEE Security \& Privacy}}
  \bibinfo{volume}{6}, \bibinfo{number}{6} (\bibinfo{year}{2008}),
  \bibinfo{pages}{44--51}.
\newblock


\bibitem[\protect\citeauthoryear{Carvalho, Porto, Rodrigues, Bravo, and
  Bessani}{Carvalho et~al\mbox{.}}{2018}]%
        {carvalho18dynamic}
\bibfield{author}{\bibinfo{person}{Carlos Carvalho}, \bibinfo{person}{Daniel
  Porto}, \bibinfo{person}{Lu{\'\i}s Rodrigues}, \bibinfo{person}{Manuel
  Bravo}, {and} \bibinfo{person}{Alysson Bessani}.}
  \bibinfo{year}{2018}\natexlab{}.
\newblock \showarticletitle{Dynamic Adaptation of {Byzantine} Consensus
  Protocols}. In \bibinfo{booktitle}{\emph{Proceedings of the 33rd Annual ACM
  Symposium on Applied Computing~(SAC\,'18)}}. \bibinfo{pages}{411--418}.
\newblock


\bibitem[\protect\citeauthoryear{Castro and Liskov}{Castro and Liskov}{1999}]%
        {castro99practical}
\bibfield{author}{\bibinfo{person}{Miguel Castro} {and}
  \bibinfo{person}{Barbara Liskov}.} \bibinfo{year}{1999}\natexlab{}.
\newblock \showarticletitle{Practical {Byzantine} Fault Tolerance}. In
  \bibinfo{booktitle}{\emph{Proceedings of the 3rd Symposium on Operating
  Systems Design and Implementation~(OSDI\,'99)}}. \bibinfo{pages}{173--186}.
\newblock


\bibitem[\protect\citeauthoryear{Clement, Kapritsos, Lee, Wang, Alvisi, Dahlin,
  and Riche}{Clement et~al\mbox{.}}{2009}]%
        {clement09upright}
\bibfield{author}{\bibinfo{person}{Allen Clement}, \bibinfo{person}{Manos
  Kapritsos}, \bibinfo{person}{Sangmin Lee}, \bibinfo{person}{Yang Wang},
  \bibinfo{person}{Lorenzo Alvisi}, \bibinfo{person}{Mike Dahlin}, {and}
  \bibinfo{person}{Taylor Riche}.} \bibinfo{year}{2009}\natexlab{}.
\newblock \showarticletitle{{UpRight} Cluster Services}. In
  \bibinfo{booktitle}{\emph{Proceedings of the 22nd Symposium on Operating
  Systems Principles~(SOSP\,'09)}}. \bibinfo{pages}{277--290}.
\newblock


\bibitem[\protect\citeauthoryear{de~S\'{a}, Silva~Freitas, and
  de~Ara\'{u}jo~Mac\^{e}do}{de~S\'{a} et~al\mbox{.}}{2013}]%
        {desa13adaptive}
\bibfield{author}{\bibinfo{person}{Al\'{\i}rio~Santos de S\'{a}},
  \bibinfo{person}{Allan~Edgard Silva~Freitas}, {and}
  \bibinfo{person}{Raimundo~Jos{\'e} de Ara\'{u}jo~Mac\^{e}do}.}
  \bibinfo{year}{2013}\natexlab{}.
\newblock \showarticletitle{Adaptive Request Batching for {Byzantine}
  Replication}.
\newblock \bibinfo{journal}{\emph{SIGOPS Operating System Review}}
  \bibinfo{volume}{47}, \bibinfo{number}{1} (\bibinfo{year}{2013}),
  \bibinfo{pages}{35--42}.
\newblock


\bibitem[\protect\citeauthoryear{Distler, Cachin, and Kapitza}{Distler
  et~al\mbox{.}}{2016}]%
        {distler16resource}
\bibfield{author}{\bibinfo{person}{Tobias Distler}, \bibinfo{person}{Christian
  Cachin}, {and} \bibinfo{person}{R\"{u}diger Kapitza}.}
  \bibinfo{year}{2016}\natexlab{}.
\newblock \showarticletitle{Resource-efficient {Byzantine} Fault Tolerance}.
\newblock \bibinfo{journal}{\emph{IEEE Trans. Comput.}} \bibinfo{volume}{65},
  \bibinfo{number}{9} (\bibinfo{year}{2016}), \bibinfo{pages}{2807--2819}.
\newblock


\bibitem[\protect\citeauthoryear{Du, Sciascia, Elnikety, Zwaenepoel, and
  Pedone}{Du et~al\mbox{.}}{2014}]%
        {du14clock}
\bibfield{author}{\bibinfo{person}{Jiaqing Du}, \bibinfo{person}{Daniele
  Sciascia}, \bibinfo{person}{Sameh Elnikety}, \bibinfo{person}{Willy
  Zwaenepoel}, {and} \bibinfo{person}{Fernando Pedone}.}
  \bibinfo{year}{2014}\natexlab{}.
\newblock \showarticletitle{{Clock-RSM}: Low-Latency Inter-Datacenter State
  Machine Replication Using Loosely Synchronized Physical Clocks}. In
  \bibinfo{booktitle}{\emph{Proceedings of the 44th International Conference on
  Dependable Systems Networks~(DSN\,'14)}}. \bibinfo{pages}{343--354}.
\newblock


\bibitem[\protect\citeauthoryear{Dwork, Lynch, and Stockmeyer}{Dwork
  et~al\mbox{.}}{1988}]%
        {dwork88consensus}
\bibfield{author}{\bibinfo{person}{Cynthia Dwork}, \bibinfo{person}{Nancy
  Lynch}, {and} \bibinfo{person}{Larry Stockmeyer}.}
  \bibinfo{year}{1988}\natexlab{}.
\newblock \showarticletitle{Consensus in the Presence of Partial Synchrony}.
\newblock \bibinfo{journal}{\emph{J. ACM}} \bibinfo{volume}{35},
  \bibinfo{number}{2} (\bibinfo{year}{1988}), \bibinfo{pages}{288–323}.
\newblock


\bibitem[\protect\citeauthoryear{Eischer, B\"{u}ttner, and Distler}{Eischer
  et~al\mbox{.}}{2019}]%
        {eischer19deterministic}
\bibfield{author}{\bibinfo{person}{Michael Eischer}, \bibinfo{person}{Markus
  B\"{u}ttner}, {and} \bibinfo{person}{Tobias Distler}.}
  \bibinfo{year}{2019}\natexlab{}.
\newblock \showarticletitle{Deterministic Fuzzy Checkpoints}. In
  \bibinfo{booktitle}{\emph{Proceedings of the 38th International Symposium on
  Reliable Distributed Systems~(SRDS\,'19)}}.
\newblock


\bibitem[\protect\citeauthoryear{Eischer and Distler}{Eischer and
  Distler}{2018}]%
        {eischer18latency}
\bibfield{author}{\bibinfo{person}{Michael Eischer} {and}
  \bibinfo{person}{Tobias Distler}.} \bibinfo{year}{2018}\natexlab{}.
\newblock \showarticletitle{Latency-Aware Leader Selection for Geo-Replicated
  {Byzantine} Fault-Tolerant Systems}. In \bibinfo{booktitle}{\emph{Proceedings
  of the 1st Workshop on Byzantine Consensus and Resilient
  Blockchains~(BCRB\,'18)}}. \bibinfo{pages}{140--145}.
\newblock


\bibitem[\protect\citeauthoryear{Eischer and Distler}{Eischer and
  Distler}{2019}]%
        {eischer19scalable}
\bibfield{author}{\bibinfo{person}{Michael Eischer} {and}
  \bibinfo{person}{Tobias Distler}.} \bibinfo{year}{2019}\natexlab{}.
\newblock \showarticletitle{Scalable {Byzantine} Fault-tolerant State-Machine
  Replication on Heterogeneous Servers}.
\newblock \bibinfo{journal}{\emph{Computing}} \bibinfo{volume}{101},
  \bibinfo{number}{2} (\bibinfo{year}{2019}), \bibinfo{pages}{97--118}.
\newblock


\bibitem[\protect\citeauthoryear{Garcia, Neves, and Bessani}{Garcia
  et~al\mbox{.}}{2016}]%
        {garcia16sieveq}
\bibfield{author}{\bibinfo{person}{Miguel Garcia}, \bibinfo{person}{Nuno
  Neves}, {and} \bibinfo{person}{Alysson Bessani}.}
  \bibinfo{year}{2016}\natexlab{}.
\newblock \showarticletitle{{SieveQ}: A Layered {BFT} Protection System for
  Critical Services}.
\newblock \bibinfo{journal}{\emph{IEEE Transactions on Dependable and Secure
  Computing}} \bibinfo{volume}{15}, \bibinfo{number}{3} (\bibinfo{year}{2016}),
  \bibinfo{pages}{511--525}.
\newblock


\bibitem[\protect\citeauthoryear{{Google Compute Engine}}{{Google Compute
  Engine}}{2020}]%
        {gce-regions}
\bibfield{author}{\bibinfo{person}{{Google Compute Engine}}.}
  \bibinfo{year}{2020}\natexlab{}.
\newblock \bibinfo{title}{Regions and Zones}.
\newblock
  \bibinfo{howpublished}{\url{https://cloud.google.com/compute/docs/regions-zones/}}.
\newblock


\bibitem[\protect\citeauthoryear{Gueta, Abraham, Grossman, Malkhi, Pinkas,
  Reiter, Seredinschi, Tamir, and Tomescu}{Gueta et~al\mbox{.}}{2019}]%
        {gueta19sbft}
\bibfield{author}{\bibinfo{person}{Guy~Golan Gueta}, \bibinfo{person}{Ittai
  Abraham}, \bibinfo{person}{Shelly Grossman}, \bibinfo{person}{Dahlia Malkhi},
  \bibinfo{person}{Benny Pinkas}, \bibinfo{person}{Michael Reiter},
  \bibinfo{person}{Dragos-Adrian Seredinschi}, \bibinfo{person}{Orr Tamir},
  {and} \bibinfo{person}{Alin Tomescu}.} \bibinfo{year}{2019}\natexlab{}.
\newblock \showarticletitle{{SBFT}: A Scalable and Decentralized Trust
  Infrastructure}. In \bibinfo{booktitle}{\emph{Proceedings of the 49th
  International Conference on Dependable Systems and Networks~(DSN\,'19)}}.
  \bibinfo{pages}{568--580}.
\newblock


\bibitem[\protect\citeauthoryear{Gupta, Rahnama, Hellings, and Sadoghi}{Gupta
  et~al\mbox{.}}{2020}]%
        {gupta20resilientdb}
\bibfield{author}{\bibinfo{person}{Suyash Gupta}, \bibinfo{person}{Sajjad
  Rahnama}, \bibinfo{person}{Jelle Hellings}, {and} \bibinfo{person}{Mohammad
  Sadoghi}.} \bibinfo{year}{2020}\natexlab{}.
\newblock \showarticletitle{{ResilientDB}: Global Scale Resilient Blockchain
  Fabric}.
\newblock \bibinfo{journal}{\emph{Proceedings of the VLDB Endowment}}
  \bibinfo{volume}{13}, \bibinfo{number}{6} (\bibinfo{year}{2020}),
  \bibinfo{pages}{868–883}.
\newblock


\bibitem[\protect\citeauthoryear{Herlihy and Wing}{Herlihy and Wing}{1990}]%
        {herlihy90linearizability}
\bibfield{author}{\bibinfo{person}{Maurice~P. Herlihy} {and}
  \bibinfo{person}{Jeannette~M. Wing}.} \bibinfo{year}{1990}\natexlab{}.
\newblock \showarticletitle{Linearizability: A Correctness Condition for
  Concurrent Objects}.
\newblock \bibinfo{journal}{\emph{ACM Transactions on Programming Languages and
  Systems}} \bibinfo{volume}{12}, \bibinfo{number}{3} (\bibinfo{year}{1990}),
  \bibinfo{pages}{463–492}.
\newblock


\bibitem[\protect\citeauthoryear{Kapitza, Behl, Cachin, Distler, Kuhnle,
  Mohammadi, Schr\"{o}der-Preikschat, and Stengel}{Kapitza
  et~al\mbox{.}}{2012}]%
        {kapitza12cheapbft}
\bibfield{author}{\bibinfo{person}{R\"{u}diger Kapitza},
  \bibinfo{person}{Johannes Behl}, \bibinfo{person}{Christian Cachin},
  \bibinfo{person}{Tobias Distler}, \bibinfo{person}{Simon Kuhnle},
  \bibinfo{person}{Seyed~Vahid Mohammadi}, \bibinfo{person}{Wolfgang
  Schr\"{o}der-Preikschat}, {and} \bibinfo{person}{Klaus Stengel}.}
  \bibinfo{year}{2012}\natexlab{}.
\newblock \showarticletitle{{CheapBFT}: Resource-efficient {Byzantine} Fault
  Tolerance}. In \bibinfo{booktitle}{\emph{Proceedings of the 7th European
  Conference on Computer Systems~(EuroSys\,'12)}}. \bibinfo{pages}{295--308}.
\newblock


\bibitem[\protect\citeauthoryear{Lamport}{Lamport}{1998}]%
        {lamport98part}
\bibfield{author}{\bibinfo{person}{Leslie Lamport}.}
  \bibinfo{year}{1998}\natexlab{}.
\newblock \showarticletitle{The Part-Time Parliament}.
\newblock \bibinfo{journal}{\emph{ACM Transactions on Computer Systems}}
  \bibinfo{volume}{16}, \bibinfo{number}{2} (\bibinfo{year}{1998}),
  \bibinfo{pages}{133--169}.
\newblock


\bibitem[\protect\citeauthoryear{Li, Weichbrodt, Behl, Aublin, Distler, and
  Kapitza}{Li et~al\mbox{.}}{2018}]%
        {li18troxy}
\bibfield{author}{\bibinfo{person}{Bijun Li}, \bibinfo{person}{Nico
  Weichbrodt}, \bibinfo{person}{Johannes Behl}, \bibinfo{person}{Pierre-Louis
  Aublin}, \bibinfo{person}{Tobias Distler}, {and} \bibinfo{person}{R\"{u}diger
  Kapitza}.} \bibinfo{year}{2018}\natexlab{}.
\newblock \showarticletitle{Troxy: Transparent Access to {Byzantine}
  Fault-Tolerant Systems}. In \bibinfo{booktitle}{\emph{Proceedings of the 48th
  International Conference on Dependable Systems and Networks~(DSN\,'18)}}.
  \bibinfo{pages}{59--70}.
\newblock


\bibitem[\protect\citeauthoryear{Li, Xu, Abid, Distler, and Kapitza}{Li
  et~al\mbox{.}}{2016}]%
        {li16sarek}
\bibfield{author}{\bibinfo{person}{Bijun Li}, \bibinfo{person}{Wenbo Xu},
  \bibinfo{person}{Muhammad~Zeeshan Abid}, \bibinfo{person}{Tobias Distler},
  {and} \bibinfo{person}{R\"{u}diger Kapitza}.}
  \bibinfo{year}{2016}\natexlab{}.
\newblock \showarticletitle{{SAREK}: Optimistic Parallel Ordering in
  {Byzantine} Fault Tolerance}. In \bibinfo{booktitle}{\emph{Proceedings of the
  12th European Dependable Computing Conference~(EDCC\,'16)}}.
  \bibinfo{pages}{77--88}.
\newblock


\bibitem[\protect\citeauthoryear{Liu and Vukoli{\'c}}{Liu and
  Vukoli{\'c}}{2017}]%
        {liu17leader}
\bibfield{author}{\bibinfo{person}{Shengyun Liu} {and} \bibinfo{person}{Marko
  Vukoli{\'c}}.} \bibinfo{year}{2017}\natexlab{}.
\newblock \showarticletitle{Leader Set Selection for Low-Latency Geo-Replicated
  State Machine}.
\newblock \bibinfo{journal}{\emph{IEEE Transactions on Parallel and Distributed
  Systems}} \bibinfo{volume}{28}, \bibinfo{number}{7} (\bibinfo{year}{2017}),
  \bibinfo{pages}{1933--1946}.
\newblock


\bibitem[\protect\citeauthoryear{Mao, Junqueira, and Marzullo}{Mao
  et~al\mbox{.}}{2008}]%
        {mao08mencius}
\bibfield{author}{\bibinfo{person}{Yanhua Mao}, \bibinfo{person}{Flavio~P.
  Junqueira}, {and} \bibinfo{person}{Keith Marzullo}.}
  \bibinfo{year}{2008}\natexlab{}.
\newblock \showarticletitle{{Mencius}: Building Efficient Replicated State
  Machines for {WANs}}. In \bibinfo{booktitle}{\emph{Proceedings of the 8th
  Symposium on Operating Systems Design and Implementation~(OSDI\,'08)}}.
  \bibinfo{pages}{369--384}.
\newblock


\bibitem[\protect\citeauthoryear{Mao, Junqueira, and Marzullo}{Mao
  et~al\mbox{.}}{2009}]%
        {mao09towards}
\bibfield{author}{\bibinfo{person}{Yanhua Mao}, \bibinfo{person}{Flavio~P.
  Junqueira}, {and} \bibinfo{person}{Keith Marzullo}.}
  \bibinfo{year}{2009}\natexlab{}.
\newblock \showarticletitle{Towards Low Latency State Machine Replication for
  Uncivil Wide-Area Networks}. In \bibinfo{booktitle}{\emph{Proceedings of the
  5th Workshop on Hot Topics in System Dependability~(HotDep\,'09)}}.
\newblock


\bibitem[\protect\citeauthoryear{Martin and Alvisi}{Martin and Alvisi}{2006}]%
        {martin06fast}
\bibfield{author}{\bibinfo{person}{Jean-Philippe Martin} {and}
  \bibinfo{person}{Lorenzo Alvisi}.} \bibinfo{year}{2006}\natexlab{}.
\newblock \showarticletitle{Fast {Byzantine} Consensus}.
\newblock \bibinfo{journal}{\emph{IEEE Transactions on Dependable and Secure
  Computing}} \bibinfo{volume}{3}, \bibinfo{number}{3} (\bibinfo{year}{2006}),
  \bibinfo{pages}{202--215}.
\newblock


\bibitem[\protect\citeauthoryear{{Microsoft Azure}}{{Microsoft Azure}}{2020}]%
        {azure-regions}
\bibfield{author}{\bibinfo{person}{{Microsoft Azure}}.}
  \bibinfo{year}{2020}\natexlab{}.
\newblock \bibinfo{title}{Azure Regions}.
\newblock
  \bibinfo{howpublished}{\url{https://azure.microsoft.com/en-us/global-infrastructure/regions/}}.
\newblock


\bibitem[\protect\citeauthoryear{Milosevic, Biely, and Schiper}{Milosevic
  et~al\mbox{.}}{2013}]%
        {milosevic13bounded}
\bibfield{author}{\bibinfo{person}{Zarko Milosevic}, \bibinfo{person}{Martin
  Biely}, {and} \bibinfo{person}{Andr{\'e} Schiper}.}
  \bibinfo{year}{2013}\natexlab{}.
\newblock \showarticletitle{Bounded Delay in {Byzantine}-Tolerant State Machine
  Replication}. In \bibinfo{booktitle}{\emph{Proceedings of the 32nd
  International Symposium on Reliable Distributed Systems~(SRDS\,'13)}}.
  \bibinfo{pages}{61--70}.
\newblock


\bibitem[\protect\citeauthoryear{Moraru, Andersen, and Kaminsky}{Moraru
  et~al\mbox{.}}{2013}]%
        {moraru13there}
\bibfield{author}{\bibinfo{person}{Iulian Moraru}, \bibinfo{person}{David~G
  Andersen}, {and} \bibinfo{person}{Michael Kaminsky}.}
  \bibinfo{year}{2013}\natexlab{}.
\newblock \showarticletitle{There Is More Consensus in Egalitarian
  Parliaments}. In \bibinfo{booktitle}{\emph{Proceedings of the 24th Symposium
  on Operating Systems Principles~(SOSP\,'13)}}. \bibinfo{pages}{358--372}.
\newblock


\bibitem[\protect\citeauthoryear{Nogueira, Garcia, Bessani, and Neves}{Nogueira
  et~al\mbox{.}}{2018}]%
        {nogueira18challenges}
\bibfield{author}{\bibinfo{person}{Andr{\'e} Nogueira}, \bibinfo{person}{Miguel
  Garcia}, \bibinfo{person}{Alysson Bessani}, {and} \bibinfo{person}{Nuno
  Neves}.} \bibinfo{year}{2018}\natexlab{}.
\newblock \showarticletitle{On the Challenges of Building a {BFT} {SCADA}}. In
  \bibinfo{booktitle}{\emph{Proceedings of the 48th International Conference on
  Dependable Systems and Networks~(DSN\,'18)}}. \bibinfo{pages}{163--170}.
\newblock


\bibitem[\protect\citeauthoryear{Padilha, Fynn, Soul{\'e}, and Pedone}{Padilha
  et~al\mbox{.}}{2016}]%
        {padilha16callinicos}
\bibfield{author}{\bibinfo{person}{Ricardo Padilha}, \bibinfo{person}{Enrique
  Fynn}, \bibinfo{person}{Robert Soul{\'e}}, {and} \bibinfo{person}{Fernando
  Pedone}.} \bibinfo{year}{2016}\natexlab{}.
\newblock \showarticletitle{{Callinicos}: Robust Transactional Storage for
  Distributed Data Structures}. In \bibinfo{booktitle}{\emph{Proceedings of the
  2016 USENIX Conference on Usenix Annual Technical Conference~(ATC\,'16)}}.
  \bibinfo{pages}{223--235}.
\newblock


\bibitem[\protect\citeauthoryear{Padilha and Pedone}{Padilha and
  Pedone}{2013}]%
        {padilha13augustus}
\bibfield{author}{\bibinfo{person}{Ricardo Padilha} {and}
  \bibinfo{person}{Fernando Pedone}.} \bibinfo{year}{2013}\natexlab{}.
\newblock \showarticletitle{{Augustus}: Scalable and Robust Storage for Cloud
  Applications}. In \bibinfo{booktitle}{\emph{Proceedings of the 8th European
  Conference on Computer Systems~(EuroSys\,'13)}}. \bibinfo{pages}{99--112}.
\newblock


\bibitem[\protect\citeauthoryear{Schiper, Sutra, and Pedone}{Schiper
  et~al\mbox{.}}{2010}]%
        {schiper10pstore}
\bibfield{author}{\bibinfo{person}{Nicolas Schiper}, \bibinfo{person}{Pierre
  Sutra}, {and} \bibinfo{person}{Fernando Pedone}.}
  \bibinfo{year}{2010}\natexlab{}.
\newblock \showarticletitle{{P-Store}: Genuine Partial Replication in Wide Area
  Networks}. In \bibinfo{booktitle}{\emph{Proceedings of the 29th International
  Symposium on Reliable Distributed Systems~(SRDS\,'10)}}.
  \bibinfo{pages}{214--224}.
\newblock


\bibitem[\protect\citeauthoryear{Schneider}{Schneider}{1990}]%
        {schneider90implementing}
\bibfield{author}{\bibinfo{person}{Fred~B. Schneider}.}
  \bibinfo{year}{1990}\natexlab{}.
\newblock \showarticletitle{Implementing Fault-tolerant Services Using the
  State Machine Approach: A Tutorial}.
\newblock \bibinfo{journal}{\emph{Comput. Surveys}} \bibinfo{volume}{22},
  \bibinfo{number}{4} (\bibinfo{year}{1990}), \bibinfo{pages}{299--319}.
\newblock


\bibitem[\protect\citeauthoryear{Shoup}{Shoup}{2000}]%
        {shoup00practical}
\bibfield{author}{\bibinfo{person}{Victor Shoup}.}
  \bibinfo{year}{2000}\natexlab{}.
\newblock \showarticletitle{Practical Threshold Signatures}. In
  \bibinfo{booktitle}{\emph{Proceedings of the 19th International Conference on
  Theory and Application of Cryptographic Techniques~(EUROCRYPT\,'00)}}.
  \bibinfo{pages}{207--220}.
\newblock


\bibitem[\protect\citeauthoryear{Sousa and Bessani}{Sousa and Bessani}{2015}]%
        {sousa15separating}
\bibfield{author}{\bibinfo{person}{João Sousa} {and} \bibinfo{person}{Alysson
  Bessani}.} \bibinfo{year}{2015}\natexlab{}.
\newblock \showarticletitle{Separating the {WHEAT} from the Chaff: An Empirical
  Design for Geo-Replicated State Machines}. In
  \bibinfo{booktitle}{\emph{Proceedings of the 34th International Symposium on
  Reliable Distributed Systems~(SRDS\,'15)}}. \bibinfo{pages}{146--155}.
\newblock


\bibitem[\protect\citeauthoryear{Sousa, Bessani, and Vukoli\'{c}}{Sousa
  et~al\mbox{.}}{2018}]%
        {sousa18byzantine}
\bibfield{author}{\bibinfo{person}{Joao Sousa}, \bibinfo{person}{Alysson
  Bessani}, {and} \bibinfo{person}{Marko Vukoli\'{c}}.}
  \bibinfo{year}{2018}\natexlab{}.
\newblock \showarticletitle{A {Byzantine} Fault-tolerant Ordering Service for
  the {Hyperledger Fabric} Blockchain Platform}. In
  \bibinfo{booktitle}{\emph{Proceedings of the 48th International Conference on
  Dependable Systems and Networks~(DSN\,'18)}}. \bibinfo{pages}{51--58}.
\newblock


\bibitem[\protect\citeauthoryear{Terry, Prabhakaran, Kotla, Balakrishnan,
  Agu~ilera, and Abu-Libdeh}{Terry et~al\mbox{.}}{2013}]%
        {terry13consistency}
\bibfield{author}{\bibinfo{person}{Douglas~B. Terry}, \bibinfo{person}{Vijayan
  Prabhakaran}, \bibinfo{person}{Ramakrishna Kotla}, \bibinfo{person}{Mahesh
  Balakrishnan}, \bibinfo{person}{Marcos~K. Agu~ilera}, {and}
  \bibinfo{person}{Hussam Abu-Libdeh}.} \bibinfo{year}{2013}\natexlab{}.
\newblock \showarticletitle{Consistency-based Service Level Agreements for
  Cloud Storage}. In \bibinfo{booktitle}{\emph{Proceedings of the 24th
  Symposium on Operating Systems Principles~(SOSP\,'13)}}.
  \bibinfo{pages}{309--324}.
\newblock


\bibitem[\protect\citeauthoryear{Tsudik}{Tsudik}{1992}]%
        {tsudik92message}
\bibfield{author}{\bibinfo{person}{Gene Tsudik}.}
  \bibinfo{year}{1992}\natexlab{}.
\newblock \showarticletitle{Message Authentication with One-Way Hash
  Functions}.
\newblock \bibinfo{journal}{\emph{ACM SIGCOMM Computer Communication Review}}
  \bibinfo{volume}{22}, \bibinfo{number}{5} (\bibinfo{year}{1992}),
  \bibinfo{pages}{29--38}.
\newblock


\bibitem[\protect\citeauthoryear{Veronese, Correia, Bessani, and Lung}{Veronese
  et~al\mbox{.}}{2009}]%
        {veronese09spin}
\bibfield{author}{\bibinfo{person}{Giuliana~Santos Veronese},
  \bibinfo{person}{Miguel Correia}, \bibinfo{person}{Alysson~Neves Bessani},
  {and} \bibinfo{person}{Lau~Cheuk Lung}.} \bibinfo{year}{2009}\natexlab{}.
\newblock \showarticletitle{Spin One's Wheels? {Byzantine} Fault Tolerance with
  a Spinning Primary}. In \bibinfo{booktitle}{\emph{Proceedings of the 28th
  International Symposium on Reliable Distributed Systems~(SRDS\,'09)}}.
  \bibinfo{pages}{135--144}.
\newblock


\bibitem[\protect\citeauthoryear{Veronese, Correia, Bessani, and Lung}{Veronese
  et~al\mbox{.}}{2010}]%
        {veronese10ebawa}
\bibfield{author}{\bibinfo{person}{Giuliana~Santos Veronese},
  \bibinfo{person}{Miguel Correia}, \bibinfo{person}{Alysson~Neves Bessani},
  {and} \bibinfo{person}{Lau~Cheuk Lung}.} \bibinfo{year}{2010}\natexlab{}.
\newblock \showarticletitle{{EBAWA}: Efficient {Byzantine} Agreement for
  Wide-Area Networks}. In \bibinfo{booktitle}{\emph{Proceedings of the 12th
  Symposium on High-Assurance Systems Engineering~(HASE\,'10)}}.
  \bibinfo{pages}{10--19}.
\newblock


\bibitem[\protect\citeauthoryear{Yin, Martin, Venkataramani, Alvisi, and
  Dahlin}{Yin et~al\mbox{.}}{2003}]%
        {yin03separating}
\bibfield{author}{\bibinfo{person}{Jian Yin}, \bibinfo{person}{Jean-Philippe
  Martin}, \bibinfo{person}{Arun Venkataramani}, \bibinfo{person}{Lorenzo
  Alvisi}, {and} \bibinfo{person}{Mike Dahlin}.}
  \bibinfo{year}{2003}\natexlab{}.
\newblock \showarticletitle{Separating Agreement from Execution for {Byzantine}
  Fault Tolerant Services}. In \bibinfo{booktitle}{\emph{Proceedings of the
  19th Symposium on Operating Systems Principles~(SOSP\,'03)}}.
  \bibinfo{pages}{253--267}.
\newblock


\end{thebibliography}

\ifextended{
	\appendix

\lstset{
	basicstyle=\footnotesize\lsttt,
	emphstyle=\lstbtt,
	commentstyle=\commentsize\textit,
	tabsize=2,
	numberstyle=\scriptsize,
	numbersep=2.2mm,
	xleftmargin=5mm,
	frame=none,
	columns=fullflexible,
	numberblanklines=true,
	emptylines=2,
	breaklines=true,
	breakatwhitespace=false,
	escapechar=\%,
	mathescape,
	morecomment=[l]{//},
	morecomment=[s]{/*}{*/},
	morestring=[b]",
	aboveskip=-3mm,
	emph={%
		repeat, until, write, broadcast, to, sleep, for, until, return, on, receive, from, if, send, unwrap, main, loop, while, else, parallel, each, deliver, execute, order, move_window, valid_mac, fetch_cp, stable_cp, gen_cp, gc, case, periodic%
	}
}
\newtheorem*{remark}{Remark}

\section{Safety and Liveness Proof for \system}
In the following, we first provide a detailed description of the individual components of \system, along with the assumptions and definitions used for proving the correctness and liveness properties of \system.
Afterwards, we present the proof itself and conclude with pseudocode for both IRMC implementation variants (IRMC-RC and IRMC-SC).

\subsection{Fault Assumptions}
We assume that each execution group consists of $2f_e+1$~replicas and that there are up to $f_e$~faulty execution replicas per execution group.
The agreement group has $3f_a+1$~replicas of which up to $f_a$~agreement replicas may be faulty.
All faults are assumed to be Byzantine.

We assume a partially synchronous network with periods of synchrony which are long enough to allow the protocol to make progress~\cite{dwork88consensus}.

\def\agr{agreement replica\xspace}
\def\agrs{agreement replicas\xspace}
\def\exec{execution replica\xspace}
\def\execs{execution replicas\xspace}
\def\ag{agreement black-box\xspace}

\subsection{Cryptographic Primitives and Assumptions}
The pseudocode uses the following cryptographic primitives:
\begin{itemize}
\item sign($m$): Digitally sign message $m$ (e.g., using RSA).
\item valid\_sig\textsubscript{$\mathcal{E}$}($m$): Verify that the signature for message~$m$ is valid and that the signer is part of group $\mathcal{E}$.
\item mac\textsubscript{$a,e$}(m): Add a single MAC~(message authentication code) such that replica~$a$ authenticates message~$m$ towards replica~$e$~\cite{tsudik92message}. This primitive, for example, may be implemented using HMAC-SHA-256.
\item mac\textsubscript{$a,\mathcal{E}$}(m): Add a MAC vector such that replica~$a$ authenticates message~$m$ to a replica group $\mathcal{E}$~\cite{castro99practical}. It consists of a MAC for each replica in group~$\mathcal{E}$.
\item valid\_mac\textsubscript{$a,e$}(m) and valid\_mac\textsubscript{$a,\mathcal{E}$}(m) are used to verify these MACs.
\item unwrap\_mac(m): Strips the added MAC from message~$m$ and returns the original message.
\item h($m$): Calculate a cryptographically secure hash digest of message~$m$, for example using SHA-256.
\end{itemize}

\noindent{}We make the standard assumptions regarding cryptographic functions.
We assume them  to be secure, that is a malicious replica cannot forge signatures\,/\,MACs of other replicas nor can it create a message $m' \neq m$ with hash~$h(m) = h(m')$.

\subsection{Consistency Guarantees}
\system provides linearizability for write requests.
Read requests with strong consistency are treated similarly, but only the designated execution group gets the full request, whereas all other groups just receive the client id and counter.

Weakly consistent reads provide one-copy serializability.

Section~\ref{sec:consistency-guarantees} contains the relevant proofs and definitions of the consistency guarantees.

\subsection{Definitions}
We first describe the properties provided by \system before describing the required assumptions for the agreement black-box and the checkpoint component.
\subsubsection{Properties of \system}
The definitions of E-Safety and E-Validity follow the lines of those used for Steward~\cite{amir10steward}.
E-Safety~II and E-Liveness are adapted from PBFT~\cite{castro99practical}.
E-Validity~II captures the usual at-most-once guarantee.
\begin{definition}[E-Safety]
	\label{def:e-safety}
	If two correct servers execute the i\textsuperscript{th} write, then these writes are identical.
\end{definition}
\begin{definition}[E-Safety II]
	\label{def:e-safety2}
	The system provides linearizability regarding requests from correct clients.
\end{definition}
\begin{definition}[E-Validity]
	\label{def:e-validity}
	Only a correctly authenticated write request from a client may be executed.
\end{definition}
\begin{definition}[E-Validity II]
\label{def:e-validity2}
A write request may be executed at most once.
\end{definition}
\begin{definition}[E-Liveness]
	\label{def:e-liveness}
	A correct client will eventually receive a reply to its request.
\end{definition}

\subsubsection{Agreement Black-Box}
\begin{figure}
	\vspace*{3mm}%
	\begin{lstlisting}
interface %\lstbtt{Agreement}% {
	// Request ordering of message m
	void %\lstbtt{order}%(%\textsc{Message}% m);
	// Must deliver request in order without gaps
	// Blocking callback, that is the agreement can only deliver the next message after the previous deliver call has completed
	// Delays in deliver may cause timeouts in the agreement black-box to expire
	callback %\lstbtt{deliver}%(%\textsc{SeqNr}% s, %\textsc{Message}% m);
	
	// Forget everything before (%$<$%) sequence number s
	// After this call no sequence number %$<$% s must be  delivered
	void %\lstbtt{gc}%(%\textsc{SeqNr}% s);
}
	\end{lstlisting}
	\caption{Agreement black-box interface (pseudo code)}
	\label{def:ag-black-box}
\end{figure}
We assume the agreement to be a black-box with the interface shown in Figure~\ref{def:ag-black-box} and the following properties.
The comments at the interface methods detail their expected behavior. We assume that the first delivered sequence number is 1.
\begin{definition}[A-Safety]
	\label{def:a-safety}
	If two correct agreement replicas deliver a message for sequence number~$s$, then these messages are identical.
\end{definition}
\begin{definition}[A-Liveness]
	\label{def:a-liveness}
	If $2f+1$~correct replicas receive a message $m$ for ordering, then eventually $f+1$~correct replicas will deliver message $m$ and all preceding messages.
\end{definition}
\begin{definition}[A-Validity]
	\label{def:a-validity}
	A correct agreement replica will only deliver correctly authenticated client requests.
\end{definition}
\begin{definition}[A-Order]
	\label{def:a-order}
	A correct agreement replica will deliver a message for sequence number~$s$ only after all preceding sequence numbers were delivered or garbage collected.
\end{definition}
\noindent{}These requirements are for example fulfilled by PBFT~\cite{castro99practical}.


\subsubsection{Checkpoint Component}
\begin{figure}
	\vspace*{3mm}%
	\begin{lstlisting}
interface %\lstbtt{Checkpoint}% {
	// Create and distribute own checkpoint message
	// By default only checkpoint components within a single group communicate with each other (i.e., checkpoints%\,%are%\,%group%\,%specific)
	void %\lstbtt{gen\_cp}%(%\textsc{SeqNr} $s$%, %\textsc{State} $st$%);
	// Sequence numbers of delivered checkpoints must increase monotonically
	// Older checkpoints must be skipped, if a newer checkpoint has already been delivered
	callback %\lstbtt{stable\_cp}%(%\textsc{SeqNr} $s$%, %\textsc{State} $st$%);
	// Actively fetch requested checkpoint
	void %\lstbtt{fetch\_cp}%(%\textsc{SeqNr} $s$%);
}
	\end{lstlisting}
	\caption{Checkpoint-component interface (pseudo code)}
	\label{def:cp-interface}
\end{figure}

We assume that each replica has a checkpoint component with the interface from Figure~\ref{def:cp-interface} and the following properties.
The comments at the interface methods detail their expected behavior.

\begin{definition}[Stable checkpoint]
	A checkpoint is called \emph{stable} once a correct replica collects a certificate consisting of $f+1$~valid and matching checkpoint messages.
\end{definition}
\noindent{}Once a replica possesses a stable checkpoint it will call \texttt{stable\_cp} with the checkpoint, unless it has already delivered a checkpoint with a higher sequence number.
\begin{definition}[CP-Safety]
	\label{def:cp-safety}
	A stable checkpoint was created by at least one correct replica.
\end{definition}
\noindent{}As shown later on, all correct replicas in a group will create identical checkpoints for the same sequence number.
\begin{definition}[CP-Liveness]
	\label{def:cp-liveness}
	If one correct replica of a group delivers a checkpoint, then eventually all correct replicas of that group will deliver that checkpoint, unless a newer checkpoint was already delivered.
\end{definition}
\begin{definition}[CP-Liveness II]
\label{def:cp-liveness2}
  Once $f+1$ correct replicas create and distribute identical checkpoint messages, the checkpoint will eventually become stable, unless it is superseded by a newer one before.
\end{definition}

\noindent{}An implementation should consider the following aspects:
\begin{itemize}
	\item With an execution group size of $2f_e+1$ CP-Safety requires that each checkpoint message is authenticated using a signature.
	\item In order to provide CP-Liveness correct replicas must continuously inform\,/\,query each other about their latest stable checkpoint.
	\item A checkpoint message $\langle\textsc{Checkpoint},h,s\rangle$ for sequence number~$s$ with $h=h(st)$ only contains a hash of the checkpoint state~$st$ to keep the network overhead low.
	\item The full checkpoint state should only be transferred when necessary.
\end{itemize}

\subsubsection{Application}
We assume that the application is implemented as a deterministic state machine which can execute client requests and provide a reply to them.
In addition, the application must be able to retrieve and apply a checkpoint.
The latter functionalities are denoted as assignment \texttt{app := app'} and passing \texttt{app} to \texttt{cp.gen\_cp} in pseudo code.
\begin{definition}[RSM]
	\label{def:rsm}
	Different application instances have an identical state for sequence number $i$ when processing writes according to the same total order~\cite{schneider90implementing}.
\end{definition}

\subsection{IRMC Properties}
\begin{figure}
	\vspace*{3mm}%
	\begin{lstlisting}
/* Sender endpoint%\,%*/
interface %\lstbtt{IRMC\_Sender}% {
	// If %$p$% is too old: discard %$m$% and return immediately 
	// If %$p$% is in the current window: send %$m$% and return immediately 
	// If %$p$% is after the current window (%$p > max(IRMC_{sc}.win)$%): block/wait
	void %\lstbtt{send}%(%\textsc{Subchannel}% %$sc$%, %\textsc{Position}% %$p$%, %\textsc{Message}% %$m$%);
	// Ask receiver endpoint to move the window forward
	// The receiver endpoint will internally call move_window with the %$f_s+1$%-highest received position
	void %\lstbtt{move\_window}%(%\textsc{Subchannel}% %$sc$%, %\textsc{Position}% %$p$%);
}

/* Receiver endpoint%\,%*/
interface %\lstbtt{IRMC\_Receiver}% {
	// Blocks until (1) a message %$m$% is delivered, then returns %$m$%, or until (2) the window is ahead of %$p$%, that is %$p < min(IRMC_{sc}.win)$%, then returns %$\langle\textsc{TooOld}, s\rangle$%, with %$s$% = new window lower bound
	%\textsc{Message}% %\lstbtt{receive}%(%\textsc{Subchannel}% %$sc$%, %\textsc{Position}% %$p$%);
	// Position %$p$% must increase monotonically, calls with lower values are silently ignored
	void %\lstbtt{move\_window}%(%\textsc{Subchannel}% %$sc$%, %\textsc{Position}% %$p$%);
}
	\end{lstlisting}
	\caption{IRMC interfaces (pseudo code)}
	\label{def:irmc-interface}
\end{figure}
The sender and receiver endpoint interfaces of the IRMC are shown in Figure~\ref{def:irmc-interface}.
As before, the comments specify the expected behavior of the methods.
All sender replicas are contained in the set $R_s$ and all receiver replicas in $R_r$.
The capacity of an IRMC (subchannel) is denoted as $|IRMC|$ and is assumed to be $\geq 1$.
It is identical for all subchannels of an IRMC.
$IRMC_{sc}.win$ refers to the window of subchannel $sc$, which is initialized to start at 1.
$min(IRMC_{sc}.win)$ and $max(IRMC_{sc}.win)$ return the lower and upper limit\linebreak \mbox{(inclusive)} of the window of subchannel~$sc$, respectively.
$receive(sc, p) = m$ denotes that the receive call returned the message~$m$.
\begin{definition}[IRMC-Correctness I]
	\label{def:irmc-correctness1}
	Receive only returns a message sent by a correct sender:\\
	$receive(sc, p) = m \rightarrow$ a correct sender called $send(sc, p, m)~\wedge$ the receiver called $move\_window(sc, p')$ such that $p' \leq p <  p' + |IRMC_{sc}|$.
\end{definition}
\begin{definition}[IRMC-Correctness II]
	\label{def:irmc-correctness2}
	Moving a window requires a move request by at least one correct replica:\\
	$receive(sc, p) = \langle\textsc{TooOld}, p'\rangle$ with $p' > p \rightarrow$ a correct sender called $move\_window(sc, \hat{p})$ with $\hat{p}\geq p'~\vee$ a correct receiver called $move\_window(sc, \hat{p})$ with $\hat{p}\geq p'$.
\end{definition}
\begin{remark}
	\label{def:irmc-remark}
	Calls to \texttt{send} block if the requested position is after the upper limit of the current subchannel window.
	Calls to \texttt{receive} block if the position is in or after the subchannel window and the corresponding message was not yet received by the IRMC.
\end{remark}
\begin{definition}[IRMC-Liveness I]
	\label{def:irmc-liveness1}
	An identical message sent (\texttt{send} method call has returned) by at least $f_s+1$ correct replicas will eventually cause some message to be received by all correct receivers unless it is skipped (see also IRMC-Correctness II):\\
	If $f_s+1$~correct senders call $send(sc,p,m)$, then eventually $\forall$~correct $r \in R_s$ that call(ed) $receive(sc, p)\hspace{-.5mm}:\hspace{.2mm}receive(sc, p)\hspace{-.5mm}=\hspace{-.5mm}*$ $\vee~receive(sc, p) = \langle\textsc{TooOld}, p'\rangle$ with $p' > p$.
\end{definition}
\begin{remark}
	Due to IRMC-Correctness I the received message can only be one that was sent by at least one correct sender.
\end{remark}
\begin{definition}[IRMC-Liveness II]
	\label{def:irmc-liveness2}
	Send calls return once the position is below the subchannel window's upper bound:\\
	If $f_r+1$~correct receivers $r \in R_r$ call $move\_window(sc, p_r)$, then eventually all $send(sc, p', m)$ calls will have returned on all correct sender replicas where $p' < \tilde{p} + |IRMC_{sc}|$ and $\tilde{p} = f+1$-largest $p_r$.
\end{definition}
\begin{definition}[IRMC-Liveness III]
	\label{def:irmc-liveness3}
	Receiver endpoints will move the window at least as far as the $f_s+1$-highest \texttt{move\_win} request by a sender replica:\\
	If $f_s+1$~correct senders call $move\_window(sc, p_s)$, then eventually all correct receiver endpoints will have (internally) called $move\_window(sc, p)$ with $p$ such that largest $p_s \geq p \geq f+1$-largest $ p_s$.
\end{definition}
\begin{remark}
Note that if a receiver endpoint has already moved a subchannel window to a higher position than $p$, then the call to \texttt{move\_win} has no effect.
\end{remark}

\subsection{\system Pseudo Code}
\lstset{
	numbers=left,
}
\begin{figure}
	\vspace*{3mm}%
\begin{lstlisting}
%$t_c$% := 1%\hfill%// Client request counter
%$rep$% := %$\varnothing$%%\hfill%// Reply for last request
%$g$% := %$\{\}$\hfill%// Collected replies
%$\mathcal{E}$% := nearest execution group with %$|\mathcal{E}| = 2f_e+1$%%\hrule%
write(%\textsc{Write}% %$w$%):%\label{code:client-write}%
	// Authenticate request
	%$m$% := %$mac_{c,\mathcal{E}}$%(%$sign_c$%(%$\langle\textsc{Write}, w, c, t_c \rangle$%))%\label{code:client-sign}%
	%$rep$% := %$\varnothing$%
	%$g$% := %$\{\}$%
	// Repeat sending until reply was received
	repeat until %$rep \neq \varnothing$%:%\label{code:client-loop}%
		broadcast %$m$% to %$\mathcal{E}$%
		sleep for %$t_{retry}$% %$\vee$% until %$rep \neq \varnothing$%
	%$t_c$% := %$t_c+1$%%\label{code:client-counter}%
	return %$rep$%

on receive(%$m$% = %$\langle\textsc{Reply}, u, t_c' \rangle$% from %$e \in \mathcal{E}$%):
	// Only process correctly authenticated replies
	// Each replica may only send one
	if %$valid\_mac_{e,c}$%(%$m$%)%$\wedge t_c' = t_c \wedge \langle\textsc{Reply},*,*\rangle$% from %$e\notin g$%:%\label{code:client-precheck}%
		%$g$% := %$g \cup \{m\}$%
		// Return reply after receiving %$f_e\hspace{-.3mm}+\hspace{-.3mm}1$% replies with matching %$t_c$% and %$u$%
		if %$\exists u: | \{v|v=\langle\textsc{Reply}, u, t_c\rangle\in g\}| \geq f_e+1$%:%\label{code:client-check}%
			%$rep$% := %$u$%
\end{lstlisting}
\caption{Client $c$ (pseudo code)}
\label{def:pseudo-client}
\end{figure}

\begin{figure}
	\vspace*{3mm}%
	\begin{lstlisting}
%$s_n$% := 0 %\hfill%// Sequence number for last executed request
%$t[c]$% := 0 %\hfill%// Counter of latest forwarded client request
%$u[c]$% := %$\varnothing$%%\hfill%// Reply cache %$\langle\textsc{Reply}, u_c, t_c\rangle$%
app = application, cp = checkpoint component
%$\mathcal{E}$% := execution group
%$r_\mathcal{E}$% = request IRMC sender, %$\forall c: |r_{\mathcal{E},c}| = 2$% %\hfill%// Capacity = 2
%$c_\mathcal{E}$% = commit IRMC receiver, %$|c_{\mathcal{E},0}| \geq k_e$% %\hfill%// Capacity %$\geq k_e$%%\hrule%
on receive(%$m$% = %$\langle\textsc{Write},w,c,t_c\rangle$% from %$c$%):%\label{code:exec-recv-write}%
	// Ignore invalid requests
	if %$!valid\_mac_{c,\mathcal{E}}$%(%$m$%): return%\label{code:exec-check-mac}%
	if %$t_c \leq t[c]$%:
		// Check if a reply is available for the request
		if %$u[c]=\langle\textsc{Reply}, *, t_c'\rangle \wedge t_c' = t_c$%:
			send %$mac_{e,c}$%(%$u[c]$%) to %$c$%%\label{code:exec-resend-reply}%
		return // Silent return on retry with no result yet
	if %$!valid\_sig_{c}$%(%$unwrap\_mac$%(%$m$%)): return%\label{code:exec-check-sig}%
	// Execution replicas must be able to forward a request once
	// This also applies for the latest client request if an execution replica already has a reply
	%$t[c]$% := %$t_c$%%\label{code:exec-counter}%
	// Notify agreement of new request
	%$r_\mathcal{E}$%.move_window(%$c, t_c$%)%\label{code:exec-move-win}%
	%$r_\mathcal{E}$%.send(%$c, t_c, \langle\textsc{Request},unwrap\_mac(m),\mathcal{E}\rangle$%)%\label{code:exec-send}%

main loop:%\label{code:exec-main}%
	while true:
		%$m$% := %$c_\mathcal{E}$%.receive(%$0, s_n+1$%)%\label{code:exec-receive}%
		if %$m = \langle \textsc{TooOld}, s'\rangle$%:
			// Executor missed some requests %$\rightarrow$% fetch checkpoint
			cp.fetch_cp(%$s'$%) // Ask other groups if necessary
		else:
			%$m$% = %$\langle\textsc{Execute},\langle\textsc{Request},\langle\textsc{Write},\,w,\,c,\,t_c\rangle, \mathcal{E}'\rangle, s_n+1\rangle$\label{code:exec-exec-msg}%
			%$s_n$% := %$s_n + 1$\label{code:exec-inc-steps}%
			// Filter duplicate%\,%/%\,%old request
			if %$u[c]=\langle\textsc{Reply}, *, t_c'\rangle \wedge t_c' < t_c \vee u[c]=\varnothing$%:%\label{code:exec-skip}%
				%$u_c$% := app.execute(%$m$%)%\label{code:exec-execute}%
				%$u[c]$% :=  %$\langle\textsc{Reply}, u_c, t_c\rangle$% // Store reply%\label{code:exec-store-rep}%
				if %$\mathcal{E} = \mathcal{E'}$%: // Only the local execution group sends the reply to the client
					send %$mac_{e,c}$%(%$u[c]$%) to %$c$%%\label{code:exec-send-reply}%
			if %$s_n \equiv 0$ mod $k_e$%: // Periodically create a checkpoint
				cp.gen_cp(%$s_n$%, (%$u$%, app))%\label{code:exec-gen-cp}%
			
on cp.stable_cp(%$s$%, %$st$% = (%$u'$%, app')):%\label{code:exec-stable-cp}%
	// Allow garbage collection of commit IRMC
	%$c_\mathcal{E}$%.move_window(%$0, s+1$%)%\label{code:exec-cp-move}%
	if %$s \geq s_n$%:%\label{code:exec-cp-guard}%
		%$s_n$% := %$s$%%\label{code:exec-cp-update}%
		app := app'
		%$u$% := %$u'$%
	\end{lstlisting}
	\caption{Execution replica $e$ (pseudo code)}
	\label{def:pseudo-execution}
\end{figure}

\begin{figure}
	\vspace*{3mm}%
	\begin{lstlisting}
%$s_n$% := 0 %\hfill%// Last ordered sequence number
// Force agreement to periodically create a checkpoint
%$win$\,%:=%\,%[1,AG-WIN]%\hfill%//%\,%Range%\,%with%\,%[lower,%\,%upper]%\,%bound,%\,%both%\,%inclusive
AG-WIN%\,$\geq$\,$k_a$%%$\hfill$%// Size of agreement window
%$t[c]$% := 0%\hfill%// Counter values of latest agreed request; used by consensus
%$t^+[c]$% := 0%\hfill%// Counter values for next expected request
%$hist$% := last %$|c_{\mathcal{E},0}|$ \textsc{Execute}s%
ag = agreement black-box, cp = checkpoint component
for each execution group %$\mathcal{E}$%:
	$r_\mathcal{E}$ = request IRMC receiver, %$\forall c: |r_{\mathcal{E},c}|=2$%
	$c_\mathcal{E}$ = commit IRMC sender, %$|c_{\mathcal{E},0}| \geq k_e$%
%$\mathcal{A}$% := agreement group with %$|\mathcal{A}|=3f_a+1$\hrule%
parallel for each client %$c$% and execution group %$\mathcal{E}$%:
	while true:
		%$m$% := %$r_\mathcal{E}$%.receive(%$c$%, %$t^+[c]$%)%\label{code:agree-client-receive}%
		if %$m$% = %$\langle\textsc{TooOld}, s\rangle$%:
			// Client already sent a newer request
			%$t^+[c]$% := %$s$%%\label{code:agree-too-old}%
		else: // %$m$% = %$\langle\textsc{Write}, w, c, t_c\rangle$%
			// Order request and wait for the next one
			ag.order(%$m$%)
			%$t^+[c]$% := %$t^+[c] + 1$%%\label{code:agree-next-req}%

// In-order without gaps between sequence numbers, blocks agreement, blocking can cause agreement timeouts to expire
on ag.deliver(%$s$%, %$r$% = %$\langle\textsc{Request},\langle\textsc{Write},w,c,t_c\rangle,\mathcal{E}\rangle$%):%\label{code:agree-deliver}%
	// Sleep if agreement must create a new checkpoint
	sleep until %$s \leq$% %$max(win$%)%\label{code:agree-win}%
	%$x$% := %$\langle\textsc{Execute}, r, s\rangle$%
	// Update state with new request
	// Old%\,%/%\,%duplicated requests could be replaced with no-ops here
	%$t[c]$% := %$ t_c$\label{code:agree-deliver-update}%
	%$t^+[c]$% := %$\max(t_c + 1, t^+[c])$%%\label{code:agree-next-deliver}%
	%$hist$%.add(%$x$%)%\label{code:agree-hist}%
	%$s_n$% := %$s$%
	parallel for each execution group %$\mathcal{E}$%:
		%$c_\mathcal{E}$%.send(%$0, s, x$%)%\label{code:agree-deliver-send}%
	sleep until completed for %$n_e - z$% groups%\label{code:agree-wait-send}%
	// Not completed parallel calls continue in the background
	if %$s_n \equiv 0$ mod $k_a$%: // Periodically create a checkpoint
		cp.gen_cp(%$s_n$%, (%$t$%, %$hist$%))%\label{code:agree-gen-cp}%

on cp.stable_cp(%$s$%, %$st$% = (%$t'$%, %$hist'$%)):%\label{code:agree-stable-cp}%
	// Move commit window forward
	parallel for each execution group %$\mathcal{E}$%:
		%$c_\mathcal{E}$%.move_window(%$0, s-|hist'|+1$%)%\label{code:agree-cp-move}%
	ag.gc(%$s+1$%)%\label{code:agree-cp-gc}%
	if %$s > s_n$%:%\label{code:agree-cp-guard}%
		%$s_n'$% := %$s_n$%
		%$s_n$% := %$s$%%\label{code:agree-cp-update}%
		%$t$% := %$t'$%
		%$hist$% := %$hist'$%
		parallel for each execution group %$\mathcal{E}$%:
			// Add missing requests from hist to commit IRMC
			for %$x$% = %$\langle\textsc{Execute}, r, s'\rangle \in hist, s' \in [s_n'+1, s]$%:
				%$c_\mathcal{E}$%.send(%$0, s', x$%)%\label{code:agree-cp-send}%
		sleep until completed for %$n_e - z$% groups
	%$win$% := [%$s$%+1, %$s$%+AG-WIN]
	\end{lstlisting}
	\caption{Agreement replica $a$ (pseudo code)}
	\label{def:pseudo-agreement}
\end{figure}

The pseudo code for the client is shown in Figure~\ref{def:pseudo-client}, for the \exec in Figure~\ref{def:pseudo-execution} and for the agreement replica in Figure~\ref{def:pseudo-agreement}.

We assume that each method is executed atomically, unless it calls a blocking method, at which point execution may switch to other methods.
Variable definitions are written as \texttt{var := value}, whereas \texttt{=} is used for comparisons and destructuring of values, for example $x = \langle\textsc{Execute}, r, s'\rangle$ uses the value in $x$ to define $r$ and $s'$ using pattern matching.

\subsection{Proof}
The proof primarily considers write requests.
We assume for now that there is only one execution group, that is $n_e=1$ and $z=0$.
Later on, we will relax this assumption.
Strongly and weakly consistent read requests  are considered afterwards.
We write "L.~\ref{def:pseudo-client}.\ref{code:client-write}" to refer to Line~\ref{code:client-write} in Figure~\ref{def:pseudo-client}.

\subsubsection{\hspace{1mm}Agreement-Checkpoint~~Equivalence~~(CP-A- Equivalence)}
\begin{definition}[CP-A-Equivalence]
	\label{def:cp-a-equivalence}
	The state of an agreement replica ($s_n$, $t$, $hist$ and queued commit IRMCs messages) that has reached sequence number~$s$  via processing \texttt{ag.deliver}($s, r$) (L.~\ref{def:pseudo-agreement}.\ref{code:agree-deliver}) is equivalent to that of a replica that reaches sequence number~$s$ by applying a checkpoint for sequence number~$s$.
\end{definition}
\begin{proof}
	We prove this by induction.\\
	\indent{}\emph{Base case}: All correct agreement replicas initialize $s_n$, $t$, $hist$ and the commit IRMCs with identical values. There is no checkpoint for that sequence number, as no checkpoint was generated yet.\\
	\indent{}\emph{Induction step}: All correct agreement replicas pass through the same states by processing ordered requests or jump forward to one of those states via a checkpoint.
	
	As updates to the considered state parts are only made in either \texttt{ag.deliver}~(L.~\ref{def:pseudo-agreement}.\ref{code:agree-deliver}) or \texttt{cp.stable\_cp}~(L.~\ref{def:pseudo-agreement}.\ref{code:agree-stable-cp}), it suffices to show that when either of them updates $s_n$ to a certain sequence number, then the resulting replica states are equivalent.
	Note that the sequence number~$s_n$ increases monotonically as \texttt{ag.deliver} is per A-Order~\ref{def:a-order} only called for increasing sequence numbers and \texttt{cp.stable\_cp} only increases the value of $s_n$~(L.~\ref{def:pseudo-agreement}.\ref{code:agree-cp-guard}).
	
	Assume that from a common starting point, replicas reach sequence number~$s$ by processing \texttt{ag.deliver}($s, r$)~(L.~\ref{def:pseudo-agreement}.\ref{code:agree-deliver}):
	Per A-Safety~\ref{def:a-safety} and A-Order~\ref{def:a-order} all correct \agrs receive the same sequence of requests via their \texttt{ag.deliver} callback, that is $s_n$, $t$ and $hist$~(L.~\ref{def:pseudo-agreement}.\ref{code:agree-deliver-update}) evolve identically on those replicas.
	Therefore, a possible later call to \texttt{cp.gen\_cp}($s$, ($t, hist$))~(L.~\ref{def:pseudo-agreement}.\ref{code:agree-gen-cp}) for a sequence number~$s$ has identical parameters on all correct \agrs.
	
	As per CP-Safety~\ref{def:cp-safety} only checkpoints which were created by at least one correct replica  can become stable, any call of \texttt{cp.stable\_cp}($s$, ($t'$, $hist'$))~(L.~\ref{def:pseudo-agreement}.\ref{code:agree-stable-cp}) can only deliver that checkpoint for sequence number~$s$.
	Applying a checkpoint for the current or an older sequence number~$s \leq s_n$ does not change $s_n$, $t$ and $hist$~(L.~\ref{def:pseudo-agreement}.\ref{code:agree-cp-guard}).
	Applying a checkpoint for a newer sequence number~$s > s_n$ atomically sets $s_n$, $t$ and $hist$ to the state they had when the checkpoint was created~(L.~\ref{def:pseudo-agreement}.\ref{code:agree-cp-update}) and adds missing requests (i.e., those skipped by updating $s_n$) to the commit IRMCs.
	The call to \texttt{ag.gc}($s+1$), which happens atomically with the state update, ensures that \texttt{ag.deliver} will only be called for sequence numbers $\geq s+1$.
	Per A-Order~\ref{def:a-order} the next \texttt{ag.deliver} call must be for $s_n+1 = s+1$.
	
	When called for an old checkpoint ($s \leq s_n$), then \texttt{$c_\mathcal{E}$.move\_window} (L.~\ref{def:pseudo-agreement}.\ref{code:agree-cp-move}) has no effect, as a \texttt{$c_\mathcal{E}$.send} call for $s_n$ must already have been issued, such that the IRMC has queued messages at least up to sequence number~$s$.
	Therefore $max(c_{\mathcal{E},0}.win) \geq s \Leftrightarrow min(c_{\mathcal{E},0}.win) \geq s-|c_{\mathcal{E},0}|+1$ that is the window start is at least at the position requested by the \texttt{$c_\mathcal{E}$.move\_window} call, see also the remark below.
	
	For a newer checkpoint, as $|hist'| = |c_{\mathcal{E},0}|$, this together with moving the window forward from the sender-side (per IRMC-Liveness II~\ref{def:irmc-liveness2} and IRMC-Liveness III~\ref{def:irmc-liveness3}) is enough to completely replace the state of the IRMC, if necessary.
	Requests that were already contained in the IRMC must be identical as the message sent for a specific sequence number~$s$ in \texttt{ag.deliver} or \texttt{cp.stable\_cp}~(L.~\ref{def:pseudo-agreement}.\ref{code:agree-deliver-send} and \ref{def:pseudo-agreement}.\ref{code:agree-cp-send}) must be identical per induction assumption.
\end{proof}
\begin{remark}
	\texttt{$c_\mathcal{E}$.move\_window} (L.~\ref{def:pseudo-agreement}.\ref{code:agree-cp-move}) is actually called with $s-|hist'|+1$ which has the same effect as $s-|c_{\mathcal{E},0}|+1$ such that we assume $|hist'| = |c_{\mathcal{E},0}|$ in the following to simplify the presentation of the proof.
	As the first delivered agreement sequence number is 1 and for every delivered request a new message is added to $hist$~(L.~\ref{def:pseudo-agreement}.\ref{code:agree-hist}), the size of $|hist| = min(s_n,|c_{\mathcal{E},0}|)$.
	Thus when applying a checkpoint $s - |hist'|+1 = s - min(s,|c_{\mathcal{E},0}|)+1 = max(1, s-|c_{\mathcal{E},0}|+1)$.
	As $min(c_{\mathcal{E},0}.win)$ is initialized with $1$ and \texttt{$c_\mathcal{E}$.move\_window} ignores calls which move the window backwards, $s-|c_{\mathcal{E},0}|+1$ is equivalent to $s - |hist'|+1$.
\end{remark}

\subsubsection{Execution Safety (E-Safety)}
To prove property E-Safety~\ref{def:e-safety} we start with the following lemma:
\begin{lemma}
	\label{def:e-safety-commit}
	When two \execs~$e_1$ and $e_2$ receive message $m$ and $m'$ at position $p$ in the commit channel, then $m = m'$.
\end{lemma}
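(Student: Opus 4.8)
The plan is to reduce the claim to the send semantics of the commit channel together with the safety of the agreement black-box. First I would invoke IRMC-Correctness~I (Definition~\ref{def:irmc-correctness1}) on both receptions: since $e_1$ obtains $m$ at position~$p$ on the (single-subchannel) commit channel, some correct agreement replica must have called $send(0,p,m)$, and likewise the reception of $m'$ by $e_2$ implies that some correct agreement replica called $send(0,p,m')$. The two correct senders need not be the same replica, so the task reduces to showing that every correct agreement replica that sends at commit position~$p$ sends identical content.

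Next I would recall that on the commit channel the position~$p$ coincides with the agreement sequence number~$s$, and that a correct agreement replica issues a $send(0,s,\cdot)$ only inside \texttt{ag.deliver}($s,r$) (L.~\ref{def:pseudo-agreement}.\ref{code:agree-deliver-send}) or while filling in skipped entries after applying a checkpoint (L.~\ref{def:pseudo-agreement}.\ref{code:agree-cp-send}). In the first case the content is the \textsc{Execute} message $\langle\textsc{Execute},r,s\rangle$ built from the request~$r$ delivered for sequence number~$s$. By A-Safety (Definition~\ref{def:a-safety}), any two correct agreement replicas that deliver for~$s$ deliver the identical request, so the \textsc{Execute} messages they construct, and hence their sent content at position~$p$, coincide.

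It then remains to rule out a discrepancy when one correct sender reaches~$s$ by delivery and the other by checkpoint. Here I would appeal to CP-A-Equivalence (Definition~\ref{def:cp-a-equivalence}), established just above: the state of a correct agreement replica that reaches~$s$ via a checkpoint, including the queued commit-channel messages, is identical to the state it would have reached by processing the deliveries up to~$s$. Consequently the message a checkpoint-following replica places at position~$p$ equals the message a delivering replica places there, which by the previous paragraph is the common \textsc{Execute} for sequence number~$s$. Combining both cases, every correct agreement replica sends the same content at position~$p$, so $m=m'$ as required.

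I expect the checkpoint case to be the main subtlety: without CP-A-Equivalence one might worry that a replica jumping ahead via a checkpoint re-inserts a commit message diverging from the one a freshly delivering replica would produce. Because CP-Safety (Definition~\ref{def:cp-safety}) guarantees that a stable checkpoint originates from at least one correct replica, and CP-A-Equivalence ties the checkpointed commit messages back to the delivery path, this worry is dispelled and the two sources of $send$ calls are reconciled, completing the argument.
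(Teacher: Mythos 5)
Your proposal is correct and follows essentially the same route as the paper's proof: IRMC-Correctness~I to trace each received message back to a correct agreement-replica sender, A-Safety to reconcile two senders that both reach position~$p$ via \texttt{ag.deliver}, and CP-A-Equivalence to handle the case where a sender populates the commit channel after applying a checkpoint. The only difference is presentational (the paper argues by contradiction, you argue directly), which does not change the substance.
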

\begin{proof}
	\label{proof:e-safety}
	We prove this by contradiction. Assume that $m\neq m'$.
	Per IRMC-Correctness I~\ref{def:irmc-correctness1} \texttt{$c_\mathcal{E}$.receive}($0, p$)~(L.~\ref{def:pseudo-execution}.\ref{code:exec-receive}) only delivers a message $m$ that was sent by a correct \agr, the same holds for $m'$.
	Therefore \texttt{$c_\mathcal{E}$.send}($0, p, m$) and \texttt{$c_\mathcal{E}$.send}($0, p, m'$) (either at L.~\ref{def:pseudo-agreement}.\ref{code:agree-deliver-send} or \ref{def:pseudo-agreement}.\ref{code:agree-cp-send}) must have been called by a correct \agr each.
	For the \texttt{$c_\mathcal{E}$.send} call in \texttt{ag.deliver}, the \ag must have delivered message~$m$ and $m'$ on two correct replicas, which contradicts A-Safety~\ref{def:a-safety}.
	And according to CP-A-Equivalence~\ref{def:cp-a-equivalence} the \texttt{$c_\mathcal{E}$.send} when applying a checkpoint in \texttt{cp.stable\_cp} is equivalent to the previous send call in \texttt{ag.deliver}, which contradicts the assumption.
\end{proof}

\noindent{}With this we can prove E-Safety~\ref{def:e-safety}:
\begin{corollary}
	An \exec only executes requests received from the commit channel (compare L.~\ref{def:pseudo-execution}.\ref{code:exec-receive} - \ref{def:pseudo-execution}.\ref{code:exec-execute}) which according to Lemma~\ref{def:e-safety-commit} cannot receive different requests on different correct \execs.
\end{corollary}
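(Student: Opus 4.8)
The plan is to obtain E-Safety (Definition~\ref{def:e-safety}) as an almost immediate consequence of Lemma~\ref{def:e-safety-commit}, once the execution logic is shown to be deterministic in the commit-channel content. First I would read off from the execution-replica main loop (Figure~\ref{def:pseudo-execution}, L.~\ref{code:exec-receive}--\ref{code:exec-execute}) that a correct \exec modifies its application state only after obtaining the corresponding \textsc{Execute} message from its commit channel, and that it consumes these messages strictly in order of increasing commit position. Consequently, the whole sequence of state-modifying operations performed by a correct replica is a deterministic function of the messages it receives at the successive commit positions. Invoking Lemma~\ref{def:e-safety-commit}, any two correct \execs that both receive a message at a position~$p$ receive the identical message there, so all correct replicas are fed the same per-position commit content.

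It then remains to argue that feeding identical per-position content through the deterministic processing logic yields the same sequence of executed writes, which is what makes the ``$i$\textsuperscript{th} write'' well-defined across replicas. Two features complicate this bookkeeping: not every position carries an actual write (strongly consistent reads reach non-designated groups only as placeholders), and the duplicate filter skips a request whose client counter does not exceed the highest value already processed for that client. Both the placeholder test and the counter-based duplicate check are deterministic and depend only on the (identical) per-position content and on state that, by induction on the commit position, already agrees across correct replicas; hence every correct replica skips exactly the same positions and applies exactly the same subsequence of writes. The map sending $i$ to the commit position carrying the $i$\textsuperscript{th} executed write is therefore identical on all correct \execs, and Lemma~\ref{def:e-safety-commit} makes the writes at those positions identical, which establishes E-Safety.

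The main obstacle I anticipate is reconciling a replica that advances position-by-position with one that jumps forward by applying an execution checkpoint, because the latter never explicitly receives the intermediate \textsc{Execute} messages and the induction above does not directly cover it. I would close this gap with an execution-side analogue of CP-A-Equivalence (Definition~\ref{def:cp-a-equivalence}): an execution checkpoint captures the application state together with the latest replies, so by CP-Safety (Definition~\ref{def:cp-safety}) a stable checkpoint for position~$s$ was produced by a correct replica that had itself processed exactly the same position-ordered content up to~$s$, and by the RSM property (Definition~\ref{def:rsm}) applying it leaves the catching-up replica in precisely the state a sequential replica reaches after position~$s$. Strengthening the induction to range over either advancing by processing or advancing by checkpoint---mirroring the structure of the CP-A-Equivalence argument---then shows that all correct \execs traverse the same state sequence regardless of how they advance, which removes the last ambiguity in counting the $i$\textsuperscript{th} write and completes the proof.
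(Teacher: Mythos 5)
Your core argument matches the paper's: the paper proves this corollary in one line, observing that execution replicas only apply requests obtained from the commit channel and that, by Lemma~\ref{def:e-safety-commit}, the per-position commit content cannot differ between correct replicas. Where you diverge is in scope: the additional work you do --- the induction showing that the duplicate filter and placeholder checks make identical skip decisions on all correct replicas, and the reconciliation of position-by-position processing with checkpoint jumps --- is exactly the content of the paper's CP-E-Equivalence (Definition~\ref{def:cp-e-equivalence}), which the paper states and proves \emph{after} this corollary, citing the corollary as an ingredient. So you have effectively inlined that later lemma into the E-Safety proof. Your instinct is defensible: the bare corollary leaves the well-definedness of ``the $i$\textsuperscript{th} write'' implicit, since replicas with divergent $u[c]$ state could in principle skip different positions, and your strengthened induction closes that gap explicitly. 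There is no circularity in your reordering, because the only fact CP-E-Equivalence needs from E-Safety is the per-position agreement of Lemma~\ref{def:e-safety-commit} itself, which is available before either result. The trade-off is that the paper's decomposition keeps the corollary immediate and isolates the state-equivalence bookkeeping in one reusable lemma, whereas your version front-loads that bookkeeping but makes the corollary self-contained.
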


\subsubsection{Execution Checkpoint Equivalence (CP-E-Equi-valence)}
\begin{definition}[CP-E-Equivalence]
	\label{def:cp-e-equivalence}
	The state of an \exec~($s_n$, $app$ and $u$) that has reached sequence number~$s_n$  via processing the corresponding \textsc{Execute} message~(L.~\ref{def:pseudo-execution}.\ref{code:exec-exec-msg}) for $s_n$ is equivalent to that of a replica that arrives there via a checkpoint for sequence number~$s_n$.
\end{definition}
\noindent{}The proof follows along the lines of CP-A-Equivalence~\ref{def:cp-a-equivalence}.
\begin{proof}
	We prove this by induction.\\
	\indent{}\emph{Base case}: All correct \execs initialize $s_n$, $app$ and $u$ with identical values.
	There is no checkpoint for that sequence number, as no checkpoint was generated yet.\\
	\indent{}\emph{Induction step}: All correct \execs pass through the same states or jump forward to one of those states via \linebreak a checkpoint.
	
	As updates to the considered state parts are only made in either the main loop~(L.~\ref{def:pseudo-execution}.\ref{code:exec-main}) or \texttt{cp.stable\_cp}~(L.~\ref{def:pseudo-execution}.\ref{code:exec-stable-cp}), it suffices to show that when either of them updates $s_n$ to a certain sequence number, then the resulting replica states are equivalent.
	Note that the sequence number~$s_n$ increases monotonically as the main loop only increments it~(L.~\ref{def:pseudo-execution}.\ref{code:exec-inc-steps}) and \texttt{cp.stable\_cp} only increases the value of $s_n$~(L.~\ref{def:pseudo-execution}.\ref{code:exec-cp-guard}).
	
	Assume that from a common starting point, replicas reach sequence number $s_n$ by processing the corresponding \textsc{Execute}-message~(L.~\ref{def:pseudo-execution}.\ref{code:exec-exec-msg}):
	As \texttt{$c_\mathcal{E}$.receive}($0, s_n+1$)~(L.~\ref{def:pseudo-execution}.\ref{code:exec-receive}) is called sequentially (without skipping) for each sequence number and per E-Safety~\ref{def:e-safety} all correct \execs process identical requests for each sequence number, the (atomic) modifications of $s_n$, $u[c]$ and $app$ in the main loop (L.~\ref{def:pseudo-execution}.\ref{code:exec-execute} and following) are identical across \execs.
	Either all correct \execs come to the identical decision to skip execution of request $r$~(L.~\ref{def:pseudo-execution}.\ref{code:exec-skip})based on $u[c]$, which must be identical across replicas as per induction assumption the replica states were identical which includes $u[c]$, or according to RSM-property~\ref{def:rsm} the \execs arrive at identical $u[c]$ and \texttt{app} for $s_n$ after processing~$r$.
	
	Therefore a call to \texttt{cp.gen\_cp}($s$, ($u$, \texttt{app}))~(L.~\ref{def:pseudo-execution}.\ref{code:exec-gen-cp}) for sequence number~$s$ has identical parameters on all correct \execs and thus per CP-Safety~\ref{def:cp-safety} \texttt{cp.stable\_cp}($s$, ($u'$, \texttt{app'}))~(L.~\ref{code:exec-stable-cp}) can only deliver that checkpoint.
	
	Applying a checkpoint for the current or an older sequence number~$s \leq s_n$ does not change $s_n$, $app$ and $u$~(L.~\ref{def:pseudo-execution}.\ref{code:exec-cp-guard}).
	Applying a checkpoint for a newer sequence number~$s > s_n$ atomically sets $s_n$, $app$ and $u$ to the state they had when the checkpoint was created~(L.~\ref{def:pseudo-execution}.\ref{code:exec-cp-update}).
	Later calls to \texttt{$c_\mathcal{E}$.receive} (L.~\ref{def:pseudo-execution}.\ref{code:exec-receive}) will request the next sequence number after the checkpoint.
	
	\texttt{$c_\mathcal{E}$.move\_window} (L.~\ref{def:pseudo-execution}.\ref{code:exec-cp-move}) will cause any \texttt{$c_\mathcal{E}$.receive} calls for an old sequence number to finish with a \textsc{TooOld} message and request a sequence number after the checkpoint on the next iteration.
\end{proof}

\subsubsection{Execution Safety II (E-Safety II)}
\label{def:e-safety2-proof}
\begin{lemma}
	\label{def:client-result}
	When a client accepts a reply for its request, then that reply is correct and correct \execs provide the same reply.
\end{lemma}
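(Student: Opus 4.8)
The plan is to combine a simple quorum-intersection argument with the state-equivalence results already established for execution replicas. First I would invoke the client protocol (Figure~\ref{def:pseudo-client}): a client accepts a reply only after collecting $f_e+1$ replies that carry the same result $u_c$ and the same counter value $t_c$ from distinct execution replicas of its group. Since an execution group comprises $2f_e+1$ replicas of which at most $f_e$ are faulty, any set of $f_e+1$ distinct replicas contains at least one correct execution replica; call it $e^\ast$. Hence the accepted result $u_c$ is exactly the result that the correct replica $e^\ast$ produced and sent.

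Next I would pin down what $e^\ast$ actually sent. A correct execution replica emits a \textsc{Result} reply only after obtaining the corresponding \textsc{Execute} from its commit channel and executing the contained request, following the receive/execute/reply sequence in Figure~\ref{def:pseudo-execution}. Thus $u_c$ is genuinely the output of executing the client's request $r$ (for a correct client, the unique request issued under counter $t_c$) against $e^\ast$'s deterministic application state, rather than a value fabricated by the up to $f_e$ faulty replicas.

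The third step establishes the ``correct \execs provide the same reply'' clause and, with it, the correctness of $u_c$. By E-Safety~\ref{def:e-safety}, all correct execution replicas process an identical request at every commit-channel position, and by CP-E-Equivalence~\ref{def:cp-e-equivalence} a replica that reaches a sequence number via a checkpoint holds the same $(s_n, app, u)$ state as one that reaches it by direct execution. Consequently, using the RSM property~\ref{def:rsm} (identical deterministic state machines driven by the same total order reach identical states and produce identical outputs), every correct execution replica stores the same reply $u[c]$ for the client's request and would send the same \textsc{Result}. In particular, the reply computed by $e^\ast$---the one accepted by the client---coincides with the reply of every correct execution replica and is the genuine result of executing $r$, i.e., it is correct.

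The main obstacle, I expect, is the careful bookkeeping that links the counter value $t_c$ the client matches on to a single, well-defined request $r$: I must argue that for a correct client the reply a correct replica emits with counter $t_c$ is the result of executing precisely the request the client issued under that counter. This relies on the request channel delivering only requests vouched for by at least one correct execution replica, together with the duplicate-filtering/at-most-once behavior the counters enforce. Everything else reduces to the quorum-intersection step and an appeal to the two equivalence results, which do the heavy lifting.
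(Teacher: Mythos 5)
Your proposal matches the paper's own argument: quorum intersection over the $f_e+1$ matching replies yields at least one correct execution replica, and CP-E-Equivalence~\ref{def:cp-e-equivalence} (which itself rests on E-Safety~\ref{def:e-safety} and the RSM property~\ref{def:rsm} that you cite explicitly) guarantees that every correct execution replica arrives at the same state and reply, whether sent directly or retrieved from $u[c]$ on a retry. The counter-bookkeeping concern you raise is handled elsewhere in the paper (E-Validity~II and the main-loop duplicate filtering) and is not part of this lemma's proof, so your argument is complete as stated.
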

\begin{proof}
	A client waits for replies~(L.~\ref{def:pseudo-client}.\ref{code:client-loop}) from $f_e+1$  different replicas of its execution group with the same content~(L.~\ref{def:pseudo-client}.\ref{code:client-precheck} and \ref{def:pseudo-client}.\ref{code:client-check}), such that per failure assumption at least one of the replies is from a correct \exec.
	As shown in CP-E-Equivalence~\ref{def:cp-e-equivalence}, all correct \execs that process a request arrive at the same state and result.
	That result is either sent directly to the client~(L.~\ref{def:pseudo-execution}.\ref{code:exec-send-reply}) or retrieved from $u[c]$ on a request retry~(L.~\ref{def:pseudo-execution}.\ref{code:exec-resend-reply}).
\end{proof}

\noindent{}We can now prove E-Safety II~\ref{def:e-safety2}:
\begin{proof}
	In order to prove that \system provides linearizability, we have to show that requests issued at any point in time are always executed after all requests for which a client has accepted the reply, and that the execution follows the application's specification~\cite{herlihy90linearizability}.
	
	The latter part of the requirement was already shown in CP-E-Equivalence~\ref{def:cp-e-equivalence}, which uses the fact that requests are executed~(L.~\ref{def:pseudo-execution}.\ref{code:exec-execute}) in a total order.
	This also guarantees that at least one correct replica has processed the \textsc{Execute} message for each sequence number.
	An executed request must have been delivered by the \ag (see the proof in Section~\ref{proof:e-safety} for E-Safety~\ref{def:e-safety}).
	Assume that the \execs have executed request~$r$ which was ordered at sequence number~$s$.
	Now let the \execs execute a request~$r'$ afterwards which was ordered at a sequence number $s'$ with $s' < s$.
	However, as \execs only process requests in order, this contradicts the assumption that $r$ was already executed.
	Thus new requests are always ordered/executed at a sequence number higher than that of \linebreak previously executed requests.
	Per Lemma~\ref{def:client-result} a client cannot receive different replies from correct \execs.
	
	That is as soon as a single correct \exec sends a reply to the client, which by construction happens before that client has accepted the reply, later requests are always ordered at a higher sequence number.
\end{proof}

\begin{remark}
	The request IRMCs do not matter for E-Safety~\ref{def:e-safety} and E-Safety II~\ref{def:e-safety2}, as the \ag is safe independent of the input.
\end{remark}
\begin{remark}
	It is not necessary to store client messages in an execution checkpoint as a correct client keeps repeating incomplete requests, and as already executed requests are either part of a checkpoint or still available from the commit IRMC.
\end{remark}
\begin{remark}
	A correct \exec might not receive a request from a correct client when the other \execs already have processed it.
	This is the reason why \texttt{cp.stable\_cp} at \execs must push the window of a client's subchannel forward.
\end{remark}

\subsubsection{Execution Validity (E-Validity)}
E-Validity~\ref{def:e-validity} follows as a corollary:
\begin{corollary}
	Per Lemma \ref{def:e-safety-commit} an executed request must have been delivered by the \ag, and per A-Validity \ref{def:a-validity} only valid client requests are delivered that per cryptographic assumptions must originate from that client.
\end{corollary}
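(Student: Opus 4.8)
The plan is to establish E-Validity~\ref{def:e-validity} by tracing an executed request backward along the only path by which it can reach an \exec and showing that each link preserves both correct authentication and client origin. First I would observe that an \exec executes exactly those requests it obtains from the commit channel in its main loop (the \texttt{$c_\mathcal{E}$.receive} call immediately preceding execution). Hence it suffices to argue that any message a correct \exec receives from a commit channel traces back to a genuine, correctly authenticated client request.

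Second, I would reuse the argument already developed for Lemma~\ref{def:e-safety-commit}: by IRMC-Correctness~I~\ref{def:irmc-correctness1}, a message delivered by the commit channel must have been sent by a correct \agr via \texttt{$c_\mathcal{E}$.send}. A correct \agr issues such a send in only two places — directly in \texttt{ag.deliver}, after the \ag has delivered the request, or in \texttt{cp.stable\_cp}, while applying a checkpoint. The first case means the \ag delivered the request on at least one correct replica; the second case reduces to the first by CP-A-Equivalence~\ref{def:cp-a-equivalence}, since the commit-channel message reconstructed from a stable checkpoint is identical to the one originally sent after a genuine \texttt{ag.deliver}. In either case, the executed request was delivered by the \ag.

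Third, I would close the chain with A-Validity~\ref{def:a-validity}, which guarantees that a correct \agr only delivers correctly authenticated client requests, and then with the cryptographic assumptions: a correctly authenticated request carries a signature or MAC that, under unforgeability, only its client could have produced. Together these yield that every executed request is correctly authenticated and originates from the stated client, which is precisely E-Validity.

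I do not anticipate a genuine obstacle, since the statement is a corollary of results already proven; the only point demanding care is confirming that no alternative execution path bypasses agreement delivery. The checkpoint-application path is the natural candidate for such a bypass, but CP-A-Equivalence~\ref{def:cp-a-equivalence} (and, on the receiver side, CP-E-Equivalence~\ref{def:cp-e-equivalence}) already ensures that any request surfacing via a checkpoint is byte-for-byte one that a correct \agr would have committed through \texttt{ag.deliver}, so the authentication and origin guarantees carry over unchanged.
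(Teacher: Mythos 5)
Your proposal is correct and follows essentially the same route as the paper: the paper's corollary likewise chains Lemma~\ref{def:e-safety-commit} (an executed request must have been delivered by the agreement black-box, with the checkpoint path handled via CP-A-Equivalence inside that lemma's proof) with A-Validity~\ref{def:a-validity} and the cryptographic assumptions. You merely unpack the commit-channel and checkpoint-bypass reasoning that the paper delegates to the cited lemma, so there is no substantive difference.
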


\subsubsection{\hspace{1mm}Execution Validity II (E-Validity II)}
Next, we prove E-Validity II~\ref{def:e-validity2}:
\begin{proof}
	This follows by construction of the main loop~(L.~\ref{def:pseudo-execution}.\ref{code:exec-main}):
	Requests which are not either the first request of a client or which do not have a higher counter value $t_c$ than the last one are skipped~(L.~\ref{def:pseudo-execution}.\ref{code:exec-skip}).
	After executing a request the latest counter for client $c$ is stored~(L.~\ref{def:pseudo-execution}.\ref{code:exec-store-rep}).
	As a request cannot have a counter value higher than its own counter value, it can be executed at most once.
	Per CP-E-Equivalence~\ref{def:cp-e-equivalence} $u$ and $app$ are always restored together, such that if the application state contains the effects of executing the write request, this fact is also reflected in $u$.
	And therefore the request will not be executed more than once.
\end{proof}

\subsubsection{Execution Liveness (E-Liveness)}
We now prove that a correct client will eventually receive a reply to its request(s).
Without loss of generality, we consider all requests to originate from the same client.
For this we show that each of the processing steps a request passes through will eventually make progress.
The lemmas assume implicitly that the client has either collected a stable reply (in which case the request processing is finished) or that it still waits for replies to its request and thus keeps resending its request.

\begin{lemma}
	\label{def:e-liveness-e-req}
	When a correct client sends a new request~$r$, then an \exec will pass it on to its request IRMC (unless it has already seen a newer request from that client).
\end{lemma}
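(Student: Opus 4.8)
The plan is to trace the request through the \exec's \texttt{receive} handler in Figure~\ref{def:pseudo-execution} and argue that a correct replica reaches the forwarding step. First I would establish arrival: by the client code in Figure~\ref{def:pseudo-client}, a correct client broadcasts its \textsc{Write} to every member of its execution group and, inside its \texttt{repeat until reply received} loop, keeps resending until it has collected a stable reply. Under the partially synchronous model with eventually long-enough periods of synchrony, this guarantees that every correct \exec eventually receives a correctly authenticated copy of~$r$. I would fix an arbitrary correct \exec~$e$ and note that the authentication/permission check passes, because $r$ originates from a correct client that authenticates it according to specification, so by the cryptographic assumptions no correct replica discards $r$ at this point.

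The crux of the argument is the counter guard, where I would case-split on the highest request counter $e$ has recorded for this client. If that recorded value is strictly larger than $t_c$, then $e$ has already observed a strictly newer request, which is exactly the escape clause ``unless it has already seen a newer request'' and makes the conclusion hold vacuously. If the recorded value equals $t_c$, then $e$ has already forwarded this very request (a correct client uses strictly increasing counters), so it has by definition already passed $r$ on and the conclusion holds directly. Otherwise $t_c$ is fresh, the guard lets $e$ fall through to the forwarding block, and $e$ has not yet moved the client's subchannel window past~$t_c$.

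For the fresh case I would discharge the remaining forward-once cache check—if $e$ already forwarded $r$, the claim again holds trivially—and otherwise $e$ executes \texttt{request-IRMC.move\_window} for the client's subchannel at position~$t_c$ followed by \texttt{request-IRMC.send} at that same position. Here I would invoke the sender-initiated window-shift semantics of Figure~\ref{def:irmc-interface}: a sender requesting a shift moves its own window so that its lower limit becomes~$t_c$, hence the subsequent \texttt{send} targets exactly the window start. Since every subchannel has capacity $\geq 1$, position~$t_c$ is not beyond the window's upper limit, so by the blocking rule of Remark~\ref{def:irmc-remark} the \texttt{send} call returns rather than blocking, completing the hand-off to the request \channel. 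Because this reasoning applies to every correct \exec and at most $f_e$ of the $2f_e+1$ execution replicas are faulty, it in fact yields at least $f_e+1$ correct forwarders, which is what the downstream appeal to IRMC-Liveness~I will require.

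The main obstacle I anticipate is not any individual step but keeping the three interacting guards—the counter check, the forward-once cache, and the non-blocking of \texttt{send}—cleanly disentangled. The cleanest way to do so is to align the failing branch of the counter guard exactly with the lemma's stated exception (or with the conclusion already holding, for the equal-counter duplicate), to treat a cache hit as trivially satisfying the conclusion, and to rely on the sender-driven move semantics rather than waiting for any receiver-side acknowledgement, so that the freshly shifted window guarantees \texttt{send} is non-blocking. With these observations in place, the lemma reduces to a straightforward case analysis over the handler in Figure~\ref{def:pseudo-execution}.
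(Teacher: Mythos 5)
Your case analysis of the receive handler (authentication, counter guard, forward-once cache) matches the paper's proof, which likewise argues that a correct client's request passes the MAC/signature checks and that $t_c > t[c]$ because the stored counter can only hold an older value or the default of~$0$; the paper simply folds your first two counter cases into the lemma's ``unless'' clause by assuming at the outset that the request is new from the replica's perspective. Up to that point your argument is sound, and your explicit treatment of the equal-counter retry case is a reasonable refinement. Your added delivery argument (client broadcast plus retransmission under partial synchrony) is also consistent with what the paper leaves implicit.

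The genuine gap is your final step, where you claim the \texttt{send} call cannot block because ``a sender requesting a shift moves its own window so that its lower limit becomes~$t_c$.'' That is not how the channel is specified: a sender-side \texttt{move\_window} only \emph{requests} a shift from the receiver endpoints, and a correct sender endpoint advances its local window --- and may discard messages at lower positions --- only after learning the $f_r+1$-highest position requested by receivers, i.e., only after at least one correct receiver has permitted the move. If a sender could unilaterally advance its own window, your step would go through, but the flow-control guarantee underlying IRMC-Correctness~II would be lost. Whether \texttt{send} eventually returns is therefore a genuine liveness question requiring the cooperation of $f_e+1$ correct execution replicas, so that IRMC-Liveness~III forces all correct agreement replicas to move their receiver-side windows and IRMC-Liveness~II then lets the senders' calls return; the paper isolates exactly this in the separate Lemma~\ref{def:e-liveness-e-send}. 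For the present lemma you only need the replica to \emph{issue} the \texttt{send} call, so the fix is either to drop the non-blocking claim and defer it to that lemma, or to prove it via the receiver-driven mechanism rather than a sender-local window shift.
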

\begin{proof}
	Assume that an \exec receives a, from its perspective, new request~(L.~\ref{def:pseudo-execution}.\ref{code:exec-recv-write}).
	By definition a request $r=\langle\textsc{Write},w,c,t_c\rangle$ sent by a correct client is correctly authenticated and signed~(L.~\ref{def:pseudo-client}.\ref{code:client-sign}) and therefore passes the MAC and signature checks~(L.~\ref{def:pseudo-execution}.\ref{code:exec-check-mac} and \ref{def:pseudo-execution}.\ref{code:exec-check-sig}).
	The counter value $t_c$ is $t_c>t_c'$, with $t_c'$ being the counter value of any older request, as a correct client always increments its counter value after accepting a reply~(L.~\ref{def:pseudo-client}.\ref{code:client-counter}).
	As $t[c]$ is only modified when the \exec receives a valid request from the client~(L.~\ref{def:pseudo-execution}.\ref{code:exec-counter}), it must contain either some older value $t_c'$ or the default of~$0$.
	(The client starts with $t_c=1$, whereas an \exec has $t[c]=0$.)
	Therefore $t_c>t[c]$ and the \exec calls \texttt{$r_\mathcal{E}$.send}($c$, $t_c$, unwrap($m$))~(L.~\ref{def:pseudo-execution}.\ref{code:exec-send}).
	
	In case the request is not new to the \exec, then the Lemma provides no assurances.
\end{proof}

\begin{lemma}
	\label{def:e-liveness-e-send}
	The send call by the \execs for the client's request IRMC will not block indefinitely.
\end{lemma}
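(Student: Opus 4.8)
The plan is to start from Remark~\ref{def:irmc-remark}, which tells us that a \texttt{send} call for a request at position~$t_c$ in client~$c$'s subchannel can only block while $t_c$ lies beyond the upper limit of that subchannel's window. Hence it suffices to show that this window eventually advances far enough. I would obtain this advance by chaining the two IRMC liveness properties: a sender-side \texttt{move\_window} request triggers a receiver-side window shift (IRMC-Liveness~III), which in turn unblocks the stuck \texttt{send} (IRMC-Liveness~II).

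First I would establish that every correct execution replica eventually issues the same \texttt{move\_window}($c$, $t_c$) followed by \texttt{send}($c$, $t_c$, $\cdot$) with identical content. Since the client is correct it broadcasts its properly signed request to the whole execution group, and because a correct client only advances to request~$t_c$ after accepting the reply for~$t_c-1$, no correct replica has yet seen a newer request from~$c$. Lemma~\ref{def:e-liveness-e-req} therefore applies to each of the $f_e+1$ correct execution replicas, all of which forward the request using the same counter value~$t_c$ read from the authenticated message, so that both the move request and the subsequent send use the identical position.

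Next I would push this through the channel, noting that on the request channel the senders are the $2f_e+1$ execution replicas (so $f_s=f_e$) and the receivers are the $3f_a+1$ agreement replicas (so $f_r=f_a$). Because all $f_e+1=f_s+1$ correct senders call \texttt{move\_window}($c$, $t_c$) with the identical position, IRMC-Liveness~III~\ref{def:irmc-liveness3} pins the receiver-side shift to exactly~$t_c$ (the largest and the $(f_s+1)$-largest requested position coincide), so eventually all $2f_a+1$ correct receiver endpoints internally move their windows to~$t_c$. Feeding this into IRMC-Liveness~II~\ref{def:irmc-liveness2}, at least $f_a+1=f_r+1$ correct receivers have moved to~$t_c$, giving $\tilde{p}=t_c$; since $|IRMC_{sc}|\geq 1$ we have $t_c<\tilde{p}+|IRMC_{sc}|$, so every correct sender's \texttt{send}($c$, $t_c$, $\cdot$) eventually returns and the call does not block indefinitely.

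The main obstacle I anticipate is the bookkeeping of quorum thresholds across the two properties: one must make the $f_s+1$ senders feeding IRMC-Liveness~III coincide with the correct execution replicas, confirm that the resulting $2f_a+1$ correct receiver shifts supply the $f_r+1$ receivers required by IRMC-Liveness~II, and verify that identical positions collapse the ``$(f+1)$-largest'' bounds to~$t_c$ itself. A secondary point worth a sentence is ruling out that the window has already advanced past~$t_c$: for a still-pending request of a correct client the position~$t_c$ cannot have been garbage collected, so the sender never attempts to send behind its own window.
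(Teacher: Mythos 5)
Your proposal is correct and follows essentially the same route as the paper's proof: invoke Lemma~\ref{def:e-liveness-e-req} to get $f_e+1$ correct execution replicas calling \texttt{move\_window}$(c,t_c)$, then chain IRMC-Liveness~III (receiver endpoints shift their windows) into IRMC-Liveness~II (the blocked \texttt{send} returns). The extra bookkeeping you add --- checking that the quorum thresholds line up and that the identical positions collapse the $(f+1)$-largest bounds to $t_c$ --- is consistent with, and slightly more explicit than, the paper's argument.
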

\begin{proof}
	The send call only blocks if the request counter $t_c$ > max($r_{\mathcal{E},c}.win$), that is the upper bound of the client's request subchannel, according to the definition of the $send$ method.
	To arrive at a contradiction assume that the \texttt{$r_\mathcal{E}$.send} call~(L.~\ref{def:pseudo-execution}.\ref{code:exec-send}) blocks indefinitely.
	As a correct client sends its (new) request to all \execs, eventually $f_e+1$~correct \execs will per Lemma~\ref{def:e-liveness-e-req} have called \texttt{$r_\mathcal{E}$.send} and therefore also \texttt{$r_\mathcal{E}$.move\_window}($c$, $t_c$)~(L.~\ref{def:pseudo-execution}.\ref{code:exec-move-win}).
	Per IRMC-Liveness III~\ref{def:irmc-liveness3} eventually all \agrs will call \texttt{$r_\mathcal{E}$.move\_window}($c$, $t_c$).
	With IRMC-Liveness II~\ref{def:irmc-liveness2} it follows that \texttt{$r_\mathcal{E}$.send} returns, which contradicts the assumption.
\end{proof}

\begin{lemma}
	\label{def:e-liveness-a-receive}
	An \agr will eventually try to receive a new correct request~$r$ from a correct client (unless it has already seen a newer one or skipped it with a checkpoint).
\end{lemma}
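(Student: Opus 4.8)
The plan is to follow the request along the next hop of its journey: having shown in Lemmas~\ref{def:e-liveness-e-req} and~\ref{def:e-liveness-e-send} that the new request~$r$ (with counter~$t_c$) reaches the request channel, I would track the position~$p_c$ at which an agreement replica's per-client loop (Figure~\ref{def:pseudo-agreement}) queries the subchannel of client~$c$, and argue that $p_c$ advances monotonically until the loop evaluates \texttt{$r_\mathcal{E}$.receive}($c$, $t_c$). The loop invariant I rely on is that each iteration either receives a request, submits it via \texttt{ag.order}, and advances to $p_c+1$, or receives a \textsc{TooOld} reply and jumps $p_c$ forward; in neither case does $p_c$ decrease. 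It therefore suffices to establish two facts: (i)~no iteration at a position $p_c < t_c$ blocks forever, so the loop cannot stall, and (ii)~the loop does not skip past~$t_c$ before querying it, which is exactly what the ``unless it has already seen a newer one or skipped it with a checkpoint'' clause permits.

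For~(i) I would combine the preceding lemmas with the channel's liveness guarantees. By Lemmas~\ref{def:e-liveness-e-req} and~\ref{def:e-liveness-e-send}, dispatching~$r$ causes $f_e+1$~correct execution replicas to call \texttt{$r_\mathcal{E}$.move\_window}($c$, $t_c$)~(L.~\ref{def:pseudo-execution}.\ref{code:exec-move-win}) and \texttt{$r_\mathcal{E}$.send}($c$, $t_c$, \dots)~(L.~\ref{def:pseudo-execution}.\ref{code:exec-send}) without blocking. Since the execution replicas are the senders of the request channel ($f_s=f_e$) and all $f_e+1$~correct ones request the same value, IRMC-Liveness~III~\ref{def:irmc-liveness3} guarantees that every correct agreement receiver endpoint eventually shifts the window of $c$'s subchannel to a start position~$\geq t_c$, while IRMC-Correctness~II~\ref{def:irmc-correctness2} bounds it by~$\leq t_c$ (a correct node never requests a position beyond the client's current counter), so the window start becomes exactly~$t_c$. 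Once that happens, any pending or future \texttt{receive}($c$, $p_c$) with $p_c<t_c$ returns $\langle\textsc{TooOld}, t_c\rangle$ by the receive semantics (Remark~\ref{def:irmc-remark}); and for any position below~$t_c$ whose message has not yet been garbage-collected, IRMC-Liveness~I~\ref{def:irmc-liveness1} instead returns that message, whereupon the loop invokes \texttt{ag.order} (which merely submits the request for ordering and returns) and advances to $p_c+1$. Hence every iteration below~$t_c$ terminates.

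Combining the two cases, $p_c$ passes through the finitely many positions below~$t_c$ and the loop eventually calls \texttt{$r_\mathcal{E}$.receive}($c$, $t_c$), which is precisely the attempt to receive~$r$. For~(ii) I would note that the only way for the loop to land strictly beyond~$t_c$ without querying it is a \textsc{TooOld} skip to some $p'>t_c$ or a checkpoint raising the recorded counter~$t[c]$ to at least~$t_c$; the former is ruled out by IRMC-Correctness~II~\ref{def:irmc-correctness2} unless a correct node requested $p'>t_c$ (which does not happen in this scenario), and the latter is the checkpoint case named in the lemma. The forward-only evolution of both $t[c]$ and~$p_c$, already used in CP-A-Equivalence~\ref{def:cp-a-equivalence}, then gives that, absent these exceptions, the loop reaches and queries~$t_c$.

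I expect step~(i) to be the main obstacle, specifically ruling out a permanent stall at an intermediate position. The delicate situation is a window that has not yet advanced together with a message that was garbage-collected before this particular agreement replica could read it, which could leave a naive argument waiting on a message that will never arrive. The resolution is that the same $f_e+1$~correct execution replicas that forwarded~$r$ also requested the window shift to~$t_c$, so IRMC-Liveness~III eventually drives the receiver window past~$p_c$ and converts the blocking \texttt{receive} into a \textsc{TooOld} return; the care needed here is to confirm that this window movement is driven entirely by correct senders and therefore cannot be held back by the up to $f_e$~faulty execution replicas.
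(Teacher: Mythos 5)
Your proof is correct and follows essentially the same route as the paper's: both build on Lemmas~\ref{def:e-liveness-e-req} and~\ref{def:e-liveness-e-send} to get $f_e+1$~correct execution replicas calling \texttt{send} and \texttt{move\_window}($c$, $t_c$), use IRMC-Liveness~I/III~(\ref{def:irmc-liveness1},~\ref{def:irmc-liveness3}) together with IRMC-Correctness~I/II~(\ref{def:irmc-correctness1},~\ref{def:irmc-correctness2}) to drive the agreement replica's query position to exactly~$t_c$ via a \textsc{TooOld} return, and rule out overshooting by the same contradiction (a correct execution replica would have to have received a newer valid request from the correct client). The only cosmetic differences are that the paper does a direct case split on $t^+[c]$ versus $t_c$ instead of your loop-invariant framing, and that it explicitly notes $t^+[c]$ can also be advanced by \texttt{ag.deliver} of an earlier ordered request --- a path your argument covers implicitly by the same contradiction.
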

\begin{proof}
	Lemma~\ref{def:e-liveness-e-send} has already shown that all ($\geq f_e+1$) correct \execs will \texttt{$r_\mathcal{E}$.send} the new client request~$r$ which per IRMC-Liveness~I~\ref{def:irmc-liveness1} can be received by a corresponding call on the \agrs unless it is no longer part of the window of the subchannel.
	According to IRMC-Correctness~I~\ref{def:irmc-correctness1} only request~$r$ can be received, as all correct \execs send this request.
	We therefore have to show that an \agr will call \texttt{$r_\mathcal{E}$.receive}($c$, $t^+[c]$) (L.~\ref{def:pseudo-agreement}.\ref{code:agree-client-receive})  for the right request counter value $t_c$.
	
	Assume that $t^+[c] < t_c$:
	As shown above in the proof of Lemma~\ref{def:e-liveness-e-send} all correct \agrs will eventually call \texttt{$r_\mathcal{E}$.move\_window}($c$, $t_c$), which according to the semantics of the $send$ method will cause it to return $\langle\textsc{TooOld}, t_c\rangle$ which is used to update $t^+[c]$~(L.~\ref{def:pseudo-agreement}.\ref{code:agree-too-old}) and request $t_c$ next.
	
	Assume that $t^+[c] > t_c$:
	We show that this case never applies.
	An \agr cannot have received a too new \textsc{TooOld} message and stored its counter value~(L.~\ref{def:pseudo-agreement}.\ref{code:agree-too-old}):
	Per IRMC-Correctness~II~\ref{def:irmc-correctness2} at least one \exec must have called \texttt{$r_\mathcal{E}$.move\_window} accordingly, which requires that a correct \exec has received a valid request with counter $t^+[c] > t_c$ from a correct client.
	This contradicts the assumption that the request is new.
	
	Incrementing $t^+[c]$ after having received a previous request~(L.~\ref{def:pseudo-agreement}.\ref{code:agree-next-req})  or processing it in \texttt{ag.deliver}~(L.~\ref{def:pseudo-agreement}.\ref{code:agree-next-deliver}) 
	would require a previous request with counter value $t_c' \geq t_c$, which contradicts the assumption.
	(A faulty client could cause some chaos here, but this is no problem as the effects are strictly limited to the client's subchannel.)
\end{proof}
\begin{remark}
	These properties effectively make the \texttt{$r_\mathcal{E}$.receive} call self-synchronizing.
\end{remark}

\begin{lemma}
	\label{def:e-liveness-a-deliver}
	The \ag will \texttt{ag.deliver} (L.~\ref{def:pseudo-agreement}.\ref{code:agree-deliver}) a new request~$r$ for sequence number~$s$ within bounded time or apply a checkpoint for a later or equal sequence number.
\end{lemma}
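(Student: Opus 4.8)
The plan is to reduce the statement to the black-box property A-Liveness~\ref{def:a-liveness}, whose precondition is that $2f_a+1$ correct \agrs submit~$r$ for ordering via \texttt{ag.order}. First I would discharge this precondition using the preceding lemma. By Lemma~\ref{def:e-liveness-a-receive}, every correct \agr eventually attempts to receive~$r$ from the client's request subchannel, and by IRMC-Correctness~I~\ref{def:irmc-correctness1} the only message it can obtain at that position is~$r$ itself, because all correct \execs forward exactly~$r$. According to the \agr pseudocode in Figure~\ref{def:pseudo-agreement}, a replica that receives~$r$ immediately calls \texttt{ag.order}($r$). Hence, apart from replicas that have already skipped past~$r$ via a checkpoint, every correct \agr eventually invokes \texttt{ag.order}($r$).

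Next I would perform the case split that mirrors the lemma's disjunction. If no correct replica skips~$r$, all $2f_a+1$ correct \agrs call \texttt{ag.order}($r$), so A-Liveness~\ref{def:a-liveness} applies: within a sufficiently long synchronous period (as guaranteed by the partial-synchrony assumption), at least $f_a+1$ correct replicas deliver~$r$ together with all preceding requests, i.e.\ \texttt{ag.deliver} fires for~$r$ at some sequence number~$s$. This establishes the first disjunct for those replicas. A replica that is not among these $f_a+1$, or one that skipped~$r$ outright, falls into the second disjunct, where its progress comes from applying a checkpoint; here the remaining work is to show that such a checkpoint carries a sequence number~$\ge s$. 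A skip occurs only when the applied checkpoint's per-client counter vector~$t$ already records a value~$\ge t_c$ for the client. By CP-Safety~\ref{def:cp-safety} this checkpoint was produced by a correct replica, and by CP-A-Equivalence~\ref{def:cp-a-equivalence} its state reflects one in which~$r$ had already been ordered, at a sequence number no larger than the checkpoint's, which yields a checkpoint for a sequence number~$\ge s$.

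I expect the main obstacle to be the bookkeeping that links the checkpoint's sequence number to~$s$ in the second case, since~$s$ is only defined relative to the agreement's delivery order. I would close this gap by invoking A-Safety~\ref{def:a-safety} and A-Order~\ref{def:a-order} to argue that the sequence number assigned to~$r$ is the same at all correct replicas and that deliveries occur without gaps, so that any checkpoint whose counter vector already covers~$t_c$ necessarily corresponds to a sequence number at least as large as the position at which~$r$ is ordered. A secondary subtlety worth flagging is that A-Liveness only promises delivery at $f_a+1$ correct replicas rather than all $2f_a+1$; this is precisely why the disjunction is needed, because a trailing replica outside that set relies on the checkpoint mechanism rather than its own \texttt{ag.deliver} to make progress.
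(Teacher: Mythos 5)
Your overall strategy matches the paper's: get enough correct agreement replicas to receive~$r$ and call \texttt{ag.order}, invoke A-Liveness~\ref{def:a-liveness}, and dispose of the skip case via CP-Safety~\ref{def:cp-safety}. The one step where your argument has a real hole is the claim that, once a correct agreement replica calls \texttt{receive} at the client's subchannel position~$t_c$, ``the only message it can obtain at that position is~$r$ itself'' by IRMC-Correctness~I~\ref{def:irmc-correctness1}. IRMC-Correctness~I only constrains which \emph{message} a \texttt{receive} can return; it says nothing about the other possible outcome, namely the \textsc{TooOld} exception raised when the subchannel window has been moved past~$t_c$. If that happened, the replica would skip the position and never call \texttt{ag.order}($r$), so the precondition of A-Liveness ($2f_a+1$ correct replicas receiving~$r$ for ordering) would not be met and your first case would collapse. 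Lemma~\ref{def:e-liveness-a-receive} does not close this for you either: its statement only guarantees that the replica eventually \emph{tries} to receive at the right position, not what the call returns.

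The paper closes exactly this hole with a dedicated argument: a correct client does not issue a request with a counter larger than~$t_c$ before~$r$ has been executed, so correct execution replicas only ever call \texttt{move\_window} on that subchannel with exactly~$t_c$~(L.~\ref{def:pseudo-execution}.\ref{code:exec-move-win}); by IRMC-Correctness~II~\ref{def:irmc-correctness2} a \textsc{TooOld} with position~$>t_c$ would require a correct replica to have requested such a move, which (via Lemma~\ref{def:e-safety-commit}) would presuppose that~$r$ was already delivered --- the contradiction the paper is after. You need to add this step. Your treatment of the second disjunct (checkpoint skip) is fine and, if anything, more detailed than the paper's one-line appeal to CP-Safety; the extra bookkeeping via CP-A-Equivalence, A-Safety and A-Order that you sketch is consistent with how the paper justifies that a skipping checkpoint covers sequence number~$s$.
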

\begin{proof}
	After $f_e+1$~\execs complete their call to \texttt{$r_\mathcal{E}$.send}($c$, $t_c$, $r$)~(L.~\ref{def:pseudo-execution}.\ref{code:exec-send}) an \agr can receive request~$r$ and start the agreement process.
	
	Assume that the request~$r$ is not delivered within bounded time and is also not skipped via a checkpoint.
	The request of a correct client will eventually arrive at all correct ($\geq~f_e +1$) \execs.
	With Lemma~\ref{def:e-liveness-e-req} and \ref{def:e-liveness-e-send} it follows that $f_e+1$ correct \execs call \texttt{$r_\mathcal{E}$.send}.
	With IRMC-Liveness~I~\ref{def:irmc-liveness1}, IRMC-Correctness~I~\ref{def:irmc-correctness1} and Lemma~\ref{def:e-liveness-a-receive} it follows that all correct \agrs will eventually receive the request~$r$ or a $\langle\textsc{TooOld},t_c'\rangle$ message if \texttt{$r_\mathcal{E}$.move\_window}~(L.~\ref{def:pseudo-execution}.\ref{code:exec-move-win}) is called by $f_e+1$~\execs with $t_c'>t_c$.
	As a correct client does not issue a request with counter $t_c' > t_c$ before $r$ was executed, all correct \execs will eventually call \texttt{$r_\mathcal{E}$.move\_window} with exactly $t_c$, but no higher value, such that receiving \textsc{TooOld} would violate IRMC-Correctness~II~\ref{def:irmc-correctness2}. (Executing $r$ as is shown in the proof of Lemma~\ref{def:e-safety-commit} would require that it was delivered before by at least one correct \agr.)
	
	Thus, per IRMC-Liveness III~\ref{def:irmc-liveness3} all correct \agrs will eventually internally call \texttt{move\_window}($c$, $t_c$) on the request IRMC and $2f_a+1$ correct \agrs eventually receive request~$r$ as long as $r$ is not delivered.
	With A-Liveness~\ref{def:a-liveness} it follows that $f_a+1$ correct \agrs eventually deliver $r$, contradicting the assumption.
	
	Skipping the \texttt{ag.deliver} call via \texttt{ag.stable\_cp}~(L.~\ref{def:pseudo-agreement}.\ref{code:agree-stable-cp}) requires per CP-Safety~\ref{def:cp-safety} that at least one correct \agr created the checkpoint~(L.~\ref{def:pseudo-agreement}.\ref{code:agree-gen-cp}) and thus the \ag must already have delivered $r$.
\end{proof}

\begin{lemma}
	\label{def:e-liveness-e-exec}
	A request~$r$ delivered at sequence number~$s$ that is \texttt{$c_\mathcal{E}$.send} by $f_a+1$~correct \agrs will eventually either execute on $f_e+1$~correct \execs or on one correct \exec once a stable checkpoint with sequence number $s_{CP} \geq s$ was created.
\end{lemma}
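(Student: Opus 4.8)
The plan is to trace the committed \textsc{Execute} for $r$ through the commit channel and to show that each correct \exec either applies $r$ straight from the channel or skips over position~$s$ through a checkpoint that already incorporates~$r$; the two disjuncts of the lemma correspond exactly to these two outcomes. I would first fix the parameters: the commit channel is an \channel{} with the $3f_a+1$~\agrs as senders ($f_s=f_a$) and the $2f_e+1$~\execs as receivers ($f_r=f_e$), and its single subchannel~$0$ carries \msg{Execute}{r,s} at position~$s$. By hypothesis $f_a+1$ correct \agrs have returned from \texttt{$c_\mathcal{E}$.send}$(0,s,\langle\textsc{Execute},r,s\rangle)$.

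First I would establish that every correct \exec advances its sequence number $s_n$ monotonically to at least~$s$. A correct \exec drives its main loop by calling \texttt{$c_\mathcal{E}$.receive}$(0,s_n{+}1)$ in order; combining IRMC-Liveness~I~\ref{def:irmc-liveness1} (for positions still carried by the correct senders) with the sender-driven window synchronization of IRMC-Liveness~III~\ref{def:irmc-liveness3} (for positions already garbage-collected, the backfilling of retained \textsc{Execute}s by lagging \agrs being guaranteed by CP-A-Equivalence~\ref{def:cp-a-equivalence}), each such call eventually returns either a message or a \textsc{TooOld}. In the first case the \exec processes the \textsc{Execute} and increments $s_n$; in the second it fetches a checkpoint and jumps forward. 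Hence $s_n$ cannot stall below~$s$, and the detailed garbage-collection bookkeeping here is routine.

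Now I would split on the value returned by \texttt{$c_\mathcal{E}$.receive}$(0,s)$ at each correct \exec. If it returns the \textsc{Execute}, then by Lemma~\ref{def:e-safety-commit} all correct \execs obtain the identical message, and because $s$ is $r$'s own ordering position the duplicate filter of the main loop passes (or, for a replayed request, the matching reply already resides in~$u[c]$); the replica therefore executes~$r$ and stores the reply. If no correct \exec lags behind, all $f_e+1$ of them do this, which is the first disjunct. Otherwise some correct \exec receives \textsc{TooOld} with a position $p'>s$, and by IRMC-Correctness~II~\ref{def:irmc-correctness2} the commit window was advanced past~$s$ by a \texttt{move\_window}$(0,\hat p)$ call with $\hat p\geq p'>s$ issued by a correct sender or a correct receiver.

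The hard part will be showing that, in either sub-case, this advance is underwritten by a \emph{stable execution checkpoint} with $s_{CP}\geq s$, so that the lagging \exec has a correct checkpoint to recover from and the second disjunct holds. The receiver sub-case is direct: a correct \exec only calls \texttt{$c_\mathcal{E}$.move\_window}$(0,s_{CP}{+}1)$ from \texttt{cp.stable\_cp}, so $s_{CP}{+}1=\hat p>s$ yields a stable execution checkpoint with $s_{CP}\geq s$. The subtle sub-case is a correct \agr, whose \texttt{cp.stable\_cp} moves the commit window to $\hat p=s_{CP}^{ag}-|c_{\mathcal{E},0}|+1$. Here I would exploit the single-group flow control ($n_e{=}1$, $z{=}0$): since \texttt{ag.deliver} blocks until the \textsc{Execute} for $s_{CP}^{ag}$ has finished sending into the sole commit channel \emph{before} that agreement checkpoint is generated, the fact that this \texttt{send} had already returned forces, by the channel's blocking send semantics, the $f_e{+}1$-largest execution move request to be at least $s_{CP}^{ag}-|c_{\mathcal{E},0}|+1=\hat p$. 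A short induction then shows that an \agr's own checkpoint-driven moves can never bootstrap the window past this receiver-driven bound, so the commit-window start coincides with the $f_e{+}1$-largest execution move request; hence $\hat p>s$ forces at least $f_e{+}1$ \execs to have requested moves past~$s$, and at least one correct one among them holds a stable execution checkpoint with $s_{CP}\geq s$. In both sub-cases CP-Safety~\ref{def:cp-safety} makes a correct \exec the creator of that checkpoint, which, having processed the commit subchannel in order up to $s_{CP}\geq s$, executed~$r$; the trailing \exec recovers through \texttt{cp.fetch\_cp}, and CP-E-Equivalence~\ref{def:cp-e-equivalence} guarantees that applying the fetched checkpoint leaves it in exactly the state that executing~$r$ would have produced. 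This settles the second disjunct and completes the argument.
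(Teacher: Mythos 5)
Your proposal is correct and follows essentially the same route as the paper: both arguments hinge on IRMC-Liveness~I plus IRMC-Correctness~I to force delivery of the unique \textsc{Execute}, and on the observation that an agreement replica's \texttt{move\_window}$(0,\hat{s}-|c_{\mathcal{E},0}|+1)$ cannot actually advance the commit window (since completing \texttt{$c_\mathcal{E}$.send} for $\hat{s}$ already required $min(c_{\mathcal{E},0}.win)\geq\hat{s}-|c_{\mathcal{E},0}|+1$), so any genuine advance past~$s$ must be receiver-driven and hence backed by a stable execution checkpoint with $s_{CP}\geq s$, whose correct creator executed~$r$ by CP-Safety. The only difference is presentational: the paper assumes no such checkpoint and refutes the \textsc{TooOld} branch by contradiction, while you argue the contrapositive directly.
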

\begin{proof}
	Assume that no stable checkpoint with sequence number $s_{CP}\geq s$ is applied at the \exec~(L.~\ref{def:pseudo-execution}.\ref{code:exec-stable-cp}) before processing $r$:
	IRMC-Liveness~I~\ref{def:irmc-liveness1} states that $f_e+1$~correct \execs receive some request or a $\langle\textsc{TooOld}, s'\rangle$ message~(L.~\ref{def:pseudo-execution}.\ref{code:exec-receive}) with $s' > s$ as $f_a$+$1$ \agrs sent the request~(L.~\ref{def:pseudo-agreement}.\ref{code:agree-deliver-send}).
	According to IRMC-Correctness~I~\ref{def:irmc-correctness1} the request can only be request~$r$ as per A-Correctness~\ref{def:a-safety} all correct agreement replicas send request~$r$.
	The \execs cannot receive the \textsc{TooOld} message as this would violate IRMC-Correctness II~\ref{def:irmc-correctness2}:
	
	Execution replicas can only call \texttt{$c_\mathcal{E}$.move\_window}($0, s_{CP}+1$) (L.~\ref{def:pseudo-execution}.\ref{code:exec-cp-move}) with $s_{CP} < s$ per assumption and thus $s_{CP} + 1 \leq s$, which does not allow \textsc{TooOld} to be returned.
	
	As the \ag delivers requests in sequence number order according to A-Order~\ref{def:a-order}, an \exec will also be able to receive any other previous request between $s_{CP}$ and $s$ and therefore will eventually try to receive~$s$.
	
	Agreement replicas call \texttt{$c_\mathcal{E}$.move\_window}($0, \hat{s} - |c_{\mathcal{E},0}|+1$) (L.~\ref{def:pseudo-agreement}.\ref{code:agree-cp-move}).
	To create an agreement checkpoint at $\hat{s}$~(L.~\ref{def:pseudo-agreement}.\ref{code:agree-gen-cp}), the window of the commit subchannel must have included $\hat{s}$ (as \texttt{$c_\mathcal{E}$.send}~(L.~\ref{def:pseudo-agreement}.\ref{code:agree-deliver-send}) would have blocked otherwise), that is $max(c_{\mathcal{E},0}.win) \geq \hat{s} \Leftrightarrow min(c_{\mathcal{E},0}.win) + |c_{\mathcal{E},0}| - 1 \geq \hat{s} \Leftrightarrow min(c_{\mathcal{E},0}.win) \geq \hat{s} - |c_{\mathcal{E},0}|+1$.
	Therefore, an agreement replica cannot advance the window of the commit IRMC unless an execution group triggered the window move before.
	However, as shown in the previous paragraph the latter would contradict the assumption.
	Therefore, $f_e+1$~correct \execs will eventually execute the request and possibly create a checkpoint.
	
	Assume that a stable checkpoint with sequence number $s_{CP}\geq s$ gets applied:
	Per CP-Correctness~\ref{def:cp-safety} at least one correct \exec must have created the checkpoint and thus have executed the request as per the previous part of the proof.
	Per CP-Liveness~\ref{def:cp-liveness} all other correct \execs will eventually receive and apply the checkpoint or have executed the request.
\end{proof}

\begin{lemma}
	\label{def:e-liveness-e-cp}
	A correct execution checkpoint at sequence number $s_{CP}$ for which $f_a+1$~\agrs delivered and called \texttt{$c_\mathcal{E}$.send}($0, s_{CP}$)~(L.~\ref{def:pseudo-agreement}.\ref{code:agree-deliver-send}) will eventually become stable~(L.~\ref{def:pseudo-execution}.\ref{code:exec-stable-cp}) unless it is superseded by a newer one.
\end{lemma}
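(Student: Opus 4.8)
The plan is to reduce this statement to a single application of CP-Liveness~II~\ref{def:cp-liveness2}, which guarantees that once $f_e+1$~correct \execs create and distribute identical checkpoint messages for $s_{CP}$, the checkpoint eventually becomes stable unless a newer one supersedes it. Since a \emph{correct} execution checkpoint at $s_{CP}$ refers—by CP-E-Equivalence~\ref{def:cp-e-equivalence}—to the unique snapshot ($s_n$, $app$, $u$) that every correct \exec produces upon reaching $s_{CP}$, the entire lemma boils down to showing that at least $f_e+1$~correct \execs actually arrive at $s_{CP}$ and invoke \texttt{cp.gen\_cp}~(L.~\ref{def:pseudo-execution}.\ref{code:exec-gen-cp}) with matching content.

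To drive $f_e+1$~correct \execs to $s_{CP}$, I would invoke the preceding Lemma~\ref{def:e-liveness-e-exec} for the request delivered at sequence number~$s_{CP}$. Its hypothesis is exactly the one assumed here, namely that $f_a+1$~\agrs have called \texttt{$c_\mathcal{E}$.send}($0, s_{CP}$)~(L.~\ref{def:pseudo-agreement}.\ref{code:agree-deliver-send}); and since an \exec receives commit-channel messages sequentially for $s_n+1$ and the \ag delivers in order per A-Order~\ref{def:a-order}, reaching $s_{CP}$ also entails receiving every preceding sequence number (or skipping it via a checkpoint, as argued in that lemma). Lemma~\ref{def:e-liveness-e-exec} then yields one of two outcomes: either $f_e+1$~correct \execs execute up to and including $s_{CP}$, or some correct \exec reaches a state at a sequence number $s' \geq s_{CP}$ by applying a stable checkpoint.

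In the first outcome, because $s_{CP}$ is a checkpoint boundary ($s_{CP} \equiv 0~mod~k_e$), each of the $f_e+1$~correct \execs passes the checkpoint guard and calls \texttt{cp.gen\_cp}; by CP-E-Equivalence~\ref{def:cp-e-equivalence} their snapshots, and hence the hashes carried in their checkpoint messages, are identical, so CP-Liveness~II~\ref{def:cp-liveness2} delivers the desired stability. The main obstacle is the second outcome, which I would resolve by a case split on $s'$: if $s' = s_{CP}$, then by the definition of a stable checkpoint together with CP-Safety~\ref{def:cp-safety} the applied checkpoint was created by a correct replica and thus coincides with the correct checkpoint at $s_{CP}$, so that checkpoint is already stable and the conclusion holds directly; if $s' > s_{CP}$, then the checkpoint at $s_{CP}$ has been superseded by a newer one, which is precisely the escape clause in the lemma's statement. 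Arguing that these two branches of Lemma~\ref{def:e-liveness-e-exec} map cleanly onto ``becomes stable'' versus ``is superseded''—without leaving a gap in which the checkpoint neither stabilizes nor is overtaken—is the step that requires the most care.
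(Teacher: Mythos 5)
Your proposal is correct and uses exactly the same ingredients as the paper's proof: Lemma~\ref{def:e-liveness-e-exec} to get $f_e+1$~correct execution replicas to sequence number~$s_{CP}$, CP-E-Equivalence~\ref{def:cp-e-equivalence} for identical checkpoint messages, and CP-Liveness~II~\ref{def:cp-liveness2} for stability. The only cosmetic difference is that the paper argues by contradiction (assuming no stable checkpoint with sequence number $\geq s_{CP}$ exists and none supersedes it, which makes the second branch of Lemma~\ref{def:e-liveness-e-exec} vacuous), whereas you handle that branch by an explicit case split on $s'$ — both resolve it the same way.
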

\begin{proof}
	Assume that no such stable checkpoint exists and that it is not superseded by a newer one.
	Then per Lemma~\ref{def:e-liveness-e-exec} $f_e+1$~correct \execs will execute the request and thereby create their checkpoint messages~(L.~\ref{def:pseudo-execution}.\ref{code:exec-gen-cp}) which per CP-E-Equivalence~\ref{def:cp-e-equivalence} are identical and according to CP-Liveness II~\ref{def:cp-liveness2} will become stable.
\end{proof}
	
\begin{lemma}
	\label{def:e-liveness-e-win}
	If no progress occurs, then eventually the start of the subchannel window of the commit IRMC is $min(c_{\mathcal{E},0}.win)$ $= s_{CP} + 1$ with $s_{CP}$ being the latest stable execution checkpoint.
\end{lemma}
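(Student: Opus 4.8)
The plan is to trap the quantity $min(c_{\mathcal{E},0}.win)$ between $s_{CP}+1$ from below and $s_{CP}+1$ from above, reading each bound off the only two code paths that ever touch the commit subchannel window in the single-group ($n_e=1$, $z=0$) regime assumed here: the execution replicas' \texttt{cp.stable\_cp} handler and the agreement replicas' \texttt{cp.stable\_cp} handler. First I would unfold the hypothesis: ``no progress'' means $s_{CP}$ is and remains the latest stable execution checkpoint, so by the definition of a stable checkpoint at least one correct \exec holds it and, there being no newer checkpoint to supersede it, delivers it through \texttt{cp.stable\_cp}; CP-Liveness~\ref{def:cp-liveness} then propagates this delivery to every correct \exec. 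Each such delivery runs \texttt{$c_\mathcal{E}$.move\_window}($0, s_{CP}+1$) (L.~\ref{def:pseudo-execution}.\ref{code:exec-cp-move}), which, because windows only ever advance, pins every correct receiver endpoint's window start at $\ge s_{CP}+1$. This settles the lower bound.

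For the upper bound I would invoke IRMC-Correctness~II~\ref{def:irmc-correctness2}: a \textsc{TooOld} reply reporting the window advanced to $p'$ can occur only if some correct sender or correct receiver requested a move at least as far as $p'$. Correct receivers request moves exclusively in their checkpoint handler and hence to no position above $s_{CP}+1$, since $s_{CP}$ is the latest stable checkpoint. Correct senders (the agreement replicas) request commit-window moves only in their own \texttt{cp.stable\_cp} (L.~\ref{def:pseudo-agreement}.\ref{code:agree-cp-move}), namely to $s_a-|c_{\mathcal{E},0}|+1$ with $s_a$ the latest agreement checkpoint. The crux is the quiescent-state estimate $s_a \le s^* < s_{CP}+k_e < s_{CP}+|c_{\mathcal{E},0}|$, where $s^*$ is the last ordered/executed sequence number: the first inequality holds because the agreement checkpoint cannot lie past the last delivered request, the second because a checkpoint at $s_{CP}+k_e$ would otherwise become stable by Lemma~\ref{def:e-liveness-e-cp} and contradict $s_{CP}$ being latest, and the third by the Section~\ref{sec:checkpointing} liveness requirement $k_e<|c_{\mathcal{E},0}|$. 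This forces $s_a-|c_{\mathcal{E},0}|+1 \le s_{CP}$, so the largest position any correct node ever requests is $s_{CP}+1$, whence the window start is $\le s_{CP}+1$. Combining both bounds gives $min(c_{\mathcal{E},0}.win)=s_{CP}+1$.

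I expect the hard part to be exactly this upper-bound estimate, i.e. bounding the agreement replicas' garbage-collection request $s_a-|c_{\mathcal{E},0}|+1$ against the execution checkpoint \emph{without} arguing in a circle through how far the sender window has already travelled. The clean route is to avoid reasoning about the sender window's position and instead relate $s_a$, $s^*$, and $s_{CP}$ directly through the two checkpoint intervals and the capacity constraint $k_e<|c_{\mathcal{E},0}|$. A secondary subtlety is to track the monotonicity of \texttt{move\_window} carefully, so that the lower-positioned sender requests are simply ignored and cannot pull the receiver-driven advance back below $s_{CP}+1$, and to remember that the argument lives entirely in the $n_e=1,\ z=0$ setting, where the global-flow-control window moves of Section~\ref{sec:flow} do not contribute.
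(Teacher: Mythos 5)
Your lower bound is the paper's own: CP-Liveness~\ref{def:cp-liveness} drives every correct execution replica to deliver the latest stable checkpoint and call \texttt{move\_window}$(0, s_{CP}+1)$, and no correct receiver ever requests more. The upper bound is where you genuinely diverge. The paper never bounds the agreement-checkpoint sequence number $\hat{s}$ at all; it observes that the sender-side request \texttt{move\_window}$(0,\hat{s}-|c_{\mathcal{E},0}|+1)$ is always a no-op, because generating the checkpoint at $\hat{s}$ presupposes that \texttt{$c_\mathcal{E}$.send}$(0,\hat{s},\cdot)$ returned, hence $max(c_{\mathcal{E},0}.win)\geq\hat{s}$ and so $min(c_{\mathcal{E},0}.win)\geq\hat{s}-|c_{\mathcal{E},0}|+1$ already holds when the request is issued. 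Only receiver moves ever actually advance the window, and those are pinned at $s_{CP}+1$. This is an invariant argument, not the circularity you worried about, and it needs no relation between $\hat{s}$ and $s_{CP}$.

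Your replacement chain $s_a \le s^* < s_{CP}+k_e < s_{CP}+|c_{\mathcal{E},0}|$ has a real soft spot in the middle inequality. You justify $s^* < s_{CP}+k_e$ via Lemma~\ref{def:e-liveness-e-cp}, but that lemma fires only once $f_a+1$ agreement replicas have delivered and sent the sequence number in question, whereas a stable agreement checkpoint at $s_a$ only certifies that a \emph{single} correct agreement replica got that far. So ``the last ordered/executed sequence number'' is not one well-defined quantity that is simultaneously $\ge s_a$ and covered by the lemma's hypothesis. The gap is repairable: CP-Liveness propagates the agreement checkpoint to all correct agreement replicas, whose \texttt{cp.stable\_cp} handler replays $hist'$ into the commit channel, and $k_e<|c_{\mathcal{E},0}|$ guarantees that the first execution-checkpoint sequence number above $s_{CP}$ lands inside $hist'$, so $f_a+1$ senders do eventually send it and Lemma~\ref{def:e-liveness-e-cp} then forces a newer stable checkpoint, contradicting ``no progress.'' But none of that is in your sketch, and it is precisely the machinery the paper's no-op observation lets it skip. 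What your route buys, if completed, is an explicit quantitative bound on how far agreement can run ahead of the latest execution checkpoint; what the paper's buys is a two-line proof that is agnostic to the checkpoint intervals.
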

\begin{proof}
	Per CP-Liveness~\ref{def:cp-liveness} eventually all \execs will receive the latest stable execution checkpoint~(L.~\ref{def:pseudo-execution}.\ref{code:exec-stable-cp}) and call \texttt{$c_\mathcal{E}$.move\_window}($0, s_{CP}+1$)~(L.\ref{def:pseudo-execution}.\ref{code:exec-cp-move}).
	No correct \exec calls \texttt{$c_\mathcal{E}$.move\_window} for a higher sequence number as $s_{CP}$ is the number of the latest checkpoint.
	
	Agreement replicas call \texttt{$c_\mathcal{E}$.move\_window}($0, \hat{s} - |c_{\mathcal{E},0}|+1$) (L.~\ref{def:pseudo-agreement}.\ref{code:agree-cp-move}).
	To create an agreement checkpoint at $\hat{s}$, the window of the commit subchannel must have included $\hat{s}$ (as \texttt{$c_\mathcal{E}$.send}~(L.~\ref{def:pseudo-agreement}.\ref{code:agree-deliver-send}) would have blocked otherwise, preventing the checkpoint generation), that is $max(c_{\mathcal{E},0}.win) \geq \hat{s} \Leftrightarrow min(c_{\mathcal{E},0}.win) + |c_{\mathcal{E},0}| - 1 \geq \hat{s} \Leftrightarrow min(c_{\mathcal{E},0}.win) \geq \hat{s} - |c_{\mathcal{E},0}|+1$.
	Therefore an agreement replica cannot advance the window of the commit IRMC to a sequence number that is larger than that of the \execs' \texttt{$c_\mathcal{E}$.move\_window} calls.
	Thus all correct \agrs eventually arrive at $min(c_{\mathcal{E},0}.win) = s_{CP} + 1$ with $s_{CP}$ being the latest stable execution checkpoint.
\end{proof}

\begin{lemma}
	\label{def:e-liveness-a-send}
	Agreement replicas will eventually complete \texttt{$c_\mathcal{E}$.send}($s, r$)~(L.~\ref{def:pseudo-agreement}.\ref{code:agree-deliver-send}).
\end{lemma}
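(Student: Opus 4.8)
The plan is to argue by contradiction: I would assume that some correct \agr's call to \texttt{$c_\mathcal{E}$.send}($0, s, r$)~(L.~\ref{def:pseudo-agreement}.\ref{code:agree-deliver-send}) never returns and derive a contradiction. Since \texttt{ag.deliver} issues its sends in increasing sequence-number order (A-Order~\ref{def:a-order}) and the callback is blocking, such a replica has completed every send for a position $s' < s$ but then stalls; and because all correct \agrs deliver the same requests in the same order (A-Safety~\ref{def:a-safety}, CP-A-Equivalence~\ref{def:cp-a-equivalence}) and issue identical sends, the same send blocks on all of them, so global progress halts. First I would pin down the blocking condition: per Remark~\ref{def:irmc-remark} the call blocks exactly while $s > max(c_{\mathcal{E},0}.win)$, i.e., while the commit window never advances far enough to cover position~$s$.

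Next I would invoke Lemma~\ref{def:e-liveness-e-win}: once progress stops, the window start converges to $min(c_{\mathcal{E},0}.win) = s_{CP}+1$, where $s_{CP}$ is the latest stable execution checkpoint, so $max(c_{\mathcal{E},0}.win) = s_{CP}+|c_{\mathcal{E},0}|$. Hence the standing assumption is equivalent to $s_{CP} < s - |c_{\mathcal{E},0}|$ holding forever. The core of the argument is then to show that a stable execution checkpoint at a sequence number of at least $s - |c_{\mathcal{E},0}|$ must eventually exist, contradicting this inequality. Because every send for a position $s' \le s-1$ completed, at least $f_a+1$ correct \agrs eventually call \texttt{$c_\mathcal{E}$.send} for each such $s'$ (by A-Liveness~\ref{def:a-liveness} together with the identical-send property of A-Safety~\ref{def:a-safety}/CP-A-Equivalence~\ref{def:cp-a-equivalence}). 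Letting $s^*$ be the largest multiple of the execution-checkpoint interval $k_e$ with $s^* \le s-1$, Lemma~\ref{def:e-liveness-e-exec} lets me conclude that the requests up to $s^*$ execute on $f_e+1$ correct \execs, which in sequence-number order generate the checkpoint message for $s^*$; Lemma~\ref{def:e-liveness-e-cp} then makes this checkpoint stable, so $s_{CP} \ge s^*$.

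To close the argument I would use the capacity/interval relation $k_e < |c_{\mathcal{E},0}|$ from Section~\ref{sec:checkpointing}: it guarantees $s^* \ge s - k_e > s - |c_{\mathcal{E},0}|$, hence $s_{CP} \ge s^* > s - |c_{\mathcal{E},0}|$, contradicting the assumption $s_{CP} < s - |c_{\mathcal{E},0}|$. Therefore the window must eventually advance so that $s \le max(c_{\mathcal{E},0}.win)$; once $f_e+1$ correct \execs have moved the window accordingly, IRMC-Liveness~II~\ref{def:irmc-liveness2} guarantees that the blocked send returns, completing \texttt{$c_\mathcal{E}$.send}($0, s, r$).

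The step I expect to be the main obstacle is breaking the apparent circular dependency: the sender-side commit window only advances when \execs report stable checkpoints, yet those checkpoints can form only after the \agrs have sent (and the \execs processed) precisely the requests whose sending I am trying to show terminates. Resolving this hinges on two facts I must apply carefully — that the in-order, blocking structure of \texttt{ag.deliver} forces all earlier sends to complete first, so the requests needed to reach $s^*$ are already available as input to the \execs, and that the interval bound $k_e < |c_{\mathcal{E},0}|$ forces a checkpoint position to land strictly within the last window-width below $s$, so the guaranteed stable checkpoint is always recent enough to push $s$ inside the window. I would also have to verify that the $\langle\textsc{TooOld},\cdot\rangle$ branch cannot fire spuriously here, but that is ruled out by IRMC-Correctness~II~\ref{def:irmc-correctness2}, since no correct party requests a window move past $s^*$.
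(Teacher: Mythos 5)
Your treatment of the commit-channel blocking is essentially the paper's argument: both reduce the problem to showing that a stable execution checkpoint must eventually land inside the commit subchannel window (via Lemmas~\ref{def:e-liveness-e-exec}, \ref{def:e-liveness-e-cp} and \ref{def:e-liveness-e-win} together with $k_e \leq |c_{\mathcal{E},0}|$), and your quantitative version --- pinning $s^*$ as the largest checkpoint position $\leq s-1$ and checking $s_{CP} \geq s^* \geq s - |c_{\mathcal{E},0}|$ explicitly, then closing with IRMC-Liveness~II~\ref{def:irmc-liveness2} --- is if anything more explicit than the paper's ``at least one position in the commit IRMC subchannel window is an execution checkpoint sequence number.''

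There is, however, a genuine gap: \texttt{$c_\mathcal{E}$.send} at L.~\ref{def:pseudo-agreement}.\ref{code:agree-deliver-send} is not the only place where \texttt{ag.deliver} can stall. Before reaching the send, the handler sleeps until the agreement-local window $win$ covers $s$~(L.~\ref{def:pseudo-agreement}.\ref{code:agree-win}); this window is anchored at the last \emph{stable agreement checkpoint} and only advances in \texttt{cp.stable\_cp}. If a correct agreement replica blocks there, it never issues the send at all, so your analysis of the blocking condition $s > max(c_{\mathcal{E},0}.win)$ --- which presumes the call has been made and characterizes when it returns --- does not apply, and your contradiction is not derived. The paper's proof devotes its first half precisely to this case: since \texttt{AG-WIN} $\geq k_a$ and $win$ starts right after the last stable agreement checkpoint, $win$ contains an agreement-checkpoint position; by Lemma~\ref{def:e-liveness-a-deliver}, $f_a+1$~correct agreement replicas eventually deliver all requests in $win$ and hence create matching agreement checkpoints, which become stable~(CP-Liveness~II~\ref{def:cp-liveness2}) and move $win$ forward. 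You need to add this case; with it, the remainder of your argument goes through as in the paper.
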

\begin{proof}
	\texttt{ag.deliver} blocks when $win$ is full~(L.~\ref{def:pseudo-agreement}.\ref{code:agree-win}).
	\texttt{AG-WIN} $\geq k_a$ and $win$ is always anchored directly after the sequence number of the last stable agreement checkpoint.
	Thus $win$ contains at least one sequence number for which a new agreement checkpoint will be created.
	
	Assume that \texttt{ag.deliver} blocks permanently on the window check.
	In that case, per assumption, there can be no stable agreement checkpoint with sequence number $s_{CP} \geq s$ and $s_{CP} \in win$, which would lead to progress.
	Therefore, as the client waits for $r$ to be executed, per Lemma~\ref{def:e-liveness-a-deliver} eventually $f_a+1$~\agrs also deliver all requests in $win$.
	That is $f_a+1$~correct agreement replicas create a new agreement checkpoint, which will become stable and moves $win$ forward.
	This contradicts the assumption.
	
	Assume that \texttt{$c_\mathcal{E}$.send}~(L.~\ref{def:pseudo-agreement}.\ref{code:agree-deliver-send}) blocks permanently, which requires that $s > max(c_{\mathcal{E},0}.win)$.
	Per A-Order~\ref{def:a-order} and CP-A-Equi-valence~\ref{def:cp-a-equivalence} it follows that all previous slots in the subchannel window are filled with requests.
	With Lemma~\ref{def:e-liveness-a-deliver} this applies to at least $f_a+1$~\agrs.
	As $|c_{\mathcal{E},0}|\geq k_e$  at least one position in the commit IRMC subchannel window is an execution checkpoint sequence number.
	Per Lemma~\ref{def:e-liveness-e-cp} this causes a new checkpoint to become stable, which according to Lemma~\ref{def:e-liveness-e-win} eventually moves the commit IRMC window forward and thus contradicts the assumption.
\end{proof}

\noindent{}Now we can prove that a correct client will eventually receive a reply to its request:
\begin{proof}
	Assume that the client does not get a reply.
	Then per Lemma~\ref{def:e-liveness-a-send} and \ref{def:e-liveness-e-exec} $f_e+1$~correct \execs will eventually have the reply in $u[c]$.
	As a correct client does not send a new request before having obtained a reply to the last one, $u[c]$ must eventually contain the reply.
	Per CP-E-Equivalence~\ref{def:cp-e-equivalence} the reply is identical on all correct \execs.
	At latest after the next request retry the client will receive the (identical) reply from $f_e+1$ correct \execs, and therefore accept the reply~(L.~\ref{def:pseudo-client}.\ref{code:client-check}), which contradicts the assumption.
\end{proof}

\begin{remark}
	An \agr will receive a request $r$ either via the request IRMC, the \ag or skip the request via a checkpoint.
\end{remark}


\subsubsection{Multiple Execution Groups}
We now generalize to $n_e \geq 1$ execution groups of which $z < n_e$ might be skipped if these are slow.
\begin{lemma}
	E-Liveness~\ref{def:e-liveness} also holds for multiple execution groups.
\end{lemma}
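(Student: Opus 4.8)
The plan is to reduce the multi-group case to the single-group argument wherever possible, isolating the two places where the global flow control of Section~\ref{sec:flow} genuinely changes the reasoning: the agreement group no longer blocks on the client's own group, and that group may now have to recover the reply from a \emph{different} group's checkpoint. Fix a correct client and let $e^*$ denote the execution group it contacts. Lemmas~\ref{def:e-liveness-e-req}--\ref{def:e-liveness-a-deliver} are stated per client subchannel and per \agr and never mention $n_e$, so I would reuse them unchanged to conclude that, as long as the \ag keeps delivering, the client's write is ordered at some sequence number~$s$.

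First I would generalize Lemma~\ref{def:e-liveness-a-send} to show that no correct \agr blocks permanently in \texttt{ag.deliver}. The only new source of blocking is the parallel \texttt{$c_\mathcal{E}$.send}, which now awaits completion on merely $n_e-z$ of the commit channels. For any single group the original argument still goes through: if its commit send blocks then its subchannel window is full, hence (because $|c_{\mathcal{E},0}|\geq k_e$) contains an execution-checkpoint position, which by Lemmas~\ref{def:e-liveness-e-cp} and~\ref{def:e-liveness-e-win} forces a stable checkpoint and advances the window. Appealing to the partially synchronous model, in a sufficiently long synchronous period all $n_e$ groups keep up, so at least $n_e-z$ of them advance their windows; the \agrs then complete $n_e-z$ commit sends, garbage-collect the request, and force-advance the windows of the at most $z$ trailing groups to $s+1$ as described in Section~\ref{sec:flow}. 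Together with the unchanged agreement-checkpoint reasoning this discharges both window checks, so the \ag keeps delivering and by Lemma~\ref{def:e-liveness-a-deliver} reaches the client's request at~$s$.

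Next I would show that $e^*$ obtains the reply along one of two paths. If $e^*$ is among the groups that kept up, Lemma~\ref{def:e-liveness-e-exec} applies to $e^*$ directly and it executes the request at~$s$. If $e^*$ was skipped, the force-advance of its commit window makes \texttt{$c_\mathcal{E}$.receive} return a \textsc{TooOld} message, and $e^*$ seeks a stable execution checkpoint from the other groups per Section~\ref{sec:flow}. Here I would extend CP-E-Equivalence~\ref{def:cp-e-equivalence} across groups: for writes every group receives the identical \textsc{Execute} at each position (all commit channels carry the one agreed order, and receptions agree by IRMC-Correctness~\ref{def:irmc-correctness1}), so by the RSM property~\ref{def:rsm} every correct group computes the same application state and the same reply vector~$u$ at every sequence number. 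Hence the stable checkpoint created by some fast group~$e'$ at a position $s_{CP}\geq s$ is precisely the state $e^*$ would have produced, is verifiable from the $f_e+1$ matching checkpoint messages of $e'$ by CP-Safety~\ref{def:cp-safety}, and its entry $u[c]$ already holds the client's reply. In either path $e^*$ ends up with that reply cached in $u[c]$, and on the client's next retry returns it, so the client collects $f_e+1$ matching replies and accepts, contradicting any assumption of non-delivery and establishing E-Liveness~\ref{def:e-liveness}.

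I expect the cross-group checkpoint step to be the main obstacle, for two reasons. First, CP-E-Equivalence~\ref{def:cp-e-equivalence} and CP-Liveness~\ref{def:cp-liveness} were phrased strictly within one group, so I must justify that a checkpoint manufactured by $e'$ is both applicable and verifiable at $e^*$; this rests entirely on writes being broadcast in full to every group together with the RSM property, and it would break for strongly consistent reads (which is exactly why those are treated separately). Second, I must confirm that such a checkpoint really exists at a position $\geq s$, not merely that $e^*$ is informed it has fallen behind: this holds because the groups whose commit sends unblocked the \ag processed every position up to~$s$ and checkpoint periodically, but the step must be phrased so as not to reintroduce the agreement-progress argument circularly---the saving grace being that the fast groups make progress entirely independently of $e^*$.
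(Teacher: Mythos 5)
Your proposal is correct and follows essentially the same route as the paper's own proof: both hinge on the observation that an agreement checkpoint (and hence any commit-window advance that could skip a group) requires the \texttt{$c_\mathcal{E}$.send} calls of $n_e-z$ groups to have completed, so a trailing group that is notified via \textsc{TooOld} can always recover from an up-to-date group's stable execution checkpoint. The only difference is one of explicitness---you spell out the cross-group applicability and verifiability of execution checkpoints for writes (and the strongly-consistent-read caveat), which the paper's proof of this lemma leaves implicit and only addresses in its separate consistency-guarantees discussion.
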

\begin{proof}
	Even though an \agr only waits for $n_e - z$ groups~(L.~\ref{def:pseudo-agreement}.\ref{code:agree-wait-send}) to complete \texttt{$c_\mathcal{E}$.send}, an execution group will only miss requests if the \agrs call \texttt{$c_\mathcal{E}$.move\_window}~(L.~\ref{def:pseudo-agreement}.\ref{code:agree-cp-move}) with a sequence number not yet received by a slow execution group.
	As shown in the proof of Lemma~\ref{def:e-liveness-e-win} an agreement replica can only create a checkpoint that would push the window of the commit IRMC forward if the execution group already has created a newer or matching checkpoint.
	Generalized to $n_e$ execution groups, the \texttt{$c_\mathcal{E}$.send}~(L.~\ref{def:pseudo-agreement}.\ref{code:agree-deliver-send}) calls for $n_e - z$ execution groups have to complete, before an agreement checkpoint can be created~(L.~\ref{def:pseudo-agreement}.\ref{code:agree-gen-cp}).
	Therefore an execution group that has fallen behind can always retrieve an up-to-date checkpoint from one of the $n_e - z$ up-to-date execution groups.
	
	As agreement replicas unconditionally move the commit IRMC window forward~(L.~\ref{def:pseudo-agreement}.\ref{code:agree-cp-move}), this will lead to at least $f_a+1$ \agrs calling \texttt{$c_\mathcal{E}$.move\_window} (per Lemma~\ref{def:e-liveness-a-send} a corresponding checkpoint will eventually exist and per CP-Liveness~\ref{def:cp-liveness} all correct \agrs will eventually receive it), which per IRMC-Liveness~I~\ref{def:irmc-liveness1} and IRMC-Liveness~III~\ref{def:irmc-liveness3} will eventually allow execution groups that fell behind to receive a \textsc{TooOld} message.
\end{proof}

\subsubsection{Consistency Guarantees}
\label{sec:consistency-guarantees}
We now revisit the consistency guarantees provided by \system.

\paragraph{Write Requests}
As previously shown in Section~\ref{def:e-safety2-proof}, \system provides linearizability for write requests.

\paragraph{Read Requests with Strong Consistency}
Read requests with strong consistency work like write requests with one exception:
Only the designated execution group receives the full request, whereas the other groups only get the client id $c$ and counter $t_c$.
This leads to the following observation:
\begin{lemma}
	With read requests, the content of checkpoints can vary between groups in regard to the reply stored in $u[c]$. That is CP-E-Equivalence~\ref{def:cp-e-equivalence} only applies for individual groups at a time. 
\end{lemma}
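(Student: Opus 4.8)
The plan is to pinpoint exactly which component of the execution-replica state can diverge across groups for a strongly consistent read, and then argue that the divergence is confined to the reply cache~$u[c]$ while the within-group equivalence of CP-E-Equivalence~\ref{def:cp-e-equivalence} remains intact. First I would recall the read-handling rule: for a read ordered at sequence number~$s$, only the client's designated execution group receives the full request over its commit channel, whereas every other execution group receives a placeholder \textsc{Execute} carrying only the client id~$c$ and counter~$t_c$. This channel-level asymmetry is the single source of the claimed difference.

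Next I would examine the three state components tracked by CP-E-Equivalence~\ref{def:cp-e-equivalence}, namely $s_n$, $app$, and $u$, and determine which can diverge. For $s_n$: every group receives exactly one \textsc{Execute}~(full request or placeholder) per sequence number in the same total order, so all correct \execs in all groups increment $s_n$ identically, and $s_n$ stays equivalent across groups. For $app$: a read does not modify application state, so $app$ is determined solely by the subsequence of writes, which every group processes identically by E-Safety~\ref{def:e-safety} and the RSM property~\ref{def:rsm}; hence $app$ also agrees across groups. The only component that can differ is $u[c]$: the designated group executes the read against its state and stores the genuine reply in $u[c]$, whereas the remaining groups merely advance their duplicate-detection counter from the placeholder and therefore hold a different value in that reply slot.

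I would then translate this into the checkpoint statement. An execution checkpoint for sequence number~$s$ comprises the pair $(u, app)$. Since $app$ agrees across all groups but $u[c]$ carries the real read reply only in the designated group, the checkpoints produced for the same~$s$ differ across groups precisely in the $u[c]$ entry. Within a single group, by contrast, all correct replicas receive the identical \textsc{Execute}~(the full request for the designated group, or the identical placeholder for a non-designated group), so the induction underpinning CP-E-Equivalence~\ref{def:cp-e-equivalence} goes through unchanged and a replica arriving by a local checkpoint is equivalent to one arriving by normal processing. A replica that instead applied a checkpoint produced by a different group would not obtain equivalent state, because its $u[c]$ entry would disagree. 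This establishes that CP-E-Equivalence holds only per group.

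The main obstacle I anticipate is making the $app$-invariance argument airtight: I must confirm that the placeholder path in the execution pseudocode genuinely performs no state modification and touches only the counter bookkeeping, so that the read leaves $app$ untouched on the non-designated groups exactly as it does on the designated one. Should the pseudocode store a sentinel value in $u[c]$ on the non-designated groups rather than leaving it unset, I would simply note that this sentinel still differs from the real reply computed by the designated group, which is all the lemma asserts.
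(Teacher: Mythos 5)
Your proposal is correct and follows essentially the same route as the paper: only the designated execution group receives the full read and stores the genuine reply in $u[c]$, while all other groups store a placeholder containing the client counter, so checkpoints differ across groups exactly in that entry. Your additional verification that $s_n$ and $app$ remain equivalent across groups (and that the within-group induction of CP-E-Equivalence is unaffected) is left implicit in the paper's much terser proof, but it is the same argument, not a different one.
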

\begin{proof}
	Only the client's execution group will receive the read request and modify $u[c]$ accordingly after executing the request~(L.~\ref{def:pseudo-execution}.\ref{code:exec-store-rep}).
	All other execution groups store a placeholder in $u[c]$ which includes the request counter.
	Therefore, the reply parts of $u[c]$ can differ between groups.
	Note that this divergence is self-correcting in the sense that it will disappear after executing the next write request for that client.
\end{proof}
\begin{remark}
This does not prevent the checkpoint from being transferred between groups, as each group can still generate a valid proof for its checkpoint.
However, the global flow control could force a group to skip some requests, which might include group-specific read requests.
In that case an \exec has to tell the client to resubmit its request if necessary, based on the placeholder stored in $u[c]$.
\end{remark}

\paragraph{Read Requests with Weak Consistency}
\begin{lemma}
	Weakly consistent read requests provide one-copy serializability (assuming each request, which can access various parts of the application state, represents a transaction).
\end{lemma}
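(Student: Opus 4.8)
The plan is to reduce one-copy serializability to the total order already established for writes. By E-Safety~\ref{def:e-safety} and E-Safety~II~\ref{def:e-safety2}, all correct execution replicas apply writes in the same total order; let $W_1, W_2, \dots$ denote this sequence and let $\mathit{state}(s)$ be the deterministic application state obtained after applying $W_1,\dots,W_s$ (well-defined by the RSM property~\ref{def:rsm}). I would first observe that, because each replica method executes atomically unless it blocks, a weakly consistent read that does not block while being served is evaluated against exactly one such $\mathit{state}(s)$: it cannot observe a torn state that reflects $W_{s+1}$ in one part of the state but only $W_s$ in another. Thus every locally served read corresponds to a clean prefix of the write order.

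Next I would use the acceptance rule at the client. A client accepts a weakly consistent read result only after receiving $f_e+1$ matching replies from distinct replicas of one execution group; since at most $f_e$ replicas are faulty, at least one correct replica produced this value. Hence the accepted value equals $\mathit{state}(s)$ for the sequence number $s$ that this correct replica had reached, i.e.\ the read observed the prefix $W_1,\dots,W_s$.

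With these two facts I would construct the equivalent serial schedule. Place the writes in their committed order and insert each accepted read $R$, regarded as a read-only transaction, immediately after the write $W_s$ whose resulting state it observed (reads of the empty prefix are placed at the very beginning, and reads observing the same prefix are ordered arbitrarily among themselves, which is sound because reads neither conflict with one another nor modify state). In this single-copy serial schedule every write has the same effect as in the real execution (determinism plus identical order, by RSM~\ref{def:rsm}) and every read returns exactly the values it returned in the real execution (by construction it is placed after precisely the prefix it read). The schedule is therefore a legal serial execution over one logical copy that is equivalent to the actual execution, which is exactly one-copy serializability.

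I expect the main obstacle to be the snapshot argument of the first step: making precise that a read treated as a transaction over ``various parts of the application state'' observes a coherent prefix rather than a fractured mixture of states. This rests entirely on the atomic-execution assumption together with the fact that a successful local read does not block mid-evaluation; once that is pinned down, the $f_e+1$ quorum and the deterministic write order make the serialization mechanical. Note finally that one-copy serializability imposes no real-time constraints, so the fact that weakly consistent reads may return stale values is harmless---staleness only means the read is serialized at an earlier prefix boundary, never that no consistent placement exists.
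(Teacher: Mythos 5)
Your proposal is correct and follows essentially the same route as the paper's proof: the paper likewise argues that the application state is modified atomically according to the total write order (so each weakly consistent read observes a coherent prefix and can be serialized between two consecutive writes without creating cycles), and that the $f_e+1$ matching replies guarantee the accepted value came from a correct replica. Your version merely makes the paper's terse argument explicit by naming the prefix states and constructing the equivalent serial schedule.
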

\begin{proof}
The reply to the client and state modifications  must be equivalent to those from an acyclic ordering of transactions, where each transaction is processed atomically~\cite{bernstein87concurrency}.

The application state is atomically modified with a totally ordered sequence of requests (see RSM~\ref{def:rsm}), which yields an acyclic order for all state-modifying requests and strongly-consistent read requests.
All weakly consistent read requests happen between these state modifications, which does not introduce cycles in the request ordering either.

As a correct client only accepts a reply sent by at least one correct replica, it will receive a conforming reply.
\end{proof}

\subsection{IRMC-RC}
\begin{figure}
	\vspace*{3mm}%
	\begin{lstlisting}
Sender replica %$r_s$%
%$rwin[r][sc]$% := %$[1, |IRMC_{sc}|]$\hfill%// Received windows, %$r \in R_R \cup {r_s}$%
%$awin[sc]$% := %$[1, |IRMC_{sc}|]$\hfill%// Active window%\hrule%
send(%\textsc{Subchannel}% %$sc$%, %\textsc{Position}% %$p$%, %\textsc{Message}% %$m$%):
	sleep until %$p \leq max(awin[sc])$%
	if %$p < min(awin[sc])$%: return %$\langle\textsc{TooOld}, min(awin[sc])\rangle$%
	else: // %$p \in awin[sc]$%
		send %$sign_{r_s}(\langle\textsc{Send}, m, sc, p\rangle)$% to %$R_R$%

move_window(%\textsc{Subchannel}% %$sc$%, %\textsc{Position}% %$p$%):
	// The subchannel window start may only increase
	if %$p > min(rwin[r_s][sc])$%:
		send %$mac_{r_s,R_R}(\langle\textsc{Move}, sc, p\rangle)$% to %$R_R$%
		%$rwin[r_s][sc]$% := %$[p, p+|IRMC_{sc}|-1]$%

on receive(%$m$% = %$\langle\textsc{Move}, sc, p\rangle$% from %$r_r \in R_R$%):
	if %$!valid\_mac_{r_r,R_S}$%(%$m$%): return
	// Only accept new move messages
	if %$p > min(rwin[r_r][sc])$%:
		%$rwin[r_r][sc]$% := %$[p, p + |IRMC_{sc}|-1]$%
		// Calculate actual window start
		w := %$f_r + 1$% highest %$\{min(rwin[r_r'][sc])\,|\, r_r' \in R_R\}$%
		%$awin[sc]$% := %$[w, w+|IRMC_{sc}|-1]$%
		garbage-collect messages with SeqNr %$s < awin[sc]$%
%\hrule%
Receiver replica %$r_r$%
%$rwin[r][sc]$% = %$[1, |IRMC_{sc}|]$\hfill%// Received windows, %$r \in R_S \cup {r_r}$%
%$awin[sc]$% = %$[1, |IRMC_{sc}|]$\hfill%// Active window
%$d[sc][p][r_s]$% = %$\varnothing$\hfill%// Received %\textsc{Send}% messages%\hrule%
receive(%\textsc{Subchannel}% %$sc$%, %\textsc{Position}% %$p$%) -> %\textsc{Message}% %$m$%:
	sleep until %$p \leq max(awin[sc])$%
	sleep until either:
	- case %$p < min(awin[sc])$%:
		 return %$\langle\textsc{TooOld}, min(awin[sc])\rangle$%
	- case %$\exists m: |\{r_s|r_s \in R_S, m \in d[sc][p][r_s]\}| \geq f_s + 1$%:
		return %$m$% // Received %$m$% from at least %$f_s+1$% senders

move_window(%\textsc{Subchannel}% %$sc$%, %\textsc{Position}% %$p$%):
	// The subchannel window start may only increase
	if %$p > min(awin[sc])$%:
		send %$mac_{r_r,R_S}(\langle\textsc{Move}, sc, p\rangle)$% to %$R_S$%
		%$awin[sc]$% := %$[p, p+|IRMC_{sc}|-1]$%
		garbage-collect messages with SeqNr %$s < awin[sc]$%

on receive(%$r$% = %$\langle\textsc{Send}, m, sc, p\rangle$% from %$r_s \in R_S$%):
	if %$!valid\_sig_{R_S}$%(%$r$%): return
	if %$p \geq min(awin[sc])$%:
		%$d[sc][p][r_s]$% := %$m$%

on receive(%$m$% = %$\langle\textsc{Move}, sc, p\rangle$% from %$r_s \in R_S$%):
	if %$!valid\_mac_{r_s,R_R}$%(%$m$%): return
	// Only accept new move messages
	if %$p > min(rwin[r_s][sc])$%:
		%$rwin[r_s][sc]$% := %$[p, p + |IRMC_{sc}|-1]$%
		%$nw$% := %$f_s+1$% highest %$\{min(rwin[r_s'][sc])\,|\,r_s' \in R_S\}$%
		if %$min(awin[sc]) < nw$%:
			move_window(%$s, nw$%)
	\end{lstlisting}
	\caption{IRMC-RC (pseudo code)}
	\label{def:pseudo-irmc-rc}
\end{figure}
The IRMC-RC variant shown in Figure~\ref{def:pseudo-irmc-rc} is a simple implementation of an IRMC that provides the expected properties.
Replicas can aggregate \textsc{Move} messages before sending them.
In case a sender replica has multiple IRMCs and sends identical messages on the same subchannel and position, then it can share a single signed \textsc{Send} message between IRMCs.

Without loss of generality we assume the set of senders~$R_S$ and receivers~$R_R$ to be disjoint, that is $R_S \cap R_R = \varnothing$.
We assume reliable point-to-point channels between replicas, that is messages sent between individual replicas will be delivered eventually, unless messages are garbage collected at which point a replica discards old messages, even when they were not successfully delivered yet.
To keep the pseudo code short, we assume that messages without correct authentication are automatically dropped before these can be processed.

All messages are also expected to contain an identifier to allow differentiation between different IRMCs if necessary.

\subsection{IRMC-SC}
\begin{figure*}[!htp]
	\vspace*{3mm}%
\begin{minipage}[t]{1.0\columnwidth}
	\begin{lstlisting}
Sender replica %$r_s$%
+ Variables %from% IRMC-RC
%$sig[sc][p][r_s]$% = %$\varnothing$\hfill% // Certificate share from sender %$r_s$% for%\\% %\hfill%subchannel %$sc$% position %$p$%
%$bundle[sc][p]$%  = %$\varnothing$\hfill%// %$\textsc{Certificate}$% for subchannel %$sc$% position %$p$%
%$sender[sc][r_r]$% = %$\bot$\hfill%// Selected%\,%sender%\,%for%\,%subchannel%\,$sc$\,%to%\,%receiver%\,$r_r$%
%$d[sc][p]$% = %$\varnothing$%%\hfill%// Message sent in subchannel %$sc$% at position %$p$%%\hrule%
send(%\textsc{Subchannel}% %$sc$%, %\textsc{Position}% %$p$%, %\textsc{Message}% %$m$%):
	sleep until %$p \leq max(awin[sc])$%
	if %$p < min(awin[sc])$%: return %$\langle\textsc{TooOld}, min(awin[sc])\rangle$%
	else: // %$p \in awin[sc]$%
		%$d[sc][p]$% := %$m$%
		// %$\textsc{SigShare}$% is also processed locally
		send %$sign_{r_s}(\langle\textsc{SigShare}, h(m), sc, p\rangle)$% to %$R_S$%

on receive(%$sg$% = %$\langle\textsc{SigShare}, h(m), sc, p\rangle$% from %$r_s \in R_S$%):
	if %$!valid\_sig_{R_S}$%(%$sg$%): return
	if %$p \geq min(awin[sc]) \wedge sig[sc][p][r_s]$% = %$\varnothing$%: // Only accept first share per sender
		%$sig[sc][p][r_s]$% := %$sg$%
		%$v$% := %$\{sig[sc][p][r]\,|\,r \in R_S, sig[sc][p][r].h = h(m)\}$%
		limit %$v$% to %$f_s+1$% values
		// Check if replica has %$f_s\hspace{-.3mm}+\hspace{-.3mm}1$% matching shares and the actual request
		if %$|v| = f_s + 1 \wedge d[sc][p] \neq \varnothing \wedge bundle[sc][p] = \varnothing$%:
			%$bundle[sc][p]$% := %$mac_{r_s,R_R}(\langle\textsc{Certificate}, d[sc][p], sc, p, v\rangle)$%
			send %$bundle[sc][p]$% to receivers  $r$ where %$sender[sc][r]$% = %$r_s$%

periodic:
	// Send position of latest certificate per subchannel with no gaps at previous positions in the subchannel window
	for each subchannel %$sc$%:
		 %$prog[sc]$% := highest %$p \in awin[sc]$% with %$\forall p' \in awin[sc], p'\leq p: bundle[sc][p'] \neq \varnothing$%
	send %$mac_{r_s,R_R}(\langle\textsc{Progress},prog\rangle)$% to %$R_R$%

// move_window and receive(%\textsc{Move}%) are identical to IRMC-RC

// Select sender for subchannel
on receive(%$m$% = %$\langle\textsc{Select}, sc, s\rangle$% from %$r_r \in R_R$%):
	if %$!valid\_mac_{r_r,R_S}$%(%$m$%): return
	%$sender[sc][r_r]$% := %$s$%
	// Send queued messages for subchannel %$sc$% to %$r_r$%
	%$\forall  p:$% send %$bundle[sc][p]$% to receiver  $r_r$ if %$s = r_s$%
	\end{lstlisting}
	\caption{IRMC-SC sender endpoint (pseudo code)}
	\label{def:pseudo-irmc-sc-sender}
\end{minipage}
\hfill%
\begin{minipage}[t]{1.0\columnwidth}
\begin{lstlisting}
Receiver replica %$r_r$%
+ Variables %from% IRMC-RC
%$d[sc][p]$% = %$\varnothing$%%\hfill%// Message received for subchannel %$sc$% at position %$p$%
%$pe[r][sc]$% := 0 %\hfill%// Individual expected progress reported by %$r \in R_S$%
%$pm[sc]$% := 0 %\hfill%// Merged progress values (%$f_s+1$% highest)%\hrule%
receive(%\textsc{Subchannel}% %$sc$%, %\textsc{Position}% %$p$%) -> %\textsc{Message}% %$m$%:
	sleep until %$p \leq max(awin[sc])$%
	sleep until either:
	- case %$p < min(awin[sc])$%:
		return %$\langle\textsc{TooOld}, min(awin[sc])\rangle$%
	- case %$d[sc][p] \neq \varnothing$%:
		return %$d[sc][p]$%

on receive(%$r$% = %$\langle\textsc{Certificate}, m, sc, p, v\rangle$% from %$r_s \in R_S$%):
	if %$!valid\_mac_{r_s,R_R}$%(%$r$%): return
	// Certificate must contain %$f_s+1$% matching signatures from different sender endpoints
	if %$p \geq min(awin[sc]) \wedge |v| = f_s + 1 \wedge \forall sg \in v:\linebreak{}\hspace{-4.5mm}valid\_sig_{R_S}(sg\textrm{ for }m) \wedge sg$% %from% different senders:
		%$d[sc][p]$% := %$m$%

on receive(%$m$% = %$\langle\textsc{Progress}, np\rangle$% from %$r_s \in R_S$%):
	if %$!valid\_mac_{r_s,R_R}$%(%$m$%): return
	// Merge progress vectors
	for each subchannel %$sc$%:
		%$pe[r_s][sc]$% := %$max(pe[r_s][sc], np[sc])$%
		%$pm[sc]$% := %$f_s+1$% highest %$\{pe[r'][sc]\,|\,r' \in R_S\}$%
	// Start timeout if some messages are still missing
	if %$\exists s' \in [min(awin[sc]), pm[sc]]: d[sc][s'] = \varnothing$%:
		start timer %for% %$sc@pm[sc]$%, if not started yet

on timeout for %$sc@pm[sc]$%:
	// Timeout expired and there are still missing certificates
	if %$\exists s' \in [min(awin[sc]), pm[sc]]: d[sc][s'] = \varnothing$%:
		select new sender %$r_s$% for %$sc$%
		send %$mac_{r_r,R_S}(\langle\textsc{Select},sc, r_s\rangle)$% to %$R_S$%
		restart timer %for% %$sc@pm[sc]$%

// move_window and receive(%\textsc{Move}%) are identical to IRMC-RC
	\end{lstlisting}
	\caption{IRMC-SC receiver endpoint (pseudo code)}
	\label{def:pseudo-irmc-sc-receiver}
	\end{minipage}
\end{figure*}

IRMC-SC shown in Figure~\ref{def:pseudo-irmc-sc-sender} and \ref{def:pseudo-irmc-sc-receiver} is a more complex but also more efficient implementation than IRMC-RC.

For liveness, we assume that the \textsc{Move}~message is protected against replay attacks, for example by including a counter to filter out already processed instances of the message to ensure that these are not processed multiple times.
In case a sender replica has multiple IRMCs and sends identical messages on the same subchannel and position, then it can share a single signed \textsc{Certificate} message between IRMCs.

\vfill
\pagebreak
\vfill
}{}

\end{document}